\documentclass{article}
\usepackage{amsmath, amssymb, graphicx, url, amsthm}
\usepackage{caption, subcaption}
\usepackage{authblk}
\usepackage{xcolor}
\usepackage{float}
\usepackage{tikz}
\usepackage{color-edits}
\addauthor{vs}{red}
\addauthor{jc}{blue}
\usepackage{bbm}
\usepackage{hyperref}
\usepackage{cleveref}
\usepackage{booktabs}
\usepackage{pgfplots}
\usepgfplotslibrary{fillbetween}
\usetikzlibrary{patterns, intersections}

\usepackage[margin=1in]{geometry}
\newcommand{\E}{\mathbb{E}}
\newcommand{\R}{\mathbb{R}}
\newcommand{\mcG}{{\mathcal G}}
\newcommand{\mcT}{{\mathcal T}}
\def\trans{\intercal}
\newcommand{\ci}{\perp\!\!\!\perp}
\DeclareMathOperator*{\softmax}{\ensuremath{\mathrm{smax}^\beta}}
\newtheorem{assumption}{Assumption}
\newtheorem{theorem}{Theorem}
\newtheorem{remark}{Remark}
\newtheorem{lemma}[theorem]{Lemma}
\newtheorem{definition}{Definition}
\newtheorem{corollary}[theorem]{Corollary}

\title{Inference on Optimal Dynamic Policies via Softmax Approximation}
\usepackage{times}
\title{Inference on Optimal Dynamic Policies via Softmax Approximation}
\author[1]{Qizhao Chen}
\author[2]{Morgane Austern}
\author[3]{Vasilis Syrgkanis}
\affil[1,2]{Department of Statistics, Harvard University}
\affil[3]{Department of Management Science and Engineering, Stanford University}
\date{}

\pgfplotsset{compat=1.18} 
\begin{document}

\maketitle

\begin{abstract}
Estimating optimal dynamic policies from offline data is a fundamental problem in dynamic decision making. In the context of causal inference, the problem is known as estimating the optimal dynamic treatment regime. Even though there exists a plethora of methods for estimation, constructing confidence intervals for the value of the optimal regime and structural parameters associated with it is inherently harder, as it involves non-linear and non-differentiable functionals of unknown quantities that need to be estimated. Prior work resorted to sub-sample approaches that can deteriorate the quality of the estimate. We show that a simple soft-max approximation to the optimal treatment regime, for an appropriately fast growing temperature parameter, can achieve valid inference on the truly optimal regime. We illustrate our result for a two-period optimal dynamic regime, though our approach should directly extend to the finite horizon case. Our work combines techniques from semi-parametric inference and $g$-estimation, together with an appropriate triangular array central limit theorem, as well as a novel analysis of the asymptotic influence and asymptotic bias of softmax approximations.
\end{abstract}

\section{Introduction}

In most real world decision making settings, subjects undergo multiple exposures to some treatment or multiple treatments over time; patients receive multiple therapies, plots of lands are treated with multiple seeds, digital economy users are marketed by multiple campaigns. A typical problem that arises frequently: having access to large amounts of data where many subjects underwent multiple treatments sequentially, based on some naturally occurring and unknown treatment policy, can we identify what would have been the optimal dynamic treatment policy. Moreover, can we estimate with confidence the value of that optimal policy and understand whether we can reap large benefits (with statistical significance) if we intervene and change the status quo. The problem is typically termed as off-policy optimization or off-policy reinforcement learning and has also has garnered attention in the field of precision medicine. However, typical approaches in off-policy reinforcement learning primarily focus on getting fast statistical learning rates (see e.g. \cite{shi2022statistically, gunn2022adaptive, song2017semiparametric, kallus2020} for recent work), but not uncertainty quantification and construction of confidence intervals, which have been mostly studied in off-policy evaluation but not optimization (see e.g. \cite{pmlr-v139-karampatziakis21a} for recent work). Our aim is to make a step in filling this gap and address uncertainty quantification and confidence interval construction for off-policy optimization in reinforcement learning settings.

We consider a setting where we have collected data from multiple realizations of a two-period Markovian dynamic decision making process. We focus on two periods for simplicity of exposition, even though our techniques can easily be extended to multiple periods. The data stem from application of an unknown observational dynamic treatment policy on a sampled set of treated units. We assume we have access to n i.i.d. samples, one for each treated unit, of the form $Z=(S,T_1, X, T_2, Y)$, where $S$ is an initial state, $T_1$ a first period treatment, $X$ a second period state, $T_2$ a second period treatment, and $Y$ a final observed outcome of interest. We add subscripts $i$ to denote the realizations of $i$th data point $Z_i= (S_i,T_{1,i}, X_i, T_{2,i}, Y_i), i\in [n].$ The data is assumed to adhere to the causal graph depicted in Figure~\ref{fig:cg}. Moreover, the treatments $T_1, T_2\in \mathcal{T}$ are discrete and take $K$ possible values, for a constant $K$. We note that our assumption that the decision process is Markovian is without loss of generality, since the state $X$ is of arbitrary dimension and is allowed to capture all past history (for instance, we can set $X=(T_1, S, S')$ for some raw second period state $S'$). This formalism helps simplify notation.
\begin{figure}[H]
\centering 
\begin{subfigure}[t]{0.45\textwidth}
    \hspace*{-3cm}
    \centering
    \begin{tikzpicture} \large
    \node (S) at (0,0) {$S$};
    \node (T1) at (1.5, 1.5) {$T_1$};
    \node (X) at (3,0) {$X$};
    \node (T2) at (4.5,1.5) {$T_2$};
    \node  (Y) at (6,0) {$Y$};
    %\draw[thick, ->] (Z) -- (X);
    \draw[thick, ->] (S) -- (T1);
    \draw[thick, ->] (S) -- (X);
    \draw[thick, ->] (T1) -- (X);
    \draw[thick, ->] (X) -- (Y);
    \draw[thick, ->] (X) -- (T2);
    \draw[thick, ->] (T2) -- (Y);
    \hspace{10mm}
    \end{tikzpicture}
    \hspace{-3cm}
    \caption{Causal graph for observed data}
    \label{fig:cg}
\end{subfigure}
~~~~
\begin{subfigure}[t]{0.45\textwidth}
    \centering
    \begin{tikzpicture} \large
    \node (S) at (0,0) {$S$};
    \node (T1) at (1, 1.5) {$T_1$};
    \node (pi1) at (2.5, 1.5) {$\pi_1(S)$};
    \node (X) at (3.5,0) {$X^{(\pi)}$};
    \node (T2) at (4.5,1.5) {$T_2$};
    \node (pi2) at (6,1.5) {$\pi_2(X^{(\pi)})$};
    \node  (Y) at (6.5,0) {$Y^{(\pi)}$};
    %\draw[thick, ->] (Z) -- (X);
    \draw[thick, ->] (S) -- (T1);
    \draw[thick, ->] (S) -- (X);
    \draw[thick, ->, dashed] (S) to[bend right] (pi1);
    \draw[thick, ->] (pi1) -- (X);
    \draw[thick, ->] (X) -- (Y);
    \draw[thick, ->] (X) -- (T2);
    \draw[thick, ->, dashed] (X) to[bend right] (pi2);
    \draw[thick, ->] (pi2) -- (Y);
    \end{tikzpicture}
    \caption{Intervention graph (SWIG) of counterfactual outcome under alternative adaptive policy $\pi$}
    \label{fig:swig}
\end{subfigure}
\caption{Causal graphs describing observed and counterfactual data under alternative policy $\pi$}
\end{figure}
Our goal is to estimate and construct a confidence interval for the best dynamic policy in this setting. In particular, if we intervene and instead of the observed dynamic policy we deploy an alternative dynamic policy $\pi$ (depicted in the single world intervention graph (SWIG) in Figure~\ref{fig:swig}), then we will observe counterfactual or potential outcomes $Y^{(\pi)}$ and our goal is to find the policy $\pi^*$ that optimizes the mean counterfactual reward and construct a confidence interval for its value $V^*$ defined as:
\begin{align*}
    \pi^* :=~& \arg\max_{\pi} \E\{Y^{(\pi)}\}, & 
    V^* :=~& \E\{Y^{(\pi^*)}\}.
\end{align*}

The majority of prior work in off-policy evaluation and optimization in reinforcement learning has focused on the estimation of good policies with small regret. However, the focus of our work is in the ability to construct confidence intervals for the value of the optimal policy. Constructing confidence intervals is important in many high-stakes domains where we want to understand from offline data whether some candidate optimal policy will produce a positive improvement with statistical significance. If we are not confident on the magnitude of the improvement that the optimal regime will bring, then most decision makers would go with the status quo policy. Hence, construction of valid confidence intervals is important in high-stakes decision making.

Most prior works that provide confidence interval construction is focused on policy evaluation for a particular fixed policy as opposed to the optimal dynamic policy. Constructing confidence intervals for the optimal dynamic policy is an inherently harder problem as the target quantity tends not to be a smooth function of the distribution. The closest prior work to ours is that of \cite{chakraborty2010inference} which also addresses inference on structural parameters associated with the dynamic optimal policy, but resorts to sub-sampling techniques that can potentially deteriorate the estimation quality of the point estimate. Instead, we take the route of using smooth approximations to the optimal policy and show that the level of smoothness can be tuned appropriately to obtain correct confidence intervals for the truly optimal (non-smooth) policy. Our smooth approximation can be thought as an analogue to soft-Q-learning \cite{schulman2017equivalence,nachum2017bridging,haarnoja2017reinforcement,Haarnoja2018}, but in the context of the G-estimation framework proposed by \cite{robins2004optimal}, which we elaborate in Section~\ref{sec: identiG}.

\section{Related Literature}The problem of learning the optimal treatment rule
has been well studied, notably in the field of precision medicine (\cite{zhao2012estimating, zhang2012robust, fan2017concordance, song2017semiparametric, gunn2022adaptive}). This work has focused on finding the optimal treatments when there is only one time period by optimizing estimators of variants of the value functions or contrasts of value functions of treatment decision rules. For the dynamic treatment regime with multiple treatment stages, there have also been efforts to develop Q-learning (\cite{watkins1989learning, watkins1992q, zhao2009reinforcement, zhao2011reinforcement, qian2011performance}) and A-learning ((\cite{murphy2003optimal, robins2004optimal, moodie2007demystifying, lu2013variable})) type algorithms for the estimation of the optimal dynamic treatment regime. A-learning methods are semi-parametrically efficient and tend to have better performance than Q-learning methods when Q-functions are misspecified (\cite{chakraborty2010inference, qian2011performance, shi2018high}). For example, \cite{shi2022statistically} proposes an advantage learning process for offline infinite horizon settings to boost statistical efficiency in the estimation of the optimal treatment regime. They provide a finite sample bias guarantee for their process and show that the estimated contrast of Q-functions from their process converges faster than existing Q-learning estimation methods as the number of stages approaches infinity. However, these works focus on improving statistical efficiency rather than 
% characterizing the asymptotic distributions of estimators and 
constructing confidence intervals for quantities related to the optimal policy. %Notably there are also other approaches for learning the optimal dynamic treatment regimes using likelihood based methods \cite{thall2000evaluating,thall2002selecting,thall2007bayesian}.%\textbf{Treatment Estimation literature:}

In the literature of optimal treatment estimation under dynamic treatment regimes, most estimators are in general not asymptotically normal (\cite{robins2004optimal,moodie2010estimating,chakraborty2010inference,song2015penalized}). This is particularly due to the fact that the first-stage pseudo-outcomes are not differentiable with respect to the second-stage structural parameters. 
%Other methods for optimal dynamic treatment regime estimations have been proposal notably likelihood based methods %As those parameters also need to be estimated this notably implies that in general $\hat\Psi_1$ is not necessarily normal. 
This phenomenon also appears in other subfields such as in supervised classification when classes are not well separated (e.g \cite{laber2011adaptive}), where the lack of differentiability of the $0$-$1$ loss can lead to non-asymptotic normality. To address this problem, different solutions have been proposed. A first solution consists of adapting the estimation procedure by using a hard threshold estimator or a soft threshold estimator (e.g see \cite{chakraborty2010inference,moodie2012q}). These estimators shrink down the value of the second-stage structural parameters when close to the regions of non-regularity. %that smooth out the dependence of first stage pseudo-outcomes on the second stage structural parameters.% However as highlighted in 
Although it has been empirically shown to perform well (e.g see \cite{moodie2012q}), this approach suffers from a few drawbacks. Firstly, these threshold estimators often depend on hyperparameters that need to be tuned.
 Moreover, as noted in \cite{chakraborty2010inference}, theoretical understanding of these estimators is still lacking. %when properly tuned %and combined with m-out-of-n bootstrap confidence intervals, address the bias and over-coverage issues to some extent when subsequent treatment effects are small. 
%\cite{moodie2012q} further utilized the soft-threshold Q-learning framework while also proposing to incorporate measured confounding covariates and propensity scores in dynamic treatment regime estimation with observational data. 
%While providing empirical evidence of reducing finite sample biases in certain cases, these threshold estimators often rely on appropriate tuning, and the reduction of asymptotic bias could come at the cost of worse performance in finite samples if the shrinkage is too aggressive. 
Another solution proposes the construction of confidence intervals not through asymptotic normality but instead by relying on adaptive confidence intervals or adaptive bootstrap procedures (\cite{berger1994p, robins2004optimal, laber2011adaptive, laber2014dynamic, huang2015characterizing}). %Those mIn the region of non-regularity those methods propose another method to build confidence intervals that does not rely on asymptotic normality. % proposed statistics that discriminate between points that are near the classification decision boundary and those relatively far, leading to adaptive confidence intervals that do not undercover. This non-regularity problem has also been seen in measuring benefits of biomarkers in treatment selection, when the expected benefits of biomarkers are nonsmooth functions of the data, and \cite{huang2015characterizing} proposed corresponding adaptive confidence intervals for the expected benefits. 
%This idea of adaptive confidence interval has also been applied to optimal dynamic treatment estimation. 
For example, \cite{laber2014dynamic} proposes to build different confidence intervals in, respectively, the regions of regularity and non-regularity. In the latter case, they exploit a union-bound argument over all possible values of the second-stage parameters. However, this approach offers conservative intervals that overcover and is difficult to generalize to multiple stages without having the confidence region size become excessively large. 
% Instead of focusing on reducing the asymptotic bias of first stage Q-learning estimators, there has been a line of work \cite{laber2014dynamic, laber2011adaptive, huang2015characterizing} that aims to provide adaptive confidence intervals for linear combinations of the first stage coefficients with valid coverage, by grouping the later stage treatment effects based on their relative size and used a conservative projection interval (\cite{berger1994p, robins2004optimal}) when the effects are small. Although theory for coverage of such confidence intervals were provided, as pointed out in \cite{laber2017evaluating}, the theory has been established more case specifically and lacks generality, and the adaptive confidence intervals do not directly eliminate the asymptotic estimation biases.
%lacks theoretical asymptotic guarantee for the estimation biases.
%See e.g. \cite{chakraborty2014dynamic} for a more comprehensive survey of recent estimation and inference advances in this area. 
Another proposed solution uses the ``m out of n bootstrap" procedure (\cite{bretagnolle1983lois, swanepoel1986note, shao1989general, bickel2012resampling}). %\textcolor{red}{this sentence is not clear to me}By reducing the sample size when sampling with replacement from the original data, ``m out of n bootstrap" has a smoothing effect on the resulting empirical distribution function, and thus helps relieving the non-regularity problem in optimal dynamic treatment regime estimation. 
\cite{chakraborty2013inference} proposes to choose the size $m$ adaptively and derives theoretical guarantees for the obtained estimators. However, note that as is chosen to be much smaller than the sample size, i.e. $m=o(n)$, the proposed estimator will not converge at the optimal rate but at a slower rate. In contrast, our estimation algorithm uses a simple softmax approximation technique and we obtain asymptotic coverage guarantees as long as the log-ratio of the temperature parameter in the softmax operator to the sample size lies in a simple range. Even though an alternative softmax operator has also been proposed (\cite{asadi2017alternative}) with theoretically better convergence properties that have been employed in other statistical problems, we adhere to the widely adopted Boltzmann softmax operator, demonstrating that it still allows for valid inference for our problem of study.

 Semi-parametric approaches are frequently employed for estimating optimal dynamic treatment regimes (e.g \cite{robins2004optimal,murphy2003optimal}). These estimators are built using the generalized method of moments (\cite{levit1976efficiency, hasminskii1979nonparametric, ibragimov1981statistical, pfanzagl1982lecture, klaassen1987consistent, robinson1988root, bickel1988estimating, vaar:1991, bickel1993efficient, newey1994asymptotic, robins1995semiparametric, van2000asymptotic, newey1998undersmoothing, ai2003efficient, newey2004twicing, ai2007estimation, ai2012semiparametric, kosorok2007introduction,tsiatis2007semiparametric}). The nuisance parameter is often high-dimensional or infinite-dimensional and in general the difficulty in estimating nuisance parameters can negatively affect the estimation of the target parameter. To solve this problem, the efficient semi-parametric inference literature imposes conditions on the moment function under which errors made in the estimation of the nuisance parameters have a reduced impact on the estimation of the target parameter (\cite{hasminskii1979nonparametric,bickel1988estimating,zhang2014confidence,belloni2011inference,belloni2014inference,belloni2014uniform,belloni2014pivotal,javanmard2014confidence,javanmard2014hypothesis,javanmard2018debiasing,van2014asymptotically,ning2017general,chernozhukov2015valid,neykov2018unified,ren2015asymptotic,jankova2015confidence,jankova2016confidence,jankova2018semiparametric,bradic2017uniform,zhu2017breaking,zhu2018linear}). This approach dates back to the classical work on doubly
robust estimation and targeted maximum likelihood (\cite{robins1995analysis, robins1995semiparametric, van2006targeted, van2011targeted, luedtke2016statistical, toth2016tmle}) as well as to more recent
work on locally robust or Neyman orthogonal conditions on the moment function (\cite{chernozhukov2022locally,belloni2017program,chernozhukov2018double}).
%This methodology can be traced back to classical research on doubly robust estimation and targeted maximum likelihood (\cite{robins1995analysis, robins1995semiparametric, van2006targeted, van2011targeted, luedtke2016statistical, toth2016tmle}). 
 Debiased machine learning uses Neyman orthogonal moments, which satisfy that the first partial derivatives with respect to the nuisance parameters are zero. When the nuisance parameters live in some low complexity class (e.g. 
 Donsker class) or when cross-fitting is used (\cite{newey2018cross}), the target parameter can then be estimated at parametric rates. Neyman orthogonal moments can be constructed using the concept of a nonparametric influence function (\cite{chernozhukov2022locally, newey1998undersmoothing, newey2004twicing, bravo2020two}) or the related notion of the Riesz representer (\cite{chernozhukov2020adversarial, chernozhukov2022automatic, chernozhukov2022debiased, robins2007comment, newey2018cross, athey2018approximate, hirshberg2018debiased, hirshberg2017augmented, rothenhausler2019incremental, singh2019biased}). %The second component involves the technique of cross-fitting (\cite{bickel1982adaptive, klaassen1987consistent}), which divides the data into two halves and fits the nuisance parameters on one half and the target parameter on the other half, and vice versa. Cross-fitting effectively diminishes the interdependence between target and nuisance parameter estimations, thereby has been used in reducing estimation bias and constructing confidence intervals (\cite{newey2018cross}). 
 Building on the work of \cite{chernozhukov2018double}, \cite{lewis2020double} showcased the use of the double/debiased machine learning framework to dynamic treatment effect models, specifically Structural Nested Mean Models, by leveraging a sequential residualization approach. %\cite{chernozhukov2020adversarial} also established that if the statistical complexity of the nuisance parameter space is small, asymptotically valid inference for the final period optimal treatment estimation can be achieved without cross-fitting. 
 In our paper, we extend the debiased machine learning approach to estimate the structural parameters of the \emph{optimal} dynamic treatment regime. %the max operators used to evaluate optimal treatment values with softmax operators that incorporate appropriate temperature parameters. This modification alleviates potential non-smoothness that could render asymptotically invalid confidence intervals. 
 We establish asymptotic normality of the value of the optimal treatment policy and of the structural parameters for both the first and second periods. As is standard in the debiased machine learning literature, our results require that the nuisance function spaces exhibit low complexity (with a small critical radius) or that cross-fitting is employed.

%Our paper builds on a recent literature of applying double/debiased machine learning to the estimation of treatment effects, which noticeably permits nuisance parameters to be high-dimensional or semi-parametric when adopting the semi-parametric inference approach, by using Neyman orthogonal moments and a cross-fitting estimation scheme (\cite{chernozhukov2018double}). 

%\cite{asadi2017alternative} proposed an alternative softmax operator to the commonly used Boltzmann softmax operator that is a non-expansion, ensuring a better convergent behavior in reinforcement learning. We stick with the Boltzmann softmax operator for dynamic treatment regime estimation, proving that when the temperature parameter is within a proper range of order with respect to the sample size, we can still achieve valid inference and construct asymptotically valid confidence intervals.

%\textbf{Statistics Literature part:}

%Confidence intervals have been constructed using bootstrap and jackknife methods to reduce bias ~\cite{cattaneo2018kernel, cattaneo2019two}.

%Targeted maximum likelihood literature:\cite{van2006targeted, van2011targeted, toth2016tmle}

\section{Identification of Optimal Dynamic Policies via G-Estimation}\label{sec: identiG}

First observe that by invoking the Markovianity of the policy and the conditional independencies implied by the intervention graphs, we can characterize the optimal dynamic policy in a backwards induction manner (see Appendix~\ref{app:charac} for a formal argument):
\begin{align}\label{eqn:opt-pi-char}
\pi_1^*(S) =~& \arg \max_{\tau_1} \E\left\{Y^{(\tau_1, \pi_2^*)}\mid S\right\},&
\pi_2^*(X) =~& \arg\max_{\tau_2} \E\left\{Y^{(T_1, \tau_2)}\mid X\right\}.
\end{align}
In essence, we go to the last period and we optimize our second period action given the second period state $X$, without any change to our first period policy. This provides the optimal second period policy $\pi_2^*$. Then we go to the first period and we fantasize that we are continuing with the second period policy that we have just constructed and under this fantasy, we optimize our first period action given the first period state.

A well-known method for estimating an optimal dynamic regime is $g$-estimation \cite{robins2004optimal}, which is a form of backwards dynamic programming or $Q$-learning and closely related to advantage or $A$-learning \cite{murphy2003optimal,blatt2004learning,Moodie2007,schulte2014q}. This process generalizes to arbitrary number of periods, but we describe it here for simplicity in the two period case. The key argument in $g$-estimation is that the improvement that any policy $\pi$ brings, as compared to the observed policy, can be decomposed as the sum of a sequence of improvements, one for each decision-making period. Each of these improvement corresponds to removing the effect of the treatment that was given at that period and adding the effect of the treatment that would have been assigned by policy $\pi$ at that period. For such a decomposition, we need an appropriate notion of effect of each treatment. As a side note, appropriate notions of such an effect are also potentially useful in \emph{credit assignment}, i.e. attributing parts of the outcome to an assigned treatment; hence estimating these effect functions is interesting in its own right.

More formally, we can write (see Appendix~\ref{app:blip} for details on the derivation):
$$
    \E\left\{Y^{(\pi)} - Y\right\} = \E\left\{\gamma_1^{(\pi_2)}(\pi_1(S), S) - \gamma_1^{(\pi_2)}(T_1, S)\right\} + \E\left\{\gamma_2(\pi_2(X), X) - \gamma_2(T_2, X)\right\}.
$$

The functions $\gamma_t$ are known in the literature as the blip effects and correspond to the following quantity: what is the increase in reward if at the current moment we switch from the baseline treatment to some other treatment level $\tau$, and then continue with the target policy $\pi$. More formally we define:
\begin{align*}
    \gamma_2(\tau_2, x) =~& \E\left\{Y^{(T_1, T_2)} - Y^{(T_1, 0)}\mid T_2=\tau_2, X=x\right\},\\
    \gamma_1^{(\pi_2)}(\tau_1, s) =~& \E\left\{Y^{(T_1, \pi_2)} - Y^{(0, \pi_2)}\mid T_1=\tau_1, S=s\right\}.
\end{align*}
We will denote with $\gamma_t^*$ the blip effects that correspond to the optimal target policy (note that this only matters for the definition of $\gamma_1$).

If we had some way of estimating the blip effect functions for any target policy, then this decomposition allows us to construct the optimal dynamic regime in a recursive manner. We go to the second period and we note that under the conditional independencies implied by the intervention graphs, one can easily argue that we can equivalently define the optimal policy $\pi_2^*$ from Equation~\eqref{eqn:opt-pi-char} as the policy that maximizes the blip effect $\gamma_2=\gamma_2^*$, i.e.
\begin{align*}
    \pi_{2}^*(X) = \arg\max_{\tau_2} \gamma_2^*(\tau_2, X).
\end{align*}

We can then use this policy as the continuation policy when we define the blip effect for the first period, which we denote with $\gamma_1^*$ (as it is the blip effect that is associated with the optimal target policy). Again, under the conditional independencies implied by the intervention graphs, we can easily argue that the optimal first period policy $\pi_1^*$, as defined in Equation~\eqref{eqn:opt-pi-char} is the one that optimizes the first period blip effect, i.e.
\begin{align*}
    \pi_{1}^*(S) = \arg\max_{\tau_1} \gamma_1^*(\tau_1, S)
\end{align*}

Note that the conditional expectations that define the blip effects involve counterfactual quantities and therefore they don't correspond to simple regression problems on the observed data. The ingenuity of the $g$ estimation approach is that it provides a way to estimate the blip functions by noticing that they must satisfy a set of conditional moment restrictions. In particular, one can roughly make the following arguments \cite{robins2004optimal,lewis2020double}: if we subtract from $Y$ the blip effect of the observed last treatment, then the resulting random variable can roughly be thought as the counterfactual outcome $Y^{(T_1, 0)}$, which should be independent of $T_2$ conditional on $X$. This then implies that for all functions $f$:
\begin{align*}
    \E\left\{\left[Y - \gamma_2^*(T_2, X)\right]\, \left[f(T_2, X) - \E\{f(T_2, X)\mid X\}\right]\right) = 0. 
\end{align*}

This continuum of moment conditions can be used to identify the function $\gamma_2^*$. Similarly, if we subtract from the observed outcome the blip effect of the last period, add the blip effect of the optimal second period treatment and subtract the blip effect of the first period treatment, then the resulting variable can roughly be thought as the counterfactual outcome $Y^{(0, \pi_2^*)}$, which is independent of $T_1$ given $S$. This then implies that for all functions $f$:
\begin{align*}
    \E\left\{\left[Y - \gamma_2^*(T_2, X) + \gamma_2^*(\pi_2^*(X), X) - \gamma_1^*(T_1, S)\right]\,\left[f(T_1, S) - \E\{f(T_1, S)\mid S\}\right]\right\} = 0.
\end{align*}
This continuum of moment conditions can be used to identify the function $\gamma_1^*$.

This approach becomes very practical if one imposes semi-parametric restrictions on the data generating process that imply that the blip functions admit a parametric form. In particular, we will make a typical assumption that the blip effects that correspond to the optimal target policy are linear in some sufficiently expressive feature map (such a modeling assumption is referred to in the literature as a Linear Structural Nested Mean Model; SNMM \cite{hernan2010causal}), i.e. for some known $d$-dimensional feature map vectors $\phi, \mu$: 
\begin{align*}
    \gamma_2^*(\tau_2, x) =~& \psi_0 ^{\trans}\phi(\tau_2, x),& \gamma_1^*(\tau_1, s) =~& \theta_0 ^{\trans}\mu(\tau_1, s)
\end{align*}
with the convention that $\phi(0, x)=\mu(0,s)=0$. Then we can identify the structural parameters $\psi_0$ by simply solving for $\psi$ the following  vector of moment restrictions:
\begin{align*}
    \E(\epsilon_2(\psi)\, [\phi(T_2, X) - \E\{\phi(T_2, X)\mid X\}]) =~& 0, &
    \epsilon_2(\psi) =~& Y - \psi ^{\trans}\phi(T_2, X)
\end{align*}
then calculating the optimal second period policy:
\begin{align*}
    \pi_{2}^*(X) = \arg\max_{\tau_2} \psi_0 ^{\trans}\phi(\tau_2, X).
\end{align*}

Similarly, we can define the residual outcome that emulates $Y^{(0,\pi_2^*)}$ as:
\begin{equation}\label{eqn:period1-res}
    \epsilon_1(\theta, \psi) = Y - \psi ^{\trans}\phi(T_2, X) + \max_{\tau_2} \{\psi ^{\trans}\phi(\tau_2, X)\} - \theta ^{\trans}\mu(T_1, S)
\end{equation}
and calculate the first period structural parameters $\theta_0$ by solving for $\theta$ the following vector of moment restrictions:
\begin{equation}\label{eqn:non-ortho-non-smooth}
    \E\left\{\epsilon_1(\theta, \psi_0)\, [\mu(T_1, S) - \E\{\mu(T_1, S) \mid S\}]\right\} = 0,
\end{equation}
and calculate the optimal first period policy:
\begin{align*}
    \pi_1^*(S) = \arg\max_{\tau_1} \theta_0 ^{\trans}\mu(\tau_1, S)
\end{align*}
If we denote with:
\begin{align}\label{eqn:adjusted-y}
    Y_{\text{adjusted}}^* := Y - \psi_0 ^{\trans}\phi(T_2, X) - \theta_0 ^{\trans}\mu(T_1, S) + \max_{\tau_2} \psi_0 ^{\trans}\phi(\tau_2, X) + \max_{\tau_1} \theta_0 ^{\trans}\mu(\tau_1, S)
\end{align}
then value of the optimal policy is then given as:
\begin{align*}
    V^* = \E\{Y^{(\pi^*)}\} = \E\{Y_{\text{adjusted}}^*\}.
\end{align*}

\section{Challenges in Inference on Optimal Value and Structural Parameters}

We can easily translate the identification recipe in the previous section to an estimation strategy. In particular, we can replace the moment equations with their empirical analogues, estimate regression models for the conditional expectations of the feature maps and construct estimates $\hat{\psi}$ and $\hat{\theta}$ for the structural parameters using $n$ samples. There are two caveats with this approach. 

The first is that the estimation error of the regression functions $\E\{\phi(T_2,X)\mid X\}$ and $\E\{\mu(T_1,S)\mid S\}$, will have a first order impact on the accuracy of our structural parameters and hence the quality of the resulting optimal policies. Moreover, if we want to use regularization based methods to flexibly estimate these regressions with modern regression approaches, then the resulting estimates $\hat{\theta}, \hat{\psi}$ will be heavily biased, due to regularization bias. However, we can easily address this issue by resorting to existing techniques in the literature. In particular, instead of considering the raw moments in the previous section, we will consider Neyman orthogonal variants of the moments (see e.g. \cite{robins2004optimal,lewis2020double}). For instance, for the second period parameter $\psi_0$, we solve for $\psi$ an empirical analogue of the moment equation:
\begin{align*}
    \E([\epsilon_2(\psi) - \E\{\epsilon_2(\psi)\mid X\}] [\phi(T_2, X) - \E\{\phi(T_2, X)\mid X\}]) =~& 0.
\end{align*}
Note that we can write:
\begin{align*}
    \E\{\epsilon_2(\psi)\mid X\} = \E\{Y\mid X\} - \psi^{\trans}\E\{\phi(T_2, X)\mid X\}.
\end{align*}
Thus, we can estimate this term by estimating the following two regression problems:
\begin{equation}\label{eqn:period2-nuisance}
h^*(X) =  \E(Y\mid X), \quad
r^*(X) =  \E\{\phi(T_2, X)\mid X\}
\end{equation}

This moment is the standard residual-on-residual regression moment appearing in the double/debiased machine learning literature. The extra centering terms we added in the first parenthesis, do not alter the solution, but decrease the variance of the moment. This leads to a more efficient estimation method and more importantly, as we will see formally, robust to small influence from estimation errors in the various regression models that appear in the moment. We can perform similar correction to the first stage moment. Before we do that, let us elaborate on the second caveat of the estimation approach, which is more crucial and the main topic of this work.

One route to constructing asymptotically valid confidence intervals is showing that the structural parameter estimates are asymptotically linear (i.e. asymptotically equivalent to a simple average of some fixed function, known as the influence function). Asymptotic linearity implies asymptotic normality which can be used to provide confidence intervals for the parameter and, coupled with the delta method, for any smooth functional of the structural parameters. However, neither the moment that estimates the first period structural parameter $\theta$ is smooth in $\psi$ nor the policy value is smooth in $\theta, \psi$. Both of these functionals contain the maximum operator, which introduces non-smoothness and leads to non-Gaussian asymptotic stochastic behavior and inability to construct confidence intervals.

The main idea of our work is that by replacing the maximum operator with the soft-maximum operator with an appropriate temperature parameter, growing at the right rate, we can obtain valid inference on the truly optimal policy and its corresponding structural parameters.
The key bottleneck of the problem is analyzing the asymptotic linearity of the first period parameter estimator $\hat\theta$, which will be our main theorem. Inference on the structural parameters is also interesting in its own right, for instance when one is interested in understanding dimensions of heterogeneity of the effect of the treatment in the first period, under an optimal continuation policy. Building on this step, in Section~\ref{sec: aysmlinval} we present asymptotic normality and inference on the policy value, which, in essence, is a repetition of the soft-max analysis we present for inference on the first stage structural parameter.

\section{Softmax Approximation} 

We will define a softmax approximation to the moment conditions in Equation~\eqref{eqn:non-ortho-non-smooth} that define $\theta_0$. To do so, it will be convenient to introduce a soft-max operator:
\begin{align*}
    \softmax_{\tau \in \mcT} f(\tau) = \sum_{\tau\in \mcT} \frac{\exp\{\beta f(\tau)\}}{\sum_{t\in\mcT} \exp\{\beta f(t)\}} f(\tau).
\end{align*}
We can then define the softmax analogue of the residual $\epsilon_1(\theta, \psi)$ defined in Equation~\eqref{eqn:period1-res} as:
\begin{align*}
    \epsilon_1^\beta(\theta,\psi) = Y - \psi ^{\trans}\phi(T_2, X) + \softmax_{\tau_2\in \mcT} \psi ^{\trans}\phi(\tau_2, X) - \theta ^{\trans}\mu(T_1, S) 
\end{align*}
Then the softmax approximation to the Neyman orthogonal moment condition that defines the parameter $\theta_0$ can be written as:
\begin{align*}
    \E([\epsilon_1^\beta(\theta, \psi_0) - \E\{\epsilon_1^\beta(\theta, \psi_0)\mid S\}][ \mu(T_1, S) - \E\{\mu(T_1, S)\mid S\}]) = 0.
\end{align*}

We will denote with $\theta_0^\beta$ the solution to this softmax moment condition. 
Note that the term $\E\{\epsilon_1^\beta(\theta, \psi_0)\mid S\}$ can be decomposed as:
\begin{align*}
    %\E\{\epsilon_1^\beta(\theta, \psi_0)\mid S\} =& 
    \E\{Y\mid S\} - \E\{\psi_0 ^{\trans}\phi(T_2, X) - \softmax_{\tau_2\in \mcT}\psi_0 ^{\trans}\phi(\tau_2, X)\mid S\}- \theta ^{\trans}\E\{\mu(T_1, S)\mid S\}.
\end{align*}
Therefore we can estimate this term by estimating the following three regression problems:
\begin{equation}\label{eqn:period1-nuisance}
\begin{aligned}
q^*(S) =\E\{Y\mid S\},\quad
p_1^*(S) = \E\{\mu(T_1, S)\mid S\}, \\ 
p_{2,\beta}^*(S)= \E\{\psi_0 ^{\trans}\phi(T_2, X) - \softmax_{\tau_2\in \mcT} \psi_0 ^{\trans}\phi(\tau_2, X)\mid S\}.
\end{aligned}
\end{equation}
It will be convenient to denote with $m_\beta(Z;\theta, \psi, q, p_1, p_2)$ the softmax approximation to the Neyman orthogonal moment, parameterized by the aforementioned regression functions, i.e.
\begin{equation}\label{eqn:apx-moment-def}
     m_\beta(Z;\theta, \psi, q, p_1, p_2) = \{\epsilon_1^\beta(\theta, \psi) - q(S) + p_2(S) + \theta ^{\trans}p_1(S) \}\{\mu(T_1, S) - p_1(S)\}.
\end{equation}
and its expected value as:
\begin{align*}
    M(\theta, \psi, q, p_1, p_2;\beta) =~& \E\{m_\beta(Z;\theta, \psi, q, p_1, p_2)\}.
\end{align*}
For any $\beta\in (0, \infty]$, let $\theta_0^\beta$ be the solution to the following moment condition, with respect to $\theta$: 
$$M(\theta, \psi_0, q^*, p_1^*, p_{2,\beta}^*;\beta) = 0.$$
Note that we have
$\theta_0 = \theta_0^\infty.$

\section{Estimation Procedure and Main Theorem}\label{sec:mainthm}
First, we define our estimation process, which consists of two steps of backwards induction.

\paragraph{Step 1: Procedure for estimate $\hat{\psi}$ of second period structural parameter $\psi_0$.} 
\begin{tabbing}
   \qquad \enspace Construct estimates $\hat{h}, \hat{r}$ of the nuisance functions $h^*$, $r^*$ defined in Equation~\eqref{eqn:period2-nuisance}.\\
   \qquad \enspace For $i=1$ to $i=n$ \\
   \qquad \qquad Define residuals $\check{Y}_i=Y_i-\hat{h}(X_i)$ and $\check{\Phi}_i=\phi(T_{2,i},X_i) - \hat{r}(X_i)$ .\\
\qquad \enspace Obtain $\hat{\psi}$ by solving the empirical moment equation with respect to $\psi$:
\end{tabbing}
\vspace{-1.8em}
\begin{align*}
    n^{-1} \sum_{i=1}^n (\check{Y}_i - \psi ^{\trans}\check{\Phi}_i) \check{\Phi}_i = 0.
\end{align*}

\paragraph{Step 2: Procedure for estimate $\hat{\theta}^\beta$ of first period structural parameter $\theta_0$}
\begin{tabbing}
   \qquad \enspace Construct estimates $\hat{q}, \hat{p}_1$ of the nuisance functions
$q^*, p_1^*$ defined in Equation~\eqref{eqn:period1-nuisance}.\\
\qquad \enspace Construct estimate $\hat{p}_{2,\beta} \equiv \hat{p}_{2,\beta, \hat{\psi}}$ by regressing $\hat{\psi} ^{\trans}\phi(T_2, X) - \softmax_{\tau_2} \hat{\psi} ^{\trans}\phi(\tau_2, X)$ on $S$. \\
   \qquad \enspace For $i=1$ to $i=n$ \\
   \qquad \qquad Define residuals: $\hat{Y}_i=Y_i - \hat{q}(S_i)$ and $\hat{M}_i=\mu(T_{1,i}, S_i) - \hat{p}_1(S_i)$.\\
   \qquad \qquad Define residual: $\hat{\Phi}_i = \hat{\psi} ^{\trans}\phi(T_{2,i}, X_i) - \softmax_{\tau_2} \hat{\psi} ^{\trans}\phi(\tau_{2}, X_i) - \hat{p}_{2,\beta}(S_i)$\\
\qquad \enspace Obtain $\hat{\theta}^\beta$ by solving the empirical moment equation with respect to $\theta$:
\end{tabbing}
\vspace{-1.8em}
\begin{align*}
    n^{-1}\sum_{i=1}^n (\hat{Y}_i - \hat{\Phi}_i - \theta ^{\trans}\hat{M}_i) \hat{M}_i = 0.
\end{align*}

  Having defined our estimation process, we are now ready to prove our main result, which shows that we can construct asymptotically valid confidence intervals for the structural parameters $\theta_0, \psi_0$, assuming the temperature parameter $\beta$ of the soft-max grows fast enough ($\omega(n^{1/\{2(1+\delta)\}})$  for some appropriately chosen $\delta\in(0,1]$), but not too fast ($o(n^{1/2})$). We further need additional assumptions on the estimation quality of all the nuisance functions involved in the estimation process and require that the nuisance estimates lie in function classes of small statistical complexity, e.g. with critical radius of $o_p(n^{-1/4})$, where the classical definition of critical radius of a function class can be found in e.g. Section 3 of \cite{chernozhukov2020adversarial}.

To present the theorem, we will also need to introduce the following norm notation. For any fixed or random vector $v$, we denote with $\|v\|_{p}=\{\sum |v_i|^p\}^{1/p}$  for any $p\ge 1$, and with $\|v\|_{\infty}=\max |v_i|$, where $v_i$ are the individual components of vector $v.$ Moreover, for any vector-valued function $g$ that takes as input a random variable $X$, we define $\|g\|_2 = [\E_X\{\|g(X)\|_2^2\}]^{1/2}$.

% For any random variable $X$, we denote with $\|X\|_{L_p}=\{\E(|X|^p)\}^{1/p}$ for any $p\ge 1$, and with $\|X\|_{L_\infty}=\inf\{c\geq 0: pr(|X| \le c) = 1\}$. 

\begin{theorem}[Main Theorem I]\label{thm:main}
Assume that the random variables $\max_{t} \psi_0 ^{\trans}\phi(t,X) - \psi_0 ^{\trans}\phi(\tau, X)\ (\tau\in \mcT)$ are almost surely bounded and each admits a density $f_\tau$ on $(0,c)$ for some constant $c>0$, that satisfies $f_\tau(x)\leq H/x^{1-\delta}$, for some $0\leq H<\infty$ and $0<\delta\leq 1$. Suppose that $\beta=\omega(n^{1/\{2(1+\delta)\}})$ and $\beta=o(n^{1/2})$. Moreover, suppose that the nuisance estimates satisfy the rate conditions:
$$\|h^*-\hat{h}\|_{2}, \|r^* - \hat{r}\|_2, \|q^* - \hat q\|_2, \|p_1^* - \hat p_1\|_2=o_p(n^{-1/4})$$ 
Define $p_{2,\beta,\psi}^*(S)=\E[\psi^{\trans}\phi(T_2, X)-\softmax_{\tau_2} \psi^{\trans}\phi(\tau_2, X)\mid S]$ and $\hat{p}_{2,\beta,\psi}$ the outcome of the estimation algorithm when regressing $\psi^{\trans}\phi(T_2, X)-\softmax_{\tau_2} \psi^{\trans}\phi(\tau_2, X)$ on $S$. Assume that there exists a sufficiently small neighborhood $\mathcal{N}$ of $\psi_0$ such that for all $\epsilon>0$, as $n\to\infty$\footnote{We note that such a guarantee would follow easily, for instance, from results on oracle inequalities via localized complexities and a uniform cover argument over the low dimensional parametric space that the parameter $\psi$ lies in.}
\begin{align*}
\sup_{\beta>0}pr\left(\sup_{\psi\in\mathcal{N}}\|\hat p_{2,\beta,\psi} - p_{2,\beta,\psi}^*\|_2 \ge \epsilon n^{-1/4}\right)\to 0.
\end{align*}
% $$\|\hat{p}_{2,\beta}-p_{2,\beta}^*\|_2=o_{p,\rm{unif}(\beta)}(n^{-1/4}).$$
Further, suppose that the nuisance estimates take values from a function space of
critical radius $\delta_n$ with $\delta_n=o_p(n^{-1/4})$. Assume the nuisance estimates $\hat q(S), \hat p_1(S), \hat p_{2,\beta}(S)$ are all almost surely bounded. Moreover, assume the following boundedness conditions:
$$\sup_{\beta>0}\|\theta_0^\beta\|_2<\infty$$
and assume the random variables
$$\sum_\tau \|\phi(\tau,X)\|_2, \|\tilde{M}\|_2,\tilde{Y},\sup_{\beta>0}\|p_{2,\beta}^*(S)\|_2,  \|\mu(T_1,S)\|_2, \sup_{\tau\in \mathcal{T}, \psi \in \mathcal{N}} |\psi^{\trans} \phi(\tau, X)|$$
are all almost surely bounded, where $\tilde{M}= \mu(T_1,S) - p_1^*(S)$, $\tilde{Y}= Y - q^*(S)$. Assume that the matrix $\E(\tilde{M} \tilde{M} ^{\trans})$ is bounded and strictly positive definite so that its inverse matrix $\E(\tilde{M} \tilde{M} ^{\trans})^{-1}$ exists. Then the estimates $\hat{\psi}, \hat{\theta}^\beta$ are asymptotically linear, i.e.
\begin{align*}
    n^{1/2}(\hat{\psi} - \psi_0) = n^{-1/2} \sum_{i=1}^n \rho_{\psi}(Z_i) + o_p(1),\quad
    n^{1/2}(\hat{\theta}^\beta - \theta_0) = n^{-1/2} \sum_{i=1}^n \rho_{\theta}(Z_i) + o_p(1)
\end{align*}
for some functions $\rho_\psi, \rho_\theta$ such that $E\{\rho_\psi(Z_i)\} = 0$ and $E\{\rho_\theta(Z_i)\} = 0$.
Moreover, asymptotically valid confidence intervals, for $\psi_0$ and $\theta_0$ respectively, with target coverage level $\alpha$, can be constructed via any consistent estimates $\hat{\sigma}_\psi^2, \hat{\sigma}_\theta^2$ of the variances $\sigma_\psi^2 = \E\{\rho_\psi(Z)^2\}$ and $\sigma_\theta^2 = \E\{\rho_\theta(Z)^2\}$ as:
\begin{align*}
    CI_{\psi}(\alpha) = [\hat{\psi} \pm n^{-1/2}z_{1-\alpha/2} \hat{\sigma}_{\psi}], \quad
    CI_{\theta}(\alpha) = [\hat{\theta}^\beta \pm n^{-1/2}z_{1-\alpha/2} \hat{\sigma}_{\theta}], 
\end{align*}
where $z_{q}$ is the $q$-th quantile of the standard normal distribution.
\end{theorem}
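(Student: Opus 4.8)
The plan is to establish asymptotic linearity of $\hat\psi$ and of $\hat\theta^\beta$ separately and then read off normality and the confidence intervals from a central limit theorem and Slutsky's lemma; a crucial simplification is that both estimating equations are \emph{linear} in the target parameter, so both estimators admit closed forms and there is no separate ``consistency-then-linearization'' step. The linearity of $\hat\psi$ is the standard debiased-machine-learning argument: from $\hat\psi=(n^{-1}\sum_i\check\Phi_i\check\Phi_i^{\trans})^{-1}(n^{-1}\sum_i\check\Phi_i\check Y_i)$ one checks Neyman orthogonality of the residual-on-residual moment with respect to $(h^*,r^*)$ (the first derivatives vanish because $\E[\check\Phi\mid X]=0$ and $\E[\check Y-\psi_0^{\trans}\check\Phi\mid X]=0$), after which the usual bookkeeping --- second-order remainder $\lesssim\|h^*-\hat h\|_2\|r^*-\hat r\|_2=o_p(n^{-1/2})$ and empirical-process fluctuation controlled by the $o_p(n^{-1/4})$ critical-radius condition --- gives $n^{1/2}(\hat\psi-\psi_0)=n^{-1/2}\sum_i\rho_\psi(Z_i)+o_p(1)$ with $\rho_\psi(Z)=\E[\check\Phi\check\Phi^{\trans}]^{-1}\check\Phi(\check Y-\psi_0^{\trans}\check\Phi)$.

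Next I would analyze the softmax target $\theta_0^\beta$. Writing $G^\beta(\psi)=\psi^{\trans}\phi(T_2,X)-\softmax_{\tau_2}\psi^{\trans}\phi(\tau_2,X)$ and $\tilde G^\beta(\psi)=G^\beta(\psi)-\E[G^\beta(\psi)\mid S]$, the centering terms collapse the residual factor of $m_\beta$ to $\tilde Y-\tilde G^\beta(\psi_0)-\theta^{\trans}\tilde M$ with $\tilde Y=Y-q^*(S)$ and $\tilde M=\mu(T_1,S)-p_1^*(S)$, so $\theta_0^\beta=\E(\tilde M\tilde M^{\trans})^{-1}\E[(\tilde Y-\tilde G^\beta(\psi_0))\tilde M]$ and hence $\theta_0^\beta-\theta_0=\E(\tilde M\tilde M^{\trans})^{-1}\E[(\softmax_{\tau_2}\psi_0^{\trans}\phi(\tau_2,X)-\max_{\tau_2}\psi_0^{\trans}\phi(\tau_2,X))\tilde M]$. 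Using $\max f-\softmax f=\sum_\tau w_\tau\Delta_\tau$ with $\Delta_\tau=\max_t\psi_0^{\trans}\phi(t,X)-\psi_0^{\trans}\phi(\tau,X)\ge0$ and softmax weights $w_\tau\le e^{-\beta\Delta_\tau}$, the tail hypothesis $f_\tau(x)\le H/x^{1-\delta}$ gives $\E[\Delta_\tau e^{-\beta\Delta_\tau}]\le H\int_0^\infty x^{\delta}e^{-\beta x}\,dx=O(\beta^{-(1+\delta)})$; since $\tilde M$ is bounded and $\E(\tilde M\tilde M^{\trans})^{-1}$ exists, $\|\theta_0^\beta-\theta_0\|=O(\beta^{-(1+\delta)})$, so $n^{1/2}(\theta_0^\beta-\theta_0)\to0$ exactly under $\beta=\omega(n^{1/\{2(1+\delta)\}})$. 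This is the ``asymptotic bias'' half of the argument.

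For $\hat\theta^\beta$, linearity in $\theta$ gives the exact identity $\hat\theta^\beta-\theta_0^\beta=\hat J^{-1}n^{-1}\sum_i m_\beta(Z_i;\theta_0^\beta,\hat\psi,\hat q,\hat p_1,\hat p_{2,\beta,\hat\psi})$ with $\hat J=n^{-1}\sum_i\hat M_i\hat M_i^{\trans}\to_p\E(\tilde M\tilde M^{\trans})$, and I would split the empirical moment into (i) the error of replacing $(\hat q,\hat p_1,\hat p_{2,\beta,\hat\psi})$ by their population versions at $\psi=\hat\psi$; (ii) the pure $\psi$-effect $n^{-1/2}\sum_i[\tilde m_\beta(Z_i;\hat\psi)-\tilde m_\beta(Z_i;\psi_0)]$, where $\tilde m_\beta(Z;\psi)=(\tilde Y-\tilde G^\beta(\psi)-\theta_0^{\beta\,\trans}\tilde M)\tilde M$; and (iii) the main term $n^{-1/2}\sum_i\tilde m_\beta(Z_i;\psi_0)$. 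Term (i) is $o_p(1)$ by standard DML bookkeeping: $m_\beta$ is Neyman orthogonal in $(q,p_1,p_2)$ at $p_1=p_1^*$ (the directional derivatives vanish because $\E[\tilde M\mid S]=0$ and $\E[\tilde Y-\tilde G^\beta(\psi)-\theta^{\trans}\tilde M\mid S]=0$), all nuisance rates are $o_p(n^{-1/4})$ --- in particular $\|\hat p_{2,\beta,\hat\psi}-p^*_{2,\beta,\psi_0}\|_2=o_p(n^{-1/4})$, by the stated uniform-in-$\psi$ guarantee together with the uniform bound $\|\partial_\psi G^\beta\|=O(1)$ and $\|\hat\psi-\psi_0\|=O_p(n^{-1/2})$ --- and the critical radius is $o_p(n^{-1/4})$. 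Term (ii), by a second-order Taylor expansion in the softmax (which for finite $\beta$ is $C^\infty$ in $\psi$), equals $\big(n^{-1}\sum_i\partial_\psi\tilde m_\beta(Z_i;\psi_0)\big)n^{1/2}(\hat\psi-\psi_0)$ plus a remainder $\lesssim n^{1/2}\|\hat\psi-\psi_0\|^2\sup_{i,\psi\in\mathcal N}\|\partial^2_\psi\tilde m_\beta(Z_i;\psi)\|$. The ``asymptotic influence'' estimates I would prove are $\partial_\psi\softmax_{\tau_2}\psi^{\trans}\phi(\tau_2,X)=\bar\phi+\beta\sum_\tau w_\tau(\phi(\tau)-\bar\phi)(A_\tau-\bar A)$ and $\partial^2_\psi\softmax_{\tau_2}\psi^{\trans}\phi(\tau_2,X)=2\beta\,\mathrm{Var}_w(\phi)+\beta^2\sum_\tau w_\tau(A_\tau-\bar A)(\phi(\tau)-\bar\phi)(\phi(\tau)-\bar\phi)^{\trans}$, where $A_\tau=\psi^{\trans}\phi(\tau,X)$, $\bar A=\softmax_{\tau_2}A_{\tau_2}$, $\bar\phi=\sum_\tau w_\tau\phi(\tau,X)$; the elementary bound $\sum_\tau w_\tau|A_\tau-\bar A|\le 2K/(e\beta)$ (from $w_\tau\le e^{-\beta\Delta_\tau}$ and $ue^{-u}\le e^{-1}$) together with boundedness of $\phi$ makes the first derivative $O(1)$ and the second $O(\beta)$, uniformly in $\beta$ and $\psi\in\mathcal N$. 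Consequently the remainder in (ii) is $O_p(\beta/n^{1/2})=o_p(1)$ precisely under $\beta=o(n^{1/2})$, and --- using the $O(1)$ bound for domination and the concentration of the softmax weights onto the argmax for the pointwise limit --- $n^{-1}\sum_i\partial_\psi\tilde m_\beta(Z_i;\psi_0)\to_p\Lambda_\infty:=-\E[\tilde M(\partial_\psi\tilde G^\infty(\psi_0))^{\trans}]$, so term (ii) $=\Lambda_\infty n^{1/2}(\hat\psi-\psi_0)+o_p(1)$.

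For term (iii), $\{\tilde m_\beta(Z_i;\psi_0)\}_i$ is, for each $n$, i.i.d.\ and mean zero (by the definition of $\theta_0^\beta$), uniformly bounded under the stated boundedness hypotheses, and converges a.s.\ and in $L^2$ to $\tilde m_\infty(Z_i;\psi_0)$ as $\beta\to\infty$; a triangular-array (Lindeberg) CLT, together with $\E[(\tilde m_\beta(Z;\psi_0)-\tilde m_\infty(Z;\psi_0))^2]\to0$, yields $n^{-1/2}\sum_i\tilde m_\beta(Z_i;\psi_0)=n^{-1/2}\sum_i\tilde m_\infty(Z_i;\psi_0)+o_p(1)$. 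Assembling the pieces, and using the bias step for $n^{1/2}(\theta_0^\beta-\theta_0)=o(1)$ and the $\hat\psi$ step for $n^{1/2}(\hat\psi-\psi_0)$, I obtain $n^{1/2}(\hat\theta^\beta-\theta_0)=n^{-1/2}\sum_i\rho_\theta(Z_i)+o_p(1)$ with $\rho_\theta(Z)=\E(\tilde M\tilde M^{\trans})^{-1}\{\tilde m_\infty(Z;\psi_0)+\Lambda_\infty\rho_\psi(Z)\}$, which has mean zero. Asymptotic linearity then gives asymptotic normality with variances $\sigma_\psi^2=\E[\rho_\psi(Z)^2]$ and $\sigma_\theta^2=\E[\rho_\theta(Z)^2]$, and Slutsky's lemma with any consistent variance estimator delivers the stated intervals. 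I expect the main obstacle to be exactly the softmax analysis of the third paragraph together with the bias step: the $O(1)$ and $O(\beta)$ bounds on the first two $\psi$-derivatives of the Boltzmann operator and the $O(\beta^{-(1+\delta)})$ bias bound are what force --- and are simultaneously satisfiable in --- the window $\omega(n^{1/\{2(1+\delta)\}})=\beta=o(n^{1/2})$; everything else is routine DML and CLT machinery.
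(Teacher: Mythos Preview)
Your proposal is correct and takes essentially the same approach as the paper: the paper packages the analysis of $\hat\theta^\beta$ into an abstract asymptotic-linearity theorem for linear moment equations with a growing parameter $\beta$ (verifying orthogonality in $g$, equicontinuity via the critical radius, and ``influence of $h$'' assumptions one by one), whereas you carry out the same computation directly on the specific moment, but the key technical ingredients are identical --- the $O(\beta^{-(1+\delta)})$ bound on $\E[\Delta_\tau e^{-\beta\Delta_\tau}]$ for the bias, the uniform $O(1)$ and $O(\beta)$ bounds on the first and second $\psi$-derivatives of the softmax via $w_\tau|A_\tau-\bar A|\lesssim 1/\beta$, and the triangular-array $L^2$-convergence argument for $n^{-1/2}\sum_i\tilde m_\beta(Z_i;\psi_0)$. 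One small technicality: the density bound $f_\tau(x)\le H/x^{1-\delta}$ is only assumed on $(0,c)$, so your integral $\int_0^\infty x^\delta e^{-\beta x}\,dx$ should be split at $c$ (or at $\alpha\asymp\log\beta/\beta$, as the paper does) with the tail controlled by the a.s.\ boundedness of $\Delta_\tau$ and the decay of $xe^{-\beta x}$; this does not change the $O(\beta^{-(1+\delta)})$ conclusion.
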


  The more expansive theorems that we define in subsequent sections also provide an exact form of the asymptotic influence functions and the asymptotic variances, which we omit in the main theorem for succinctness of exposition. Furthermore, we note that a simpler sufficient condition that implies the condition that the random variables $\{\max_{t} \psi_0 ^{\trans}\phi(t,X) - \psi_0 ^{\trans}\phi(\tau, X)\}_{\tau\in \mcT}$ each admits a density $f_\tau$, that satisfies $f_\tau(x)\leq H/x^{1-\delta}$, for some $0\leq H<\infty$ and $0<\delta\leq 1$, is that the variables $\{\psi_0 ^{\trans}\phi(\tau, X)\}_{\tau\in \mcT}$ admit a joint density, which is a very benign regularity assumption. This follows from the following lemma:

\begin{lemma}\label{lem:simpler}
   Let $(U_t)_{t\in \mathcal{T}}$ be a collection of real-valued random variables and write $U_{\rm{max}}=\max_{t\in \mathcal{T}}U_t$. Suppose that the random variables $(U_t)_{t\in \mathcal{T}}$ admit a joint density that is continuous at zero; then for all $t\in \mathcal{T}$, the probability measure of random variable $U_{\rm{max}}-U_t$ on $(0,\infty)$ is absolutely continuous with respect to the Lebesgue measure on $(0,\infty)$. Hence, in particular, this implies that each $U_{\rm{max}}-U_t$ admits a density that $f_\tau$ that satisfies $f_\tau(x)\leq H/x^{1-\delta}$ for some $0\leq H<\infty$ and $\delta=1.$
\end{lemma}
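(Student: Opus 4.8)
The plan is to reduce the statement to a fact about the pairwise differences $D_s := U_s - U_t$ for $s \neq t$, with $t \in \mathcal{T}$ fixed. Since $(U_r)_{r\in\mathcal{T}}$ admits a joint density, so does the two-coordinate marginal $g_{s,t}$ of $(U_t, U_s)$, and a nonzero linear functional of a vector with a density again admits a density; concretely $D_s$ has density $f_{D_s}(y) = \int_{\R} g_{s,t}(u, u+y)\,du$. I would then record the elementary deterministic observation that on the event $\{U_{\max} - U_t > 0\}$ the maximum of $(U_r)_r$ is attained at some index $s^\star \neq t$, so $U_{\max} - U_t = D_{s^\star}$, whence for every Borel set $B \subseteq (0,\infty)$
$$\{U_{\max} - U_t \in B\} \subseteq \bigcup_{s \neq t} \{D_s \in B\}.$$

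Applied to arbitrary Borel $B\subseteq(0,\infty)$, this inclusion says that the law of $U_{\max}-U_t$ restricted to $(0,\infty)$ is dominated by $\sum_{s\neq t}(\text{law of }D_s)$, a finite sum because $\mathcal{T}$ is finite. As each $D_s$ is absolutely continuous with density $f_{D_s}$, the law of $U_{\max}-U_t$ on $(0,\infty)$ is therefore absolutely continuous with respect to Lebesgue measure, with a density $f_\tau$ satisfying $f_\tau(x) \le \sum_{s\neq t} f_{D_s}(x)$ for almost every $x>0$. This already yields the first assertion; to obtain the density bound with $\delta=1$ it then suffices to show that each $f_{D_s}$ is bounded on some interval $(0,c_s)$, after which one takes $c=\min_{s\ne t} c_s$ and $H=\sum_{s\ne t}\sup_{(0,c)}f_{D_s}$.

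The local boundedness of $f_{D_s}$ at the origin is where the continuity hypothesis on the joint density is used, together with the almost sure boundedness of the $U_r$ inherited from the setting of Theorem~\ref{thm:main}: boundedness confines the integrand $u\mapsto g_{s,t}(u,u+y)$ to a fixed compact set in $u$, uniformly for $y$ near $0$, so that continuity of the density at the origin lets one pass to the limit under the integral sign and conclude that $y\mapsto f_{D_s}(y)$ is continuous — hence finite and locally bounded — on a neighbourhood of $0$. I expect this last step, namely extracting local boundedness of the pairwise-difference densities from the stated regularity of the joint density while arranging the domination needed to interchange limit and integral, to be the only genuinely delicate point; the set inclusion and the absolute-continuity deduction are essentially bookkeeping, once one recalls that a linear functional of an absolutely continuous random vector is again absolutely continuous.
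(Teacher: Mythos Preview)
Your absolute-continuity argument is correct and aligns with the paper's: the paper notes that $(U_s-U_t)_{s\ne t}$ inherits a joint density and hence its maximum is absolutely continuous, while you reach the same conclusion via the pairwise marginals and the inclusion $\{U_{\max}-U_t\in B\}\subseteq\bigcup_{s\ne t}\{D_s\in B\}$ for $B\subseteq(0,\infty)$. The paper's proof in fact stops at this point, recording only ``and hence the last sentence of Lemma~\ref{lem:simpler}'' for the $\delta=1$ bound without further argument.

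Your proposed justification of that last sentence, however, has a gap. The hypothesis is continuity of the joint density at the single point $0\in\R^{|\mathcal T|}$; this does not imply continuity of the bivariate marginal $g_{s,t}$ along the diagonal $\{(u,u):u\in K\}$, which is what you would need to interchange limit and integral in $f_{D_s}(y)=\int_K g_{s,t}(u,u+y)\,du$ as $y\to 0$. Continuity at one point says nothing about the integrand away from that point, and there is no obstruction to $g_{s,t}$ being unbounded near the diagonal outside a neighbourhood of the origin while remaining continuous at $(0,0)$. To make your dominated-convergence step go through you would need a stronger regularity assumption---global continuity, or at least local boundedness of the joint density along the diagonal---in addition to the almost-sure boundedness you already import from Theorem~\ref{thm:main}. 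As you correctly anticipated, this is indeed the delicate step; it simply does not follow from the hypothesis as literally stated.
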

%   Moreover, we note that the nuisance parameter $p_{2,\beta}^*$ depends on both $\beta$ and $\psi$, hence its convergence is not a direct implication of standard results in empirical process theory. To facilitate analysis it will be useful to define $p_{2,\beta,\psi}^*(S)=\E[\psi^{\trans}\phi(T_2, X)-\softmax_{\tau_2} \psi^{\trans}\phi(\tau_2, X)\mid S]$ and $\hat{p}_{2,\beta,\psi}$ the outcome of the estimation algorithm when regressing $\psi^{\trans}\phi(T_2, X)-\softmax_{\tau_2} \psi^{\trans}\phi(\tau_2, X)$ on $S$. As we show in the Supplementary Material it suffices that we assume for all $\epsilon>0$, as $n\to\infty$
% \begin{align*}
% \sup_{\beta>0}pr\left(\sup_{\psi\in\mathcal{N}}\|\hat p_{2,\beta,\psi} - p_{2,\beta,\psi}^*\|_2 \ge \epsilon\cdot n^{-1/4}\right)\to 0.
% \end{align*}
% for some neighborhood $\mathcal{N}$ of $\psi_0$.
% This would follow easily for instance from results on oracle inequalities via localized complexities and a uniform cover argument over the low dimensional parametric space that the parameter $\psi$ lies in.

Furthermore (as we show in Appendix~\ref{app:cfit}), the constraint on the critical radius of the nuisance function spaces can be lifted if one performs nested cross-fitting at each step of the recursion, at the cost of reducing sample size when training each regression.

\section{Asymptotic linearity Theorem for the Optimal Policy Value}\label{sec: aysmlinval}
In previous sections, we have established the asymptotic linearity of the first-period and second-period structural parameter estimators $\hat\theta,\hat\psi$ and thus constructed confidence intervals for the corresponding parameters. In this section, we will further establish asymptotic linearity for an estimator of the optimal policy value and thus construct a confidence interval for the value.

Recall that the value of the optimal policy is identified as $V^* = \E( Y_{\text{adjusted}}^* )$, where $Y_{\text{adjusted}}^*$ is defined in Equation~\eqref{eqn:adjusted-y}. It will be helpful to define the softmax counterpart:
\begin{align*}
    V_\beta^* = \E\{ Y - \psi_0 ^{\trans}\phi(T_2, X) - \theta_0 ^{\trans}\mu(T_1, S) + \softmax_{\tau_2} \psi_0^{\trans}\phi(\tau_2, X) + \softmax_{\tau_1} \theta_0^{\trans}\mu(\tau_1, S)\}.
\end{align*}
Then consider the estimator $\hat{V}$, defined as:
$$\frac{1}{n}\sum_{i=1}^n \{Y_i - \hat\psi^{\trans}\phi(T_{2,i}, X_i) - (\hat\theta^\beta)^{\trans}\mu(T_{1,i}, S_i) + \softmax_{\tau_2}\hat\psi^{\trans}\phi(\tau_2, X_i) + \softmax_{\tau_1}(\hat\theta^\beta)^{\trans}\mu(\tau_1,S_i)\}.$$
\begin{theorem}
    \label{thm: value_var}
Assume that the following almost sure finiteness conditions hold: 
$$\sup_S\sum_{\tau\in\mcT}\|\mu(\tau,S)\|_2,\ \sup_{\beta>0}|V_{\beta}^*|,\ |Y|<\infty \mbox{ a.s.}$$
Then under the conditions of Theorem~\ref{thm:main}, we have that
    $$n^{1/2}(\hat V - V_\beta^*)=n^{-1/2}\sum_{i=1}^n\rho_{V,*}(Z_i) + o_p(1)$$
with influence function
\begin{equation}\label{eq: inf}
\begin{aligned}
    \rho_{V,*}(Z)=~Y_{\text{adjusted}}^* -V^* 
&~+ \E\{\phi_{\infty}(X) - \phi(T_2,X)\}^{\trans}\rho_\psi(Z)\\
&~+ \E\{\mu_{\infty}(S)-\mu(T_1,S)\}^{\trans}\rho_\theta(Z),
\end{aligned}
\end{equation}
where 
\begin{align*}
 \rho_\psi(Z)=~&\E(\tilde{P}\tilde{P}^{\trans})^{-1}\{Y-\E(Y\mid X) - \psi_0^{\trans}\tilde{P}\}\tilde{P},\\
    \rho_{\theta}(Z) =~&   \E(\tilde{M}\tilde{M}^{\trans})^{-1} \left\{m_*(Z; \theta_0, \psi_0,g_0) + J_*^{\trans}\rho_{\psi}(Z)\right\},\\
m_*(Z;\theta, \psi, g) =~& \{\epsilon_1(\theta, \psi) - q(S) + p_2(S) + \theta^\trans p_1(S) \}\, \{\mu(T_1, S) - p_1(S)\}, \\
J_* =~& \E\left\{\left(\phi_\infty(X) - \phi(T_2, X)\right)\tilde{M}^{\trans}\right\}.
\end{align*}
and $\tilde{M} = \mu(T_1, S) - \E\{\mu(T_1, S)\mid S\}$ and $\tilde{P}=\phi(T_2, X) - \E\{\phi(T_2, X)\mid X\}$ and, if we denote $\mathcal{M}(X)=\{\tau: \psi_0^{\trans}\phi(\tau, X) = \max_{t}\psi_0^{\trans}\phi(t, X)\}$ and ${\cal L}(S)=\{\tau: \theta_0^{\trans}\mu(\tau, S) = \max_{t}\theta_0^{\trans}\mu(t, S)\}$, the set of optimal actions at each state, then
\begin{align*}
    \phi_\infty(X)=~&|\mathcal{M}(X)|^{-1}\sum_{\tau\in\mathcal{M}(X)}\phi(\tau, X), &
    \mu_\infty(S)=|\mathcal{L}(S)|^{-1}\sum_{\tau\in\mathcal{L}(S)}\mu(\tau, S).
\end{align*}
\end{theorem}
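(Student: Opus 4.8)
\textbf{Proof proposal for Theorem~\ref{thm: value_var}.}

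The plan is to reduce the analysis of $\hat V$ to a sum of i.i.d.\ terms plus lower-order remainders by mimicking, at the level of the value functional, the softmax bias-and-influence analysis already carried out for $\hat\theta^\beta$ in the proof of Theorem~\ref{thm:main}. First I would write $\hat V - V_\beta^* = (\hat V - \bar V_n^\beta) + (\bar V_n^\beta - V_\beta^*)$, where $\bar V_n^\beta$ is the empirical average that results from plugging the \emph{true} $\psi_0,\theta_0$ into the softmax-adjusted outcome. The second piece is a centered i.i.d.\ average of $Y_{\text{adjusted}}^{*,\beta} - V_\beta^*$ and, after controlling the gap $V_\beta^* - V^*$ and the $L_2$ gap between the $\beta$-softmax adjusted outcome and its $\beta=\infty$ limit (both $\to 0$ under the density/tail assumption on the margins $\max_t\psi_0^\trans\phi(t,X) - \psi_0^\trans\phi(\tau,X)$, exactly as in the $\hat\theta^\beta$ analysis), contributes the $Y_{\text{adjusted}}^* - V^*$ term of \eqref{eq: inf} up to $o_p(1/\sqrt n)$. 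For the first piece I would Taylor-expand the two smooth (in $\psi$ and $\theta$) maps $\psi\mapsto \softmax_{\tau_2}\psi^\trans\phi(\tau_2,X) - \psi^\trans\phi(T_2,X)$ and $\theta\mapsto\softmax_{\tau_1}\theta^\trans\mu(\tau_1,S) - \theta^\trans\mu(T_1,S)$ around $\psi_0,\theta_0$; the gradients are convex combinations of the $\phi(\tau,X)$ (resp.\ $\mu(\tau,S)$) with softmax weights, which converge to the uniform-over-argmax averages $\phi_\infty(X),\mu_\infty(S)$. Combining the expansion with the asymptotic-linearity expansions $\sqrt n(\hat\psi-\psi_0)=n^{-1/2}\sum_i\rho_\psi(Z_i)+o_p(1)$ and $\sqrt n(\hat\theta^\beta-\theta_0)=n^{-1/2}\sum_i\rho_\theta(Z_i)+o_p(1)$ from Theorem~\ref{thm:main}, and replacing empirical averages of the gradients by their expectations $\E\{\phi_\infty(X)-\phi(T_2,X)\}$ and $\E\{\mu_\infty(S)-\mu(T_1,S)\}$ (a uniform LLN step), yields the remaining two linear terms of \eqref{eq: inf}.

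The key steps, in order, would be: (i) establish the deterministic approximation $V_\beta^*\to V^*$ with rate $o(1/\sqrt n)$ using $\beta = \omega(n^{1/\{2(1+\delta)\}})$ and the hypothesis $f_\tau(x)\le H/x^{1-\delta}$ — this is the same softmax-bias lemma invoked for the moment $M(\theta,\psi_0,\cdot;\beta)$ in Theorem~\ref{thm:main}, applied now to the expectation of the softmax-adjusted outcome; (ii) control the variance/$L_2$ discrepancy between the $\beta$-softmax adjusted outcome and $Y_{\text{adjusted}}^*$ so that the i.i.d.\ term's influence function can be taken at $\beta=\infty$, again via $\beta=o(n^{1/2})$; (iii) perform the first-order Taylor expansion in $(\psi,\theta)$ with explicit second-order remainder, bounding the Hessian of the softmax (which scales like $\beta$ times bounded quantities) against $\|\hat\psi-\psi_0\|^2,\|\hat\theta^\beta-\theta_0\|^2 = O_p(1/n)$, so the remainder is $O_p(\beta/n)=o_p(1/\sqrt n)$; (iv) show the softmax-weighted gradients converge in $L_2$ to $\phi_\infty,\mu_\infty$ and swap empirical for population averages; (v) assemble the three pieces, check the cross terms vanish or are absorbed, and verify $\E\{\rho_{V,*}(Z)\}=0$.

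The main obstacle I expect is step (iii) combined with (iv): the softmax operator's first derivative in the structural parameter is $O(1)$ but its second derivative carries a factor of $\beta$, so the naive second-order Taylor remainder is only $O_p(\beta/n)$, which is $o_p(1/\sqrt n)$ precisely under $\beta=o(\sqrt n)$ — but this bound must be made uniform over the shrinking neighborhood of $(\psi_0,\theta_0)$ and, more delicately, must interact correctly with states $X$ where the argmax is near-tied, since there the softmax gradient transitions sharply and the pointwise Hessian bound $O(\beta)$ is tight. Handling these near-degenerate states is exactly where the tail condition $f_\tau(x)\le H/x^{1-\delta}$ re-enters: it guarantees the set of near-ties has small enough probability that the contribution of the large-Hessian region is controlled in expectation. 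I would therefore split the state space into a "well-separated" region (where a Hessian bound independent of $\beta$ holds and the expansion is routine) and a "near-tied" region (handled by the density tail bound, contributing $o(1/\sqrt n)$ in $L_2$), paralleling the region-splitting already used for the first-period moment. The remaining steps — the deterministic softmax-bias bound, the uniform LLN for the gradient averages, and the bookkeeping that the resulting influence function is mean-zero — are routine given the machinery of Theorem~\ref{thm:main}.
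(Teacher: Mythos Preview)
Your approach is correct and, once unpacked, is essentially the same as the paper's: the paper simply packages your steps (iii)--(v) by invoking its general asymptotic-linearity theorem (Theorem~\ref{thm:general-linearity}) with the value $V_\beta^*$ playing the role of the target parameter, the pair $(\theta_0,\psi_0)$ playing the role of the asymptotically-linear nuisance $h_0$, and no orthogonal nuisance $g$; your direct Taylor expansion in $(\psi,\theta)$ is exactly what that general theorem does internally for the $h$-influence term, and your i.i.d.\ piece is its $m_*$ term.

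Two small corrections worth noting. First, step~(i) in your plan---the $o(n^{-1/2})$ rate on $V_\beta^*-V^*$---is not needed for Theorem~\ref{thm: value_var} itself, which concerns $\hat V - V_\beta^*$; replacing the $\beta$-softmax centered average by the hard-max centered average only requires $\|Y_{\text{adjusted}}^{*,\beta}-Y_{\text{adjusted}}^*\|_{L_2}\to 0$, which follows from pointwise softmax$\to$max and dominated convergence without any rate or density hypothesis. The sharp bias rate is the content of the separate Lemma~\ref{lem: value_bias}. Second, the obstacle you anticipate in step~(iii) does not actually arise: the Hessian of the softmax in $\psi$ (and analogously in $\theta$) is bounded by $O(\beta)$ \emph{uniformly} over all states, including near-tied ones, by the elementary inequality $|u-v|/\{\exp(u-v)+\exp(v-u)\}\le e^{-1}$ (this is the paper's Lemma~\ref{lem: hes}); no region-splitting or density-tail argument is needed here, and the remainder is directly $O_p(\beta/n)=o_p(n^{-1/2})$ under $\beta=o(n^{1/2})$. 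The density condition enters only for the bias lemma, not for the variance/linearity part.
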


Theorem~\ref{thm: value_var} implies that we can build confidence intervals for the value $V_\beta^*$. Moreover, the bias lemma that follows implies that we can build confidence intervals for the true optimal policy value $V^*$. For each treatment $\tau\in\mathcal{T}$, define the random variables
\begin{align*}
    U_\tau^{\psi_0}= \psi_0^{\trans}\phi(\tau,X),\quad U_\tau^{\theta_0}=\theta_0^{\trans}\mu(\tau,S).
\end{align*}
Taking the maxima of $U_\tau^{\psi_0}$ and $U_\tau^{\theta_0}$ over all treatments $\tau\in\mathcal{T}$, we define
\begin{align*}
    U_{\max}^{\psi_0}= \max_{t\in \mathcal{T}} \left\{\psi_0^{\trans}\phi(t,X)\right\},\quad {U}_{\max}^{\theta_0}= \max_{t\in \mathcal{T}} \left\{\theta_0^{\trans}\mu(t,S)\right\}.
\end{align*}
We will also need to make the following assumption:
\begin{assumption}\label{ass: density}
    Assume that there is a constant $c>0$ such that  for each treatment $\tau\in\mathcal{T}$, we have that $$\|U_\tau^{\psi_0}\|_{\infty},\|U_\tau^{\theta_0}\|_{\infty}<\infty$$ and that on $(0,c)$ the random variables $U_{\max}^{\psi_0} - U_{\tau}^{\psi_0}$, ${U}_{\max}^{\theta_0} - {U}_{\tau}^{\theta_0}$ admit densities $f_\tau$ and $\tilde{f}_\tau$ that satisfy $f_\tau(x)\leq H/x^{1-\delta}$ and $\tilde f_\tau(x)\leq \tilde H/x^{1-\tilde\delta}$, for some $0\leq H, \tilde H<\infty$ and $0<\delta,\tilde\delta\leq 1$. 
\end{assumption}

\begin{lemma}[Softmax Bias Control for Optimal Policy Value]\label{lem: value_bias}
Suppose that Assumption~\ref{ass: density} holds. Assume that $\|\tilde{M}\|_{2}$ is uniformly bounded and that $\E(\tilde{M}\tilde{M}^{\trans})$ is strictly positive definite.
If $\beta=\omega(n^{1/\{2(1+\min\{\delta,\tilde \delta\})\}})$ then we get that as $n\to \infty$:
    $$n^{1/2}(V_\beta^* - V^*) = o(1).$$
\end{lemma}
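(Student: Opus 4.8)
\textbf{Proof plan for Lemma~\ref{lem: value_bias}.}

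The plan is to bound $|V_\beta^* - V^*|$ directly by comparing the two expressions term by term, since $\hat V$'s target and the true value differ only in that $\softmax_{\tau_2}$ and $\softmax_{\tau_1}$ replace the hard maxima $\max_{\tau_2}$ and $\max_{\tau_1}$. Concretely, writing $U_{\max}^{\psi_0} - \softmax_{\tau_2} U_{\tau_2}^{\psi_0} \geq 0$ and similarly for the $\theta_0$ terms, we have
\begin{align*}
|V_\beta^* - V^*| \le \E\bigl\{U_{\max}^{\psi_0} - \softmax_{\tau_2} U_{\tau_2}^{\psi_0}\bigr\} + \E\bigl\{U_{\max}^{\theta_0} - \softmax_{\tau_1} U_{\tau_1}^{\theta_0}\bigr\}.
\end{align*}
So it suffices to show each of these two scalar gaps is $o(n^{-1/2})$, and by symmetry I will focus on the $\psi_0$ term; the $\theta_0$ term is identical with $\tilde\delta$ in place of $\delta$, and taking $\min\{\delta,\tilde\delta\}$ at the end gives the stated hypothesis on $\beta$. (The conditions on $\tilde M$ and $\E(\tilde M\tilde M^\trans)$ are presumably inherited from what the full version of the earlier bias analysis needs; I would double-check whether they actually enter here or are stated for uniformity with Theorem~\ref{thm: value_var}.)

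The core estimate is a pointwise-then-integrate bound on the softmax gap. Fix $X$ and let $\Delta_\tau = U_{\max}^{\psi_0} - U_\tau^{\psi_0} \ge 0$ be the suboptimality gaps, with $\Delta_\tau = 0$ for $\tau \in \mathcal{M}(X)$. Then
\begin{align*}
U_{\max}^{\psi_0} - \softmax_{\tau} U_\tau^{\psi_0} = \sum_\tau \frac{e^{\beta U_\tau^{\psi_0}}}{\sum_t e^{\beta U_t^{\psi_0}}}\,\Delta_\tau = \frac{\sum_\tau \Delta_\tau e^{-\beta \Delta_\tau}}{\sum_t e^{-\beta \Delta_t}} \le \frac{1}{|\mathcal{M}(X)|}\sum_\tau \Delta_\tau e^{-\beta\Delta_\tau} \le \sum_\tau \Delta_\tau e^{-\beta \Delta_\tau},
\end{align*}
where I used that the denominator is at least $|\mathcal{M}(X)| \ge 1$. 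Taking expectations and using that the $\Delta_\tau$ are a.s.\ bounded (from $\|U_\tau^{\psi_0}\|_\infty < \infty$), it remains to bound $\E\{\Delta_\tau e^{-\beta\Delta_\tau}\}$ for each fixed $\tau$. Split the expectation at the threshold $c$ from Assumption~\ref{ass: density}: on $\{\Delta_\tau \ge c\}$ the integrand is at most $(\sup \Delta_\tau) e^{-\beta c}$, which decays exponentially in $\beta$ and is certainly $o(n^{-1/2})$ since $\beta \to \infty$; on $\{\Delta_\tau \in (0,c)\}$ we use the density bound $f_\tau(x) \le H x^{\delta-1}$ to get
\begin{align*}
\E\bigl\{\Delta_\tau e^{-\beta\Delta_\tau}\mathbbm{1}\{\Delta_\tau \in (0,c)\}\bigr\} \le H\int_0^{c} x\, e^{-\beta x} x^{\delta-1}\,dx \le H\int_0^\infty x^{\delta} e^{-\beta x}\,dx = \frac{H\,\Gamma(\delta+1)}{\beta^{1+\delta}}.
\end{align*}

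This gives $\E\{U_{\max}^{\psi_0} - \softmax_\tau U_\tau^{\psi_0}\} = O(\beta^{-(1+\delta)}) + O(e^{-\beta c})$, and analogously $O(\beta^{-(1+\tilde\delta)}) + O(e^{-\beta \tilde c})$ for the $\theta_0$ term. Multiplying by $n^{1/2}$, the hypothesis $\beta = \omega(n^{1/\{2(1+\min\{\delta,\tilde\delta\})\}})$ forces $n^{1/2}\beta^{-(1+\min\{\delta,\tilde\delta\})} \to 0$, hence $n^{1/2}(V_\beta^* - V^*) \to 0$, as claimed. The main obstacle is really just getting the density-tail computation right: the subtlety is that the density bound only holds on $(0,c)$ and is allowed to blow up like $x^{\delta-1}$ near zero, so one must check that $\int_0^c x^\delta e^{-\beta x}\,dx$ is genuinely the dominant contribution and that it scales as $\beta^{-(1+\delta)}$ rather than, say, $\beta^{-1}$; this is exactly where the exponent $1+\delta$ (and hence the required growth rate of $\beta$) comes from, and it is the only place the fine structure of the assumption is used. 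Everything else — interchanging sum and expectation over the finite action set $\mathcal{T}$, the triangle inequality splitting $V_\beta^* - V^*$ into two pieces, and boundedness bookkeeping — is routine.
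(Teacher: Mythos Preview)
Your proof is correct and follows essentially the same route as the paper: decompose $V_\beta^*-V^*$ into the two softmax-vs-max gaps, bound each softmax weight by $e^{-\beta\Delta_\tau}$, and then control $\E\{\Delta_\tau e^{-\beta\Delta_\tau}\}$ using the density bound from Assumption~\ref{ass: density}. The only difference is that you bound this last expectation directly via the Gamma integral $\int_0^\infty x^{\delta}e^{-\beta x}\,dx=\Gamma(\delta+1)\beta^{-(1+\delta)}$ after splitting at the fixed threshold $c$, whereas the paper invokes its Key Bias Building Block (Lemma~\ref{lem:second order}), which splits at a $\beta$-dependent threshold and incurs an extra lower-order $\log(\beta)/\beta^{2+\epsilon}$ term; your version is slightly more direct but yields the same leading $\beta^{-(1+\delta)}$ rate, and you are right that the $\tilde M$ conditions are not actually used in this lemma's proof.
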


Combining Theorem~\ref{thm: value_var} and Lemma~\ref{lem: value_bias} gives us the following final result:
\begin{corollary}[Main Corollary]
    Under the conditions of Theorem~\ref{thm: value_var} and Lemma~\ref{lem: value_bias}, the estimator $\hat V$ is asymptotically linear around $V^*$:
    $$n^{1/2}(\hat V - V^*)=n^{-1/2}\sum_{i=1}^n\rho_{V,*}(Z_i) + o_p(1)$$
with influence function $\rho_{V,*}$ as defined in \eqref{eq: inf}. Letting $z_{q}$ be the $q$-th quantile of the standard normal distribution, asymptotically valid confidence intervals for $V^*$, with target coverage level $\alpha$, can be constructed via any consistent estimate $\hat{\sigma}_V^2$ of the variance $\sigma_V^2 = \E\{\rho_{V,*}(Z)^2\}$ as:
\begin{align*}
    CI_{V}(\alpha) =~& [\hat{V} \pm z_{1-\alpha/2} \hat{\sigma}_{V}/\sqrt{n}], 
\end{align*}
where $z_q$ is the $q$-th quantile of the standard normal distribution.
\end{corollary}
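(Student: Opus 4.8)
The plan is to obtain the corollary by simply adding the two preceding displays: Theorem~\ref{thm: value_var}, which linearizes $\hat V$ around the softmax value $V_\beta^*$ with a \emph{fixed} influence function $\rho_{V,*}$, and Lemma~\ref{lem: value_bias}, which bounds the softmax approximation bias $V_\beta^*-V^*$ by $o(n^{-1/2})$. The one preliminary point I would settle is that the three rate windows on $\beta$ are jointly non-vacuous: Theorem~\ref{thm:main} (hence Theorem~\ref{thm: value_var}) needs $\beta=\omega(n^{1/\{2(1+\delta)\}})$ and $\beta=o(n^{1/2})$, while Lemma~\ref{lem: value_bias} needs $\beta=\omega(n^{1/\{2(1+\min\{\delta,\tilde\delta\})\}})$. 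Since $\min\{\delta,\tilde\delta\}\le\delta$, the last is the strongest lower bound, and because $\min\{\delta,\tilde\delta\}\in(0,1]$ its exponent is strictly below $1/2$; hence any $\beta_n=n^a$ with $1/\{2(1+\min\{\delta,\tilde\delta\})\}<a<1/2$ satisfies all three requirements simultaneously, and I would fix such a $\beta$.

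With $\beta$ so chosen, I would decompose $n^{1/2}(\hat V - V^*) = n^{1/2}(\hat V - V_\beta^*) + n^{1/2}(V_\beta^* - V^*)$. By Theorem~\ref{thm: value_var} the first term equals $n^{-1/2}\sum_{i=1}^n\rho_{V,*}(Z_i)+o_p(1)$, and by Lemma~\ref{lem: value_bias} the second term is $o(1)=o_p(1)$; adding yields the asserted asymptotic linearity
\[
n^{1/2}(\hat V - V^*) = n^{-1/2}\sum_{i=1}^n\rho_{V,*}(Z_i)+o_p(1),
\]
with $\rho_{V,*}$ exactly as in \eqref{eq: inf} and not depending on $n$ or $\beta$.

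It then remains to turn this expansion into a central limit theorem. First I would check $\E\{\rho_{V,*}(Z)\}=0$: the leading term $Y_{\text{adjusted}}^*-V^*$ is centered since $V^*=\E\{Y_{\text{adjusted}}^*\}$ (the identification following~\eqref{eqn:adjusted-y}), and the remaining two summands are deterministic linear maps applied to $\rho_\psi(Z)$ and $\rho_\theta(Z)$, each of which has mean zero by Theorem~\ref{thm:main}. Next I would verify $\sigma_V^2=\E\{\rho_{V,*}(Z)^2\}<\infty$: this is bookkeeping over the almost-sure boundedness hypotheses collected in Theorems~\ref{thm:main} and~\ref{thm: value_var} and Lemma~\ref{lem: value_bias} (boundedness of $Y$, of $\sum_\tau\|\phi(\tau,X)\|_2$ and $\sum_\tau\|\mu(\tau,S)\|_2$, of $V_\beta^*$, of $\tilde M$ and $\tilde P$, and so on), together with the existence of $\E(\tilde M\tilde M^{\trans})^{-1}$ and $\E(\tilde P\tilde P^{\trans})^{-1}$, which keeps $\rho_\psi,\rho_\theta$ and the constant matrices $J_*$, $\E\{\phi_\infty(X)-\phi(T_2,X)\}$, $\E\{\mu_\infty(S)-\mu(T_1,S)\}$ finite. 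Since $Z_1,\dots,Z_n$ are i.i.d.\ and $\rho_{V,*}\in L^2$ with zero mean, the classical central limit theorem gives $n^{-1/2}\sum_{i=1}^n\rho_{V,*}(Z_i)\xrightarrow{d}N(0,\sigma_V^2)$, and combining this with the previous display through Slutsky's theorem gives $n^{1/2}(\hat V - V^*)\xrightarrow{d}N(0,\sigma_V^2)$.

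Finally, for a consistent estimator $\hat\sigma_V^2\xrightarrow{p}\sigma_V^2$ (with $\sigma_V^2>0$), Slutsky's theorem gives $n^{1/2}(\hat V - V^*)/\hat\sigma_V\xrightarrow{d}N(0,1)$, so
\[
pr\big(V^*\in CI_{V}(\alpha)\big)=pr\big(|n^{1/2}(\hat V - V^*)/\hat\sigma_V|\le z_{1-\alpha/2}\big)\longrightarrow 1-\alpha,
\]
which is the claimed coverage guarantee. The only non-routine steps I anticipate are confirming that the three $\beta$-windows overlap and tracking carefully which boundedness assumption controls which factor of $\rho_{V,*}$ so as to conclude $\rho_{V,*}\in L^2$; the substantive analytic work — the orthogonal-moment linearization and the softmax influence/bias analysis — has already been carried out in Theorem~\ref{thm: value_var} and Lemma~\ref{lem: value_bias}, so no serious obstacle remains.
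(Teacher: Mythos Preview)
Your proposal is correct and follows exactly the approach the paper intends: the paper itself gives no separate proof for this corollary, merely stating that it follows by ``combining Theorem~\ref{thm: value_var} and Lemma~\ref{lem: value_bias},'' and your decomposition $n^{1/2}(\hat V - V^*) = n^{1/2}(\hat V - V_\beta^*) + n^{1/2}(V_\beta^* - V^*)$ is precisely that combination. Your additional care in verifying that the $\beta$-rate windows overlap, that $\E\{\rho_{V,*}(Z)\}=0$, and that $\sigma_V^2<\infty$ so the CLT applies, fills in details the paper leaves implicit.
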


\section{Monte Carlo Experiments}
We investigate performance of our inference method on a simple data generating process and demonstrate the efficacy of our method and the importance of choosing the softmax parameter $\beta$, based on our theorem. We consider observational data from the structural equation model:
\begin{align*}
    S =~& N(0, I_5), &
    T_1 =~& \text{Bernoulli}(\text{Logistic}(S_1)),\\
    X =~& \alpha_1 T_1 + S + N(0, I_5), &
    T_2 =~& \text{Bernoulli}(\text{Logistic}(X_1)), \\
    Y =~& \alpha_2\, (X_1 + 1)\, T_2 + X_1 + N(0, 1).
\end{align*}
where $S_1, X_1$ denote the first components of the random vectors $S$ and $X$ respectively. As we show in Appendix~\ref{app:experiments}, the second period blip effect $\gamma_2$ is of the form:
\begin{align*}
    \gamma_2(T_2, X) =~& \alpha_2 (X_1 + 1) T_2,  &
    \phi(T_2, X) =~& (T_2, X_1 T_2), &
    \psi_0 = (\alpha_2, \alpha_2),
\end{align*}
while the blip effect in the first period, under the optimal continuation policy can be very well approximated as a linear function of the feature map $\mu(T_1, S) = (T_1, S_1 T_1)$. Moreover, $\alpha_1$ controls the magnitude of the blip effect of the first period treatment.
%Moreover, in any case the truly optimal policy at period $0$, will be $\pi_1^*(S) = 1$.

For each $n, \alpha_1, \alpha_2$ and softmax parameter $\beta$ specification, we run $100$ monte-carlo experiments and calculate coverage of the estimated 95\% and 90\% confidence intervals (and the standard error of the coverage) for the optimal policy value. We used Random Forest regression for estimating all the nuisance functions in the estimation process and no sample splitting.\footnote{Code for replicating this experiment can be found in this \href{https://colab.research.google.com/github/syrgkanislab/optimal_dynamic_regime/blob/main/FinalForPaperSoftmaxInference.ipynb}{Jupyter Notebook} from this \href{https://github.com/syrgkanislab/optimal_dynamic_regime}{Github Repository}.} We report results in Figures~\ref{fig:cov} and \ref{fig:cov2}. We chose pairs of $\alpha_1, \alpha_2$ that identify starkly different estimation regimes, thereby capturing different failure modes of poorly constructed confidence intervals due to irregularity of the optimal policy estimand. We see that $\beta\approx n^{1/4+\epsilon}$ (as indicated by Theorem~\ref{thm:main} with $\delta=1$) produces consistently good coverage above 90\% across all specifications, while $\beta$ that is either below $n^{1/4}$ or above $n^{1/2}$, undercovers in at least one specification (highlighted). For $n=10k$ the recommended specification always achieves almost nominal coverage (up to standard error), while the other two specifications fail blatantly in at least one specification. 

\begin{figure}[H]
\centering
    \scriptsize
\bgroup
\def\arraystretch{1}% 
\setlength\tabcolsep{4pt}
\begin{tabular}{l|ll|ll|ll|ll}
\toprule
$\beta$ & \multicolumn{2}{l}{$(\alpha_1,\alpha_2)=(0, 0)$} & \multicolumn{2}{l}{$(\alpha_1,\alpha_2)=(0, 1)$} & \multicolumn{2}{l}{$(\alpha_1,\alpha_2)=(1, 0)$} & \multicolumn{2}{l}{$(\alpha_1,\alpha_2)=(1, 1)$} \\
{} &                     $n=1000$ &        $n=10000$ &                     $n=1000$ &        $n=10000$ &                     $n=1000$ &        $n=10000$ &                     $n=1000$ &        $n=10000$ \\
\midrule
$n^{0.15}$ &              $0.94 \pm 0.02$ &  $0.95 \pm 0.02$ &              $0.89 \pm 0.03$ &  $0.92 \pm 0.03$ &              $\mathbf{0.79 \pm 0.04}$ &  $\mathbf{0.82 \pm 0.04}$ &              $0.88 \pm 0.03$ &  $0.92 \pm 0.03$ \\
$n^{0.26}$ &              $0.91 \pm 0.03$ &  $0.95 \pm 0.02$ &              $0.90 \pm 0.03$ &  $0.94 \pm 0.02$ &              $0.90 \pm 0.03$ &  $0.97 \pm 0.02$ &              $0.90 \pm 0.03$ &  $0.92 \pm 0.03$ \\
$n^{0.75}$ &              $\mathbf{0.79 \pm 0.04}$ &  $0.92 \pm 0.03$ &              $0.87 \pm 0.03$ &  $0.95 \pm 0.02$ &              $0.95 \pm 0.02$ &  $0.96 \pm 0.02$ &              $0.91 \pm 0.03$ &  $0.92 \pm 0.03$ \\
\bottomrule
\end{tabular}
\egroup
\caption{Coverage of 95\% confidence intervals for optimal policy value.}
\label{fig:cov}
\end{figure}
\begin{figure}[H]
\centering
    \scriptsize
\bgroup
\def\arraystretch{1}% 
\setlength\tabcolsep{3pt}
\begin{tabular}{l|ll|ll|ll|ll}
\toprule
$\beta$ & \multicolumn{2}{l}{$(\alpha_1,\alpha_2)=(0, 0)$} & \multicolumn{2}{l}{$(\alpha_1,\alpha_2)=(0, 1)$} & \multicolumn{2}{l}{$(\alpha_1,\alpha_2)=(1, 0)$} & \multicolumn{2}{l}{$(\alpha_1,\alpha_2)=(1, 1)$} \\
{} &                     $n=1000$ &        $n=10000$ &                     $n=1000$ &        $n=10000$ &                     $n=1000$ &        $n=10000$ &                     $n=1000$ &        $n=10000$ \\
\midrule
$\beta=n^{0.15}$ &              $0.85 \pm 0.04$ &  $0.89 \pm 0.03$ &              $0.87 \pm 0.03$ &  $0.82 \pm 0.04$ &              $\mathbf{0.68 \pm 0.05}$ &  $\mathbf{0.72 \pm 0.04}$ &              $0.82 \pm 0.04$ &  $0.86 \pm 0.03$ \\
$\beta=n^{0.26}$ &              $0.83 \pm 0.04$ &  $0.88 \pm 0.03$ &              $0.82 \pm 0.04$ &  $0.89 \pm 0.03$ &              $0.88 \pm 0.03$ &  $0.88 \pm 0.03$ &              $0.81 \pm 0.04$ &  $0.86 \pm 0.03$ \\
$\beta=n^{0.75}$ &              $\mathbf{0.69 \pm 0.05}$ &  $\mathbf{0.77 \pm 0.04}$ &              $\mathbf{0.74 \pm 0.04}$ &  $0.90 \pm 0.03$ &              $0.83 \pm 0.04$ &  $0.89 \pm 0.03$ &              $0.79 \pm 0.04$ &  $0.87 \pm 0.03$ \\
\bottomrule
\end{tabular}
\egroup
\caption{Coverage of 90\% confidence intervals for optimal policy value.}
\label{fig:cov2}
\end{figure}

\section*{Acknowledgement}
This work was partially supported by a 2023 Amazon Research Award.

\vspace*{-10pt}
\bibliography{refs}
\bibliographystyle{plain}

\appendix
\section*{Appendix}
The Appendix includes further details on the characterization and identification of the optimal dynamic policies and omitted proofs.
\section{Preliminary Definitions and Notation}

We introduce here some notation and basic definitions that will be used throughout the appendix. 

\begin{definition}[Uniform in $\beta$ Convergence]
Let $(X_n^{\beta})$ be a sequence of random variables. We say that $X_n^\beta=o_{p,\rm{unif}(\beta)}(1)$ if  for all $\epsilon>0$ the following holds: $$\lim_{n\to\infty}\sup_{\beta>0}\, pr\{|X_n^\beta|\ge \epsilon\}=0.$$ 
Similarly we say that $X_n^\beta=O_{p,\rm{unif}(\beta)}(1)$ if for all $\delta>0$ there is an $M>0$ such that $$\sup_{\beta>0,n\in \mathbb{N}} pr\{|X_n^\beta|>M\}\le \delta.$$ 
\end{definition}

The condition $X_{n}^\beta = o_{p, \rm{unif}(\beta)}(1)$ is significantly weaker than assuming that $\sup_{\beta>0}|X_n^{\beta}|=o_p(1)$, which would often prove to be a prohibitively strong assumption in many settings. Moreover, we say that a sequence of real numbers $(x_\beta)$ satisfies that $x_\beta = o_\beta(1)$ if $x_\beta\to 0$ as $\beta\to\infty.$

For any fixed or random vector $v$, we denote with $\|v\|_{p}=\{\sum |v_i|^p\}^{1/p}$  for any $p\ge 1$, and with $\|v\|_{\infty}=\max |v_i|$, where $v_i$ are the individual components of vector $v$. For any fixed or random matrix $A,$ we denote with $\|A\|_{\infty}=\max|A_{j, k}|$  where $A_{j, k}$ are the individual components of matrix $A$. For any random variable $X$, we denote with $\|X\|_{L_p}=\{\E(|X|^p)\}^{1/p}$ for any $p\ge 1$, and with $\|X\|_{L_\infty}=\inf\{c\geq 0: pr(|X| \le c) = 1\}$. Moreover, for any vector-valued function $g$ that takes as input a random variable $X$, we define $\|g\|_2 = [\E_X\{\|g(X)\|_2^2\}]^{1/2}$. For any random or fixed matrix $A$, we define $\|A\|_{op}$ as the operator norm of the matrix associated with matrix multiplication in $\ell_2$ space.

We also define the short-hand notation for the feature map of the softmax policy:
\begin{align*}
    \phi_{\beta,\psi}(X) = \sum_{\tau\in\mcT} W_{\tau}^{\beta,\psi} \phi(\tau, X), \quad
    W_\tau^{\beta,\psi} =~ \frac{\exp\{\beta\, \psi ^{\trans}\phi(\tau, X)\}}{\sum_{t\in\mcT} \exp\{\beta\, \psi ^{\trans}\phi(t, X)\}}, \\
    \phi_{\beta}(X) = \sum_{\tau\in\mcT} W_{\tau}^{\beta} \phi(\tau, X), \quad
    W_\tau^\beta = \frac{\exp\{\beta\, \psi_0 ^{\trans}\phi(\tau, X)\}}{\sum_{t\in\mcT} \exp\{\beta\, \psi_0 ^{\trans}\phi(t, X)\}} .
\end{align*}

\section{Main Bias Lemma: Controlling Softmax Error}
 
 One important component of the proof for Theorem~\ref{thm:main} is to show that the bias introduced by the softmax approximation vanishes faster than $n^{-1/2}$, if the temperature parameter $\beta$ grows faster than $n^{1/\{2(1+\delta)\}}$ for some appropriately chosen $\delta\in(0,1]$. We remind that $\theta_0^\beta$ is the solution to the softmax approximate moment condition $M(\theta, \psi_0, q^*, p_1^*, p_{2,\beta}^*;\beta)=0$ with respect to $\theta$ and $\theta_0$ is the true structural parameter of the first period blip effect, which can also equivalently be thought as the solution to the later moment condition for $\beta=\infty$. To control the bias $\sqrt{n} (\theta_0^\beta-\theta_0)$,
we need to control the discrepancy between $\max_{\tau\in \mcT} \psi_0^{\trans}\phi(\tau, X)$ and $\softmax_{\tau\in \mcT} \psi_0^{\trans}\phi(\tau, X)$. This is exactly the crux of the proof of the following bias lemma.

\begin{lemma}[Softmax Bias Control]\label{lem:bias}
For each treatment $\tau\in\mathcal{T}$, define random variable
\begin{align*}
    U_\tau= \psi_0^{\trans}\phi(\tau,X).
\end{align*}
Taking the maximum of $U_\tau$ over all treatments $\tau\in\mathcal{T}$, we define
\begin{align*}
    U_{\max}= \max_{t\in \mathcal{T}} \left\{\psi_0^{\trans}\phi(t,X)\right\}.
\end{align*}
Assume that there is a constant $c>0$ such that  for each treatment $\tau\in\mathcal{T}$, we have that $\|U_\tau\|_{\infty}<\infty$ and that on $(0,c)$ the random variable $U_{\max} - U_\tau$ admits a density $f_\tau$, that satisfies $f_\tau(x)\leq H/x^{1-\delta}$, for some $0\leq H<\infty$ and $0<\delta\leq 1$. Suppose that $\|\tilde{M}\|_{2}$ is uniformly bounded and that $\E(\tilde{M}\tilde{M}^{\trans})$ is strictly positive definite.
If $\beta = \omega(n^{1/\{2(1+\delta)\}})$ then we get that as $n\to \infty$:
\begin{align*}
    \sqrt{n}(\theta_0^\beta - \theta_0) = o(1).
\end{align*}
\end{lemma}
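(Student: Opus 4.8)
The plan is to reduce the vector claim to a single scalar estimate, namely the expected gap between a maximum and its softmax. First I would exploit that the moment $M(\theta,\psi_0,q^*,p_1^*,p_{2,\beta}^*;\beta)$ is affine in $\theta$. Plugging the definitions of $\epsilon_1^\beta$, $q^*$, $p_1^*$, $p_{2,\beta}^*$ into \eqref{eqn:apx-moment-def} and abbreviating $\tilde Y = Y-\E\{Y\mid S\}$, $\tilde M=\mu(T_1,S)-\E\{\mu(T_1,S)\mid S\}$, $\tilde\Phi=\psi_0^\trans\phi(T_2,X)-\E\{\psi_0^\trans\phi(T_2,X)\mid S\}$, $A_\beta = \softmax_{\tau\in\mcT}U_\tau$ and $\tilde A_\beta = A_\beta-\E\{A_\beta\mid S\}$, the first bracket collapses to $\tilde Y-\tilde\Phi+\tilde A_\beta-\theta^\trans\tilde M$ and the second to $\tilde M$, so $M(\theta,\dots;\beta)=\E\{(\tilde Y-\tilde\Phi+\tilde A_\beta-\theta^\trans\tilde M)\tilde M\}$. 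Since $\E\{\tilde M\tilde M^\trans\}$ is invertible this yields the closed form $\theta_0^\beta=\E\{\tilde M\tilde M^\trans\}^{-1}\E\{(\tilde Y-\tilde\Phi+\tilde A_\beta)\tilde M\}$. Subtracting the same identity at $\beta=\infty$ (where $\theta_0=\theta_0^\infty$ and $A_\infty=U_{\max}$), the $\beta$-free terms $\tilde Y,\tilde\Phi$ cancel, and the conditional-mean piece of $\tilde A_\beta-\tilde A_\infty$ disappears because $\E\{\tilde M\mid S\}=0$, leaving
$$\theta_0^\beta-\theta_0 = \E\{\tilde M\tilde M^\trans\}^{-1}\,\E\{(A_\beta-U_{\max})\,\tilde M\}.$$

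Next I would use that $\E\{\tilde M\tilde M^\trans\}^{-1}$ is a fixed matrix of finite operator norm and that each coordinate of $\tilde M$ is a.s.\ bounded, so a coordinatewise H\"older bound gives $\|\theta_0^\beta-\theta_0\|_2 \lesssim \E\{U_{\max}-A_\beta\}$ (the softmax being a convex combination of the $U_\tau$, it never exceeds $U_{\max}$, so the gap is nonnegative). To control this gap I would write the softmax weights as $W_\tau=e^{-\beta D_\tau}/\sum_t e^{-\beta D_t}$ with $D_\tau:=U_{\max}-U_\tau\ge 0$; since at least one $t$ attains the max, the denominator is $\ge 1$, so $W_\tau\le e^{-\beta D_\tau}$ and hence $U_{\max}-A_\beta=\sum_\tau W_\tau D_\tau\le\sum_\tau D_\tau e^{-\beta D_\tau}$.

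The substantive step — and the one I expect to be the main obstacle — is bounding $\E\{D_\tau e^{-\beta D_\tau}\}$ using the hypothesis that $D_\tau$ admits a density $f_\tau\le H x^{\delta-1}$ on $(0,c)$. The mass on $\{D_\tau=0\}$ contributes nothing, so I would split the integral over $(0,c)$ and $[c,\infty)$: on $(0,c)$, the substitution $u=\beta x$ gives $\int_0^c xe^{-\beta x}f_\tau(x)\,dx \le H\int_0^\infty x^\delta e^{-\beta x}\,dx = H\Gamma(1+\delta)\beta^{-(1+\delta)}$; on $[c,\infty)$, a.s.\ boundedness of $D_\tau$ together with monotonicity of $x\mapsto xe^{-\beta x}$ past $1/\beta<c$ for large $\beta$ gives a contribution $\lesssim e^{-\beta c}$. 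Summing over the finitely many $\tau\in\mcT$ yields $\|\theta_0^\beta-\theta_0\|_2 \lesssim \beta^{-(1+\delta)}+e^{-\beta c}$.

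Finally I would multiply by $\sqrt n$: the hypothesis $\beta=\omega(n^{1/\{2(1+\delta)\}})$ makes $\beta^{1+\delta}=\omega(n^{1/2})$, hence $\sqrt n\,\beta^{-(1+\delta)}=o(1)$, and since $\beta$ grows polynomially in $n$ we also have $\sqrt n\,e^{-\beta c}=o(1)$; therefore $\sqrt n(\theta_0^\beta-\theta_0)=o(1)$. The only points needing a little care are ties in the argmax (several $D_\tau=0$) and a possible atom of $D_\tau$ at $0$, both harmless because they add nothing to $\sum_\tau D_\tau e^{-\beta D_\tau}$; the Gamma-integral estimate is otherwise routine, so the density-driven bound $\E\{D_\tau e^{-\beta D_\tau}\}\lesssim\beta^{-(1+\delta)}+e^{-\beta c}$ is the single genuinely new ingredient.
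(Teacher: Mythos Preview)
Your proof is correct and follows essentially the same route as the paper: reduce $\theta_0^\beta-\theta_0$ via the linear moment to $\E\{\tilde M\tilde M^\trans\}^{-1}\E\{(A_\beta-U_{\max})\tilde M\}$, bound this by $\sum_\tau\E\{D_\tau e^{-\beta D_\tau}\}$, and control the latter using the density hypothesis $f_\tau(x)\le Hx^{\delta-1}$ near zero. The one minor difference is in the tail of the key estimate: you split the expectation at the fixed threshold $c$ and obtain an $O(e^{-\beta c})$ remainder via the monotonicity of $x e^{-\beta x}$, whereas the paper (its Lemma~\ref{lem:second order}) splits at $(1+\epsilon)\log\beta/\beta$ and obtains $O(\log\beta/\beta^{2+\epsilon})$; your version is slightly sharper and more direct, but either suffices since the dominant term $H\Gamma(1+\delta)\beta^{-(1+\delta)}$ is the same.
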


  To illustrate the main argument for the proof of Lemma~\ref{lem:bias}, consider the case when there are only two actions, i.e. $\mcT=\{0, 1\}$. In this binary treatment case, letting $U_1=\psi_0^{\trans}\phi(1, X)$ and that $U_0=\psi_0^{\trans}\phi(0,X)=0$, the term difference between the softmax and the max over the quantities $\{U_0, U_1\}$ is:
$$
     U_1 \left\{\mathbbm{1}_{\{U_1 \geq 0\}} - \frac{1}{1 + \exp(-\beta U_1)}\right\}
$$
which simplifies to:
 \begin{align*}
     \frac{|U_1|  \exp(-\beta U_1)}{1 + \exp\{-\beta U_1\}} \mathbbm{1}_{\{U_1 \geq 0\}} + \frac{|U_1|  \exp(\beta U_1)}{1 + \exp(\beta U_1)} \mathbbm{1}_{\{U_1 < 0\}} = \frac{|U_1|  \exp(-\beta |U_1|)}{1 +  \exp(-\beta |U_1|)}
     \leq~& |U_1|  \exp(-\beta |U_1|).
 \end{align*}
Thus the bias term $\sqrt{n}(\theta_0^{\beta} - \theta_0)$ will be roughly upper bounded by $O(\sqrt{n} \E\{|U_1|  \exp(-\beta |U_1|)\})$. Hence, if we can control quantities of the form $\E\{|U_1|  \exp(-\beta |U_1|)\}$, then we will be able to control the bias. For instance, if we can show that quantities of the form $\E\{|U_1|  \exp(-\beta |U_1|)\}$ decay faster than $1/\beta^{1+\delta}$, then we would need that $n^{1/2}/\beta^{1+\delta}=o(1)$, or equivalently, $\beta=\omega(n^{1/2(1+\delta)})$, which is our desired target result. To prove such a statement we develop the following key lemma. 

\begin{lemma}[Key Bias Building Block]\label{lem:second order}
Let $U$ be a non-negative random variable. Assume that there is a constant $c>0$ such that for all $x\in (0,c)$ the random variable $U$ admits a density $f$, that satisfies $f(x)\leq H/x^{1-\delta}$, for some $0\leq H<\infty$ and $0<\delta\leq 1$. Choose any $\epsilon> 0$. Then for any $\beta$ that is large enough, such that $\frac{(1+\epsilon)\log(\beta)}{\beta}\leq c$ and $(1+\epsilon)\log(\beta) > 1$, we have 
$$\E\{U\exp\left(-\beta U\right)\} \leq \frac{H}{\beta^{1+\delta}} + \frac{(1+\epsilon) \log(\beta)}{\beta^{2+\epsilon}}.$$
\end{lemma}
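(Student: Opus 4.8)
The plan is to split the expectation $\E\{U\exp(-\beta U)\}$ into two regions according to whether $U$ is small or large, at the threshold $t_\beta := (1+\epsilon)\log(\beta)/\beta$, which by hypothesis lies in $(0,c)$ for $\beta$ large enough. First I would write
\begin{align*}
\E\{U\exp(-\beta U)\} = \E\{U\exp(-\beta U)\mathbbm{1}_{U\le t_\beta}\} + \E\{U\exp(-\beta U)\mathbbm{1}_{U> t_\beta}\}.
\end{align*}

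For the small-$U$ region, since $U\le t_\beta\le c$ the density bound $f(x)\le H/x^{1-\delta}$ applies on the whole range of integration, so I would bound
\begin{align*}
\E\{U\exp(-\beta U)\mathbbm{1}_{U\le t_\beta}\} \le \int_0^{t_\beta} x\, e^{-\beta x}\, \frac{H}{x^{1-\delta}}\,dx = H\int_0^{t_\beta} x^{\delta} e^{-\beta x}\,dx \le H\int_0^{\infty} x^{\delta} e^{-\beta x}\,dx.
\end{align*}
The last integral is $\Gamma(1+\delta)/\beta^{1+\delta}$ via the substitution $u=\beta x$; since $\delta\in(0,1]$ one has $\Gamma(1+\delta)\le 1$ (the Gamma function on $[1,2]$ is bounded by $1$, with minimum around $0.8856$), giving the clean bound $H/\beta^{1+\delta}$, the first term of the claim. (If one prefers to avoid invoking properties of $\Gamma$, one can instead crudely bound $x^\delta \le 1$ is wrong on $(1,t_\beta)$, so the Gamma route is cleanest; alternatively keep $\Gamma(1+\delta)$ as an absolute constant and absorb it, but matching the stated constant $H$ exactly needs $\Gamma(1+\delta)\le1$.)

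For the large-$U$ region, the idea is that the exponential factor is already tiny: when $U> t_\beta$ we have $\exp(-\beta U) < \exp(-\beta t_\beta) = \exp(-(1+\epsilon)\log\beta) = \beta^{-(1+\epsilon)}$. Then, using that $x\mapsto x e^{-\beta x}$ is decreasing for $x> 1/\beta$ (and $t_\beta> 1/\beta$ since $(1+\epsilon)\log\beta>1$ by hypothesis), I would bound $U\exp(-\beta U)\mathbbm{1}_{U>t_\beta}\le t_\beta \exp(-\beta t_\beta) \le$ no — more carefully: on $\{U>t_\beta\}$, $U e^{-\beta U}\le t_\beta e^{-\beta t_\beta}$ by monotonicity, hence $\E\{U e^{-\beta U}\mathbbm{1}_{U>t_\beta}\}\le t_\beta e^{-\beta t_\beta} = \frac{(1+\epsilon)\log\beta}{\beta}\cdot\beta^{-(1+\epsilon)} = \frac{(1+\epsilon)\log\beta}{\beta^{2+\epsilon}}$, which is exactly the second term. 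Adding the two bounds yields the lemma.

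The main obstacle — really the only subtle point — is making the two regions fit together cleanly with the stated constants: one must check that the hypotheses $t_\beta\le c$ and $(1+\epsilon)\log\beta>1$ are exactly what is needed (the former so the density bound covers the small region, the latter so that $t_\beta>1/\beta$ and the map $xe^{-\beta x}$ is monotone decreasing past $t_\beta$), and one must be slightly careful that $\Gamma(1+\delta)\le 1$ for $\delta\in(0,1]$ so that the first term matches $H/\beta^{1+\delta}$ rather than picking up a constant factor. Everything else is a routine one-variable integral estimate.
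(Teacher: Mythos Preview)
Your proposal is correct and follows essentially the same approach as the paper: the same threshold $t_\beta=(1+\epsilon)\log(\beta)/\beta$, the same two-region split, the density bound plus the substitution $v=\beta x$ on the small region, and the monotonicity of $x\mapsto xe^{-\beta x}$ past $1/\beta$ on the large region. You are even explicit about the point the paper leaves implicit, namely that $\int_0^\infty v^\delta e^{-v}\,dv=\Gamma(1+\delta)\le 1$ for $\delta\in(0,1]$, which is exactly what is needed to match the stated constant $H$.
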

\begin{proof}
Let $\epsilon>0$. Suppose that $\beta$ is large enough such that $\alpha=\frac{(1+\epsilon)\log(\beta)}{\beta}\le c$ and such that $(1+\epsilon)\log(\beta)> 1$. Then we remark that the following decomposition holds
$$
\E\left\{\beta^{2}U\exp\left(-\beta U\right)\right\}
=A_\beta
+ B_\beta,$$
\begin{align*}
A_\beta:=~&\E\left[\beta^{2}U\exp\left(-\beta U\right)\mathbbm{1}\left\{0<U\leq \alpha\right\}\right], &
B_\beta:=~&\E\left[\beta^{2}U\exp\left(-\beta U\right)\mathbbm{1}\left\{U> \alpha\right\}\right].
\end{align*}
We will successively upper bound each of the terms $A_{\beta}$ and $B_{\beta}$. Firstly we remark that as ${(1+\epsilon)\log(\beta)}/{\beta}\le c$ then by assumption we know that there are constants $H<\infty$ and $0<\delta\leq 1$, such that the density of $U$ is upper bounded by $H/x^{1-\delta}$ for any $x\in (0, \alpha]$. Then the following holds
\begin{align*}
A_\beta =~& \E\left[\beta^{2}U\exp\left(-\beta U\right)\mathbbm{1}\left\{0 < U\leq \alpha\right\}\right]= \int
_{0}^{\alpha}\beta^{2}u
\exp(-\beta u) f(u) du\\
\leq~& H\int
_{0}^{\alpha}\beta^{2}u
\exp(-\beta u) \frac{1}{u^{1-\delta}} du
=~ H\int_{0}^{{(1+\epsilon)\log\beta} }v\exp(-v) \frac{\beta^{1-\delta}}{v^{1-\delta}} dv\\
\le~& H \beta^{1-\delta} \int_{0}^{\infty}v^{\delta}\exp(-v) dv \leq H \beta^{1-\delta}.
\end{align*}
The penultimate line follows by a change of variable: $\beta u=v$ and $dv = {\beta} du$. We now move on to bounding the term $B_{\beta}$. When $U > \alpha = (1+\epsilon)\log\beta/{\beta},$ we have
\begin{align*}
    \beta^{2}U\exp\left(-\beta U\right){ \le }\beta^{2}\alpha\exp\left(-\beta\alpha\right)
    = \beta^{-\epsilon} (1+\epsilon) \log\beta,
\end{align*}
where to obtain the first inequality we used the fact that the function $x\mapsto x\exp(-x)$ is decreasing for $x > 1$ and we assumed that $\beta$ was large enough such that $(1+\epsilon) \log(\beta) > 1$. 
\end{proof}

\begin{figure}
    \centering
\begin{tikzpicture}[scale=0.5]
\begin{axis}[
    axis lines=middle,
    xlabel=\(x\),
    ylabel=\(y\),
    xlabel style={at={(ticklabel* cs:1)}, anchor=north west},
    ylabel style={at={(ticklabel* cs:1)}, anchor=south east},
    enlargelimits,
    domain=0:5,
    samples=100,
    ytick={0.3679},
    yticklabels={\(\frac{1}{e}\)}
]
\addplot[name path=F, black] {x*exp(-x)} node[pos=0.6, above] {\(y = x e^{-x}\)};
\path[name path=Axis] (axis cs:0,0) -- (axis cs:5,0);
\addplot [
    thick,
    color=gray,
    fill=gray,
    fill opacity=0.2
] fill between[of=F and Axis];

% Dashed lines
\draw [dashed] (axis cs:1,0) -- (axis cs:1,{1*exp(-1)});
\draw [dashed] (axis cs:0,{exp(-1)}) -- (axis cs:1,{exp(-1)});

% \node at (axis cs: 2.1,0.08) {\(\int x e^{-x} \, dx\)};
\end{axis}
\end{tikzpicture}
    \caption{Behavior of function $x \exp(-x)$.}
    \label{fig:graph}
\end{figure}

\begin{remark} 
Concluding this section, we want to remark the importance of Lemma~\ref{lem:second order} and more generally our main bias Lemma~\ref{lem:bias}. We note that a crude analysis of the difference between the max and the softmax would give a bound of the order of $1/\beta$. For instance, one way to analyze the difference between the max of the $\{U_{\tau}\}_{\tau\in \mcT}$ versus the $\beta$-softmax is to use the equivalence between the softmax and an entropic regularized maximum. In particular, the $\beta$-softmax is equivalent to the maximum of the entropic regularized objective $\max_{w} \{\sum_{\tau\in\mcT} w_{\tau} U_{\tau} - \beta^{-1}\sum_{\tau\in\mcT} w_{\tau} \log(w_{\tau})\}$, where the vector $w$ ranges over the $|\mcT|$-dimensional simplex. From this we see that the distortion in the objective is of the order of $1/\beta$. Thus very quickly we can argue that the difference between the maximum and the soft-maximum is of the order of $1/\beta$. However, such a bound is not strong enough and if used to bound the soft-max bias $\sqrt{n}(\theta_0^{\beta}-\theta_0)$, would result in requiring that $\beta=\omega(n^{1/2})$. However, the analysis of the variance part that we present in the subsequent sections, will be imposing that for asymptotic linearity of $\theta_0^\beta$, we need that $\beta=o(n^{1/2})$, which would result in a contradiction. Thus our more fine grained analysis of the difference between the max and the soft-max and our reduction to terms that look like the ones that are handled by Lemma~\ref{lem:second order}, as well as the fact that the refined analysis in Lemma~\ref{lem:second order} provides a bound that decays much faster as $\beta^{-(1+\delta)}$, is of significant importance for our main result.
\end{remark}

\section{Omitted Proofs from Section~\ref{sec: identiG}: Identification of Optimal Policies}

\subsection{Characterization of Optimal Policy}\label{app:charac}

By invoking the Markovianity of the policy and the conditional independencies implied by the intervention graphs, we can characterize the optimal dynamic policy as follows:
\begin{align*}
    \max_{\pi} \E\left\{Y^{(\pi)}\right\} =~& \max_{\pi} \E\left[ \E\left\{ Y^{(\pi)} \mid S\right\} \right]
    = \max_{\pi_2} \E\left[ \max_{\tau_1} \E\left\{ Y^{(\tau_1, \pi_2)} \mid S\right\} \right]
\end{align*}
where the second equality follows because the first period action $\tau_1$ is not allowed to depend on the second period policy $\pi_2$. 
Thus the optimal first period policy is defined as: 
$$
\pi_1^*(S) = \arg\max_{\tau_1} \E\left\{Y^{(\tau_1, \pi_2^*)}\mid S\right\}.
$$
Moreover, we can further write:
\begin{align*}
    \max_{\pi_2} \E\left[ \max_{\tau_1} \E\left\{ Y^{(\tau_1, \pi_2)} \mid S\right\} \right] =~& 
    \max_{\pi_2} \E\left[ \max_{\tau_1} \E\left\{ Y^{(\tau_1, \pi_2)} \mid T_1=\tau_1, S\right\}\right]\\
    =~& 
    \max_{\pi_2} \E\left[ \max_{\tau_1} \E\left\{Y^{(T_1, \pi_2)} \mid T_1=\tau_1, S\right\}\right]\\
    =~& 
    \max_{\pi_2} \E\left( \max_{\tau_1} \E\left[\E\left\{Y^{(T_1, \pi_2)}\mid X, T_1, S\right\} \mid T_1=\tau_1, S\right]\right)\\
    =~& 
    \max_{\pi_2} \E\left( \max_{\tau_1} \E\left[ \E\left\{Y^{(T_1, \pi_2)}\mid X\right\} \mid T_1=\tau_1, S\right]\right)\\
    =~& 
    \E\left( \max_{\tau_1} \E\left[ \max_{\tau_2} \E\left\{Y^{(T_1, \tau_2)}\mid X\right\} \mid T_1=\tau_1, S\right]\right).
\end{align*}
Thus the optimal second period policy is defined as 
$$
\pi_2^*(X) = \arg\max_{\tau_2} \E\left\{Y^{(T_1, \tau_2)}\mid X\right\}.
$$

\subsection{Blip Effect Decomposition of Policy Improvement}\label{app:blip}

The improvement that any counterfactual policy $\pi$ brings, as compared to the observed policy, can be decomposed as the sum of a sequence of improvements
\begin{align*}
    \E\left\{Y^{(\pi)} - Y\right\} =~& \E\left\{Y^{(\pi_1, \pi_2)} - Y^{(T_1, T_2)}\right\}
    = \E\left\{Y^{(\pi_1, \pi_2)} - Y^{(T_1, \pi_2)}\right\} + \E\left\{Y^{(T_1, \pi_2)} - Y^{(T_1, T_2)}\right\}.
\end{align*}
We can further center each of the terms around a baseline treatment, typically $0$ and write:
\begin{align*}
\E\left\{Y^{(T_1, \pi_2)} - Y^{(T_1, T_2)}\right\} = 
\E\left\{Y^{(T_1, \pi_2)} - Y^{(T_1, 0)}\right\} - \E\left\{Y^{(T_1, T_2)}-Y^{(T_1, 0)}\right\},\\
\E\left[Y^{(\pi_1, \pi_2)} - Y^{(T_1, \pi_2)}\right\} = 
\E\left\{Y^{(\pi_1, \pi_2)} - Y^{(0, \pi_2)}\right\} - \E\left\{Y^{(T_1, \pi_2)}-Y^{(0, \pi_2)}\right\}.
\end{align*}
The values of these various counterfactual random variables can also be easily depicted in a SWIG (Figure~\ref{fig:swig2}). We can further see easily from the intervention graphs that $Y^{(T_1, \pi_2)}\ci T_2 \mid X$ and $Y^{(T_1, 0)} \ci T_2 \mid X$ and $Y^{(\pi_1, \pi_2)}\ci T_1 \mid S$. 
\begin{figure}[H]
\centering
\begin{subfigure}[t]{0.3\textwidth}
    \centering
    \begin{tikzpicture} \scriptsize
    \node (S) at (0,0) {$S$};
    \node (T1) at (1, 1) {$T_1$};
    \node (X) at (2,0) {$X$};
    \node (T2) at (2.5,1) {$T_2$};
    \node (pi2) at (3.5,1) {$\pi_2(X)$};
    \node  (Y) at (4,0) {$Y^{(T_1, \pi_2)}$};
    %\draw[thick, ->] (Z) -- (X);
    \draw[thick, ->] (S) -- (T1);
    \draw[thick, ->] (S) -- (X);
    \draw[thick, ->] (T1) -- (X);
    \draw[thick, ->] (X) -- (Y);
    \draw[thick, ->] (X) -- (T2);
    \draw[thick, ->, dashed] (X) to[bend right] (pi2);
    \draw[thick, ->] (pi2) -- (Y);
    \end{tikzpicture}
    \caption{Intervention with $\pi$ in second period}
\end{subfigure}
\ \ \ 
\begin{subfigure}[t]{0.3\textwidth}
    \centering
    \begin{tikzpicture} \scriptsize
    \node (S) at (0,0) {$S$};
    \node (T1) at (1, 1) {$T_1$};
    \node (X) at (2,0) {$X$};
    \node (T2) at (2.5,1) {$T_2$};
    \node (pi2) at (3.5,1) {$0$};
    \node  (Y) at (4,0) {$Y^{(T_1, 0)}$};
    %\draw[thick, ->] (Z) -- (X);
    \draw[thick, ->] (S) -- (T1);
    \draw[thick, ->] (S) -- (X);
    \draw[thick, ->] (T1) -- (X);
    \draw[thick, ->] (X) -- (Y);
    \draw[thick, ->] (X) -- (T2);
    \draw[thick, ->] (pi2) -- (Y);
    \end{tikzpicture}
    \caption{Intervention with baseline in second period.}
\end{subfigure}
~~
\begin{subfigure}[t]{0.3\textwidth}
    \centering
    \begin{tikzpicture} \scriptsize
    \node (S) at (0,0) {$S$};
    \node (T1) at (.5, 1) {$T_1$};
    \node (pi1) at (1.5, 1) {$0$};
    \node (X) at (2,0) {$X^{(0)}$};
    \node (T2) at (2.5,1) {$T_2$};
    \node (pi2) at (4,1) {$\pi_2(X^{(0)})$};
    \node  (Y) at (4.5,0) {$Y^{(0, \pi_2)}$};
    %\draw[thick, ->] (Z) -- (X);
    \draw[thick, ->] (S) -- (T1);
    \draw[thick, ->] (S) -- (X);
    \draw[thick, ->] (pi1) -- (X);
    \draw[thick, ->] (X) -- (Y);
    \draw[thick, ->] (X) -- (T2);
    \draw[thick, ->, dashed] (X) to[bend right] (pi2);
    \draw[thick, ->] (pi2) -- (Y);
    \end{tikzpicture}
    \caption{Intervention with baseline in first and $\pi$ in second period.}
\end{subfigure}
\caption{Various interventions graphs that appear in improvement decomposition terms.}\label{fig:swig2}
\end{figure}

We can use these conditional independencies to further put more structure in the aforementioned improvements. Invoking the tower law of expectations and the conditional independencies, we can write:
\begin{align*}
    &\E\left\{Y^{(T_1, \pi_2)} - Y^{(T_1, 0)}\right\} = \E\left[\E\left\{Y^{(T_1, \pi_2)} - Y^{(T_1, 0)}\mid X\right\}\right]\\ 
    =& \E\left[\E\left\{Y^{(T_1, \pi_2)} - Y^{(T_1, 0)}\mid T_2=\pi_2(X), X\right\}\right],
\end{align*}
$$\E\left\{Y^{(T_1, T_2)}-Y^{(T_1, 0)}\right\}= \E\left[\E\left\{Y^{(T_1, T_2)}-Y^{(T_1, 0)}\mid T_2, X\right\}\right],$$
\begin{align*}
    &\E\left\{Y^{(\pi_1, \pi_2)} - Y^{(0, \pi_2)}\right\}= \E\left[\E\left\{Y^{(\pi_1, \pi_2)} - Y^{(0, \pi_2)}\mid S\right\}\right]\\
    &= \E\left[\E\left\{Y^{(\pi_1, \pi_2)} - Y^{(0, \pi_2)}\mid T_1=\pi_1(S), S\right\}\right],
\end{align*}
and $$\E\left\{Y^{(T_1, \pi_2)}-Y^{(0, \pi_2)}\right\} = \E\left[\E\left\{Y^{(T_1, \pi_2)}-Y^{(0, \pi_2)}\mid T_1, S\right\}\right].$$
Thus we see that it suffices to estimate the conditional expectation functions:
\begin{align*}
    \gamma_2(\tau_2, x) =~& \E\left\{Y^{(T_1, T_2)} - Y^{(T_1, 0)}\mid T_2=\tau_2, X=x\right\},\\
    \gamma_1^{(\pi_2)}(\tau_1, s) =~& \E\left\{Y^{(T_1, \pi_2)} - Y^{(T_1, 0)}\mid T_1=\tau_1, S=s\right\}.
\end{align*}
Then we can write the improvement of any policy $\pi$ as:
\begin{align*}
    \E\left\{Y^{(\pi)} - Y\right\} = \E\left\{\gamma_1^{(\pi_2)}(\pi_1(S), S) - \gamma_1^{(\pi_2)}(T_1, S)\right\} + \E\left\{\gamma_2(\pi_2(X), X) - \gamma_2(T_2, X)\right\}.
\end{align*}

\section{Omitted Proofs from Section~\ref{sec:mainthm}}
\subsection{Proof of Lemma~\ref{lem:simpler}}
\begin{proof}
Firstly we remark that for all $t$ we have $U_{\rm{max}}-U_t=\rm{max}_{s\in \mathcal{T}}\big(U_s-U_t\big)=\max\{\rm{max}_{s\ne t}\big(U_s-U_t\big),0\}$. Note that as the random variables $(U_t)_{t\in \mathcal{T}}$ admit a joint density then for all $t\in \mathcal{T}$ the random variables $(U_s-U_t)_{\substack{ s\in \mathcal{T}\\s\ne t}}$ also admit a joint density. Therefore the real-valued random variable $\max_{s\ne t}(U_s-U_t)$ is a continuous random variable. This implies that $(\max_{s\ne t}(U_s-U_t) )^+$ admits a density on $(0,\infty)$, and hence the last sentence of Lemma~\ref{lem:simpler}.
\end{proof}

\section{Omitted Proofs from Bias Appendix in Main Text}

\subsection{Proof of Lemma~\ref{lem:bias}}\label{app:bias}
\begin{proof} 
For simplicity define
\begin{align*}
 \tilde{Y} ~=~& Y - \E(Y\mid S)~=~Y - q^*(S),
\end{align*}
and define
\begin{align*}
    \tilde{\Phi}_\beta ~:&=~\psi_0^{\trans}\{\phi(T_2, X) - \phi_{\beta}(X)\} - \E[\psi_0^{\trans}\{\phi(T_2, X) - \phi_{\beta}(X)\} \mid S]\\
    &=~\psi_0^{\trans}\{\phi(T_2, X) - \phi_{\beta}(X)\} - p_{2,\beta}^*(S).
\end{align*}
Then by definition of $\theta_0^\beta$ and $\theta_0$ we have
\begin{align*}
    M(\theta_0^\beta, \psi_0, q^*, p_1^*, p_{2,\beta}^*;\beta) ~=~\E\left\{\bigg(\tilde{Y} - \tilde{\Phi}_\beta\bigg)\, \tilde{M}\right\}- \E\left(\tilde{M}\tilde{M}^{\trans}\right)\theta_0^\beta ~=~ 0
\end{align*}
and
\begin{align*}
    M(\theta_0, \psi_0, q^*, p_1^*, p_{2,\infty}^*;\infty) ~=~\E\left\{\bigg(\tilde{Y} - \tilde{\Phi}_\infty\bigg)\, \tilde{M}\right\}- \E\left(\tilde{M}\tilde{M}^{\trans}\right)\theta_0 ~=~ 0.
\end{align*}
By subtracting the two we obtain that:
\begin{align*}
    \E(\tilde{M}\tilde{M}^{\trans})\left(\theta_0^\beta - \theta_0\right)~=~\E\left\{\left(\tilde{\Phi}_\infty - \tilde{\Phi}_\beta\right)\, \tilde{M}\right\}.
\end{align*}
That is, we get that
 \begin{align*}
     \theta_0^\beta - \theta_0 = \E(\tilde{M}\tilde{M}^{\trans})^{-1} \E\left[ \psi_0^{\trans}\{\phi_{\beta}(X) - \phi_{\infty}(X)\}\tilde{M} + \{p_{2,\beta}^*(S) - p_{2,\infty}^*(S)\}\tilde M\right].
 \end{align*}
 Now by the tower law, we have
 \begin{align*}
     E\left[\{p_{2,\beta}^*(S) - p_{2,\infty}^*(S)\}\tilde M\right]
     =E\left[\{p_{2,\beta}^*(S) - p_{2,\infty}^*(S)\}E(\tilde M\mid S)\right]=0,
 \end{align*}
 Hence, we obtain that
  \begin{align*}
     \theta_0^\beta - \theta_0 ~&=~\E[\tilde{M}\tilde{M}^{\trans}]^{-1} \E\left[\psi_0^{\trans}\{\phi_{\beta}(X) - \phi_{\infty}(X)\}\tilde{M} \right]
\end{align*}
Note that $\psi_0^{\trans}\phi_{\beta}(X) = \sum_{\tau\in\mathcal{T}} W_\tau^\beta U_\tau$ and that $\psi_0^{\trans}\phi_{\infty}(X) = \max_{\tau} U_{\tau}=U_{\max}$. Thus we have:
\begin{align*}
      &\theta_0^\beta - \theta_0 =\E(\tilde{M}\tilde{M}^{\trans})^{-1} \E\left\{\tilde{M} \left(\sum_{\tau\in\mathcal{T}} W_\tau^\beta U_\tau - U_{\max}\right) \right\}\\
      &=\E(\tilde{M}\tilde{M}^{\trans})^{-1} \E\left\{\tilde{M} \sum_{\tau\in\mathcal{T}} W_\tau^\beta (U_\tau - U_{\max}) \right\}.
 \end{align*}
 where we used the simple fact that $\sum_{\tau \in \mcT} W_{\tau}^{\beta}=1$ in the last equality. Since we assumed that $\|\tilde{M}\|_{2}$ is uniformly bounded, and that the covariance matrix $\E[\tilde{M}\tilde{M}^{\trans}]$ is positive definite, we have that for some constant $C$:
 \begin{align*}
    \left\|\theta_0^\beta - \theta_0\right\|_2 &\leq C\, \E\left\{\left|\sum_{\tau\in\mathcal{T}} W_\tau^\beta (U_\tau - U_{\max})\right|\right\}\\
    &\leq C\, \E\left\{\sum_{\tau\in\mathcal{T}} W_\tau^\beta \left|U_\tau - U_{\max}\right|\right\}
    = C\, \sum_{\tau\in\mathcal{T}} \E\left(W_\tau^\beta \left|U_\tau - U_{\max}\right|\right).
 \end{align*}
 Finally, note that:
 \begin{align*}
    \E\left(W_\tau^\beta \left|U_\tau - U_{\max}\right|\right) =~&\E\left\{\frac{\exp(\beta U_{\tau})\left(U_{\max} - U_\tau\right)}{\sum_{t\in\mcT} \exp(\beta U_t)} \right\}\\
    \leq~& \E\left\{\frac{\exp(\beta U_{\tau})\left(U_{\max} - U_\tau\right)}{\exp(\beta U_{\max})} \right\}\\
    =~& \E\left[\exp\{-\beta (U_{\max} - U_{\tau})\} \left(U_{\max} - U_\tau\right)\right].
 \end{align*}
 Applying Lemma~\ref{lem:second order}, we have that for $n$ sufficiently large, such that $(1+\epsilon)\log(\beta)/ \beta\leq c$ and $(1+\epsilon)\log(\beta) \geq 1$:
 \begin{align*}
    \E\left[\exp\{-\beta (U_{\max} - U_{\tau})\} \left(U_{\max} - U_\tau\right)\right] \leq \frac{H}{\beta^{1+\delta}} + \frac{(1+\epsilon)\log(\beta)}{\beta^{2+\epsilon}}
 \end{align*}
 for appropriately chosen $H$ and $\delta$.
We therefore can conclude that
 \begin{align*}
     \sqrt{n}\left|\theta_0^\beta - \theta_0\right| \leq C|\mcT| \left(\frac{\sqrt{n} H}{\beta^{1+\delta}} + \frac{\sqrt{n}(1+\epsilon)\log(\beta)}{\beta^{2+\epsilon}}\right).
 \end{align*}
If $\beta=\omega(n^{1/\{2(1+\delta)\}})$, then the latter upper bound converges to $0$ as $n\to\infty$.
\end{proof}

\section{Asymptotic Linearity for Moment Equations with Growing Parameters}\label{sec:general}

Before moving to the proof of our main theorem, we present a more general result on asymptotic linearity of estimates defined as the solutions to empirical  moment restrictions with growing parameters. We will use this general result as our main workhorse in proving asymptotic linearity in the next section, when we prove our main Theorem~\ref{thm:main}.

Our problem falls into a general class of semiparametric inference problems, defined as solutions to moment equations that apart from the target parameter, also depend on a set of auxiliary or nuisance parameters. One additional element that we need to add to the classical setting of semi-parametric inference with moment restrictions is that in our problem the moments themselves are parameterized by quantities (e.g. the temperature parameter $\beta$) that grow with the sample size. En route to our main theorem, we will analyze such types of moment problems in their full generality and then instantiate the general theorem to the setting of inference on optimal dynamic treatment regimes.

  In this section, we consider the following generalized method of moments framework:
\begin{align*}
M(\theta, g, h;\beta) = \E_Z\{m_\beta(Z; \theta, g, h)\},\quad
    M(\theta_0^{\beta_n}, g_0^{\beta_n}, h_0;\beta_n) = 0
\end{align*}
where $Z\in \mathcal{Z}$ is a vector of random variables. Apart from some constants $\beta_n$ that grows to infinity as $n\to\infty$ and the target parameter $\theta_0^{\beta_n}\in \Theta\subset\mathbb{R}^p$ of interest, the above moment is also a function of unknown nuisance parameter $h_0\in \R^q$ and unknown nuisance functions $g_0^{\beta_n}\in \mathcal{G}$, which need to be estimated from the data. 

  For simplicity we will drop the subscript of $\beta_n$  and write $\beta=\beta_n$, but it is worth noting that $\beta$ is a function of $n.$ Denote $\theta_0=\theta_0^\infty$ and $g_0=g_0^\infty$.
The estimator satisfies:
\begin{align*}
    M_{n}(\hat{\theta}^{\beta}, \hat{g}^{\beta}, \hat{h};\beta) =o_p(n^{-1/2}),\quad M_n(\theta, g,h;\beta) = n^{-1} \sum_{i=1}^n m_\beta(Z; \theta, g, h) .
\end{align*}

  To simplify the regularity assumptions required for asymptotic normality, we focus on the case where $m_\beta(Z; \theta, g, h)$ is linear in $\theta$, i.e.
\begin{align*}
    m_\beta(Z; \theta, g, h) = a_\beta(Z;g,h)\, \theta + \nu_\beta(Z;g,h)
\end{align*}
where $a_\beta(Z;g,h)\in \mathbb{R}^{p\times p}$ is a $p\times p$ matrix and $\nu_\beta(Z;g,h)\in \mathbb{R}^p$ is a $p$-vector, and we denote with:
\begin{align*}
    A(g,h;\beta) = \E_Z\{a_\beta(Z;g,h)\}, \quad 
    A_n(g) =  \E_n\{a_\beta(Z;g,h)\},\\
    V(g,h;\beta) = \E_Z\{\nu_\beta(Z;g,h)\},\quad
    V_n(g,h;\beta) = \E_n\{\nu_\beta(Z;g,h)\}.
\end{align*}

  Our general asymptotic linearity theorem will be based on a series of high-level assumptions that we provide next. In subsequent sections, we will verify each of these high-level assumptions from more primitive conditions, when applying our general asymptotic linearity theorem to the inference problem that arises in the optimal dynamic treatment regime setting. 

\begin{assumption}[Influence of $h$]\label{ass:influence} 
The functions $\big(\partial_hm_{\beta}(z;\cdot,\cdot,h_0)\big)$ are equicontinuous: for all $\epsilon>0$ there is a $\delta>0$ such that for all $\|g_1-g_2\|_2\le \delta$, and all $\|\theta_1-\theta_2\|_2\le \delta$  the following holds 
$$\sup_{z\in \mathcal{Z},\beta>0 }\|\partial_hm_{\beta}(z;\theta_1,g_1,h_0)-\partial_hm_{\beta}(z;\theta_2,g_2,h_0)\|_{\infty}\le \epsilon.$$
Moreover the expected gradient $\partial_h m_{\beta}$ converges to some finite limit as $\beta\to \infty$:
$$E\{\partial_hm_{\beta}(Z_1;\theta_0^\beta,g_0^\beta,h_0)\}\to J_*$$
for some $\|J_*\|_\infty<\infty.$ 
Furthermore the derivative is uniformly bounded, i.e. almost surely,
\begin{align*}
\sup_{\beta}\left\|\|\partial_h m_{\beta}(Z_1;\theta^\beta_0,g^\beta_0,h_0)\|_{\infty}\right\|_{L_2}<~&\infty.
\end{align*}
Finally, the Hessian of each coordinate $t$ of the moment vector $m_{\beta}$ with respect to $h$, is uniformly bounded:
\begin{align*}
    \sup_{z\in {\cal Z}, h,\theta,g} \left\|\partial_{hh} m_{\beta,t}(z; \theta, g, h)\right\|_{op} =~& o(n^{1/2}).
\end{align*}
\end{assumption}

\begin{assumption}[Limits of $\beta$]\label{ass:mbeta-smooth}
The parameter $\beta$ grows at an appropriate rate as $n$ grows, such that functions $m_\beta(Z_1;\theta_0^\beta, g_0^\beta, h_0)$ and $A(g_0^\beta, h_0;\beta)$ each has a limit as $\beta\to\infty.$ That is, we have that for some limit functions $m_*$ and $A_*$, 
\begin{align*}&
    m_\beta(Z_1;\theta^\beta_0,g^\beta_0,h_0) - m_*(Z_1;\theta_0,g_0,h_0)~= o_p(1),
  \\&\|A(g_0^\beta, h_0;\beta) - A_*(g_0, h_0)\|_{op}=o(1).
\end{align*} 
\end{assumption}

\begin{assumption}[Orthogonality in $g$]\label{ass:ortho}
The moment satisfies the Neyman orthogonality condition with respect to nuisance $g$: for all $g\in \mcG$
\begin{align*}
    D_g M(\theta_0^\beta, g_0^\beta, h_0;\beta)[g-g_0^\beta] = \frac{\partial}{\partial t} M(\theta_0^\beta, g_0^\beta + t\, (g-g_0^\beta), h_0; \beta)\big|_{t=0} ~=~& 0
\end{align*}
and a second-order smoothness condition: for all $g\in \mcG$,  we have
\begin{align*}
   \sup_{t_0\in[0,1]}\left\{ \frac{\partial^2}{\partial t^2} M(\theta_0^\beta, g_0 + t\, (g-g_0^\beta), h_0;\beta)\big|_{t=t_0}\right\} ~=~ O_{\rm{unif}(\beta)}\left(\|g-g_0^\beta\|_2^2\right).
\end{align*}
\end{assumption}

  As $\beta$ increases the estimators and moment functions change. To be able to control all of this we need to control the impact of $\beta$ on the convergence rate.
\begin{assumption}[Rates for $g$]\label{ass:rates}
Suppose that the nuisance estimates $\hat{g}^\beta\in \mathcal{G}$ satisfy the following consistency rate:
\begin{align*}
\begin{aligned}
    \|\hat{g}^\beta - g_0^\beta\|^2_2= \E_X\{\|\hat{g}^\beta(X) - g_0^\beta(X)\|_2^2\} =o_{p,\rm{unif}(\beta)}\left(n^{-1/2}\right).
\end{aligned}
\end{align*}

\end{assumption}

\begin{assumption}[Equicontinuity]\label{ass:equicont}
Suppose that $\beta$ grows at rate such that the moment $m$ satisfies the stochastic equicontinuity conditions: 
\begin{align}\label{eqn:equicont}
\begin{aligned}
    n^{1/2}\left\|A(\hat g^{\beta},h_0;\beta) - A(g_0^\beta, h_0;\beta) - \{A_n(\hat g^{\beta},h_0;\beta) - A_n(g_0^\beta, h_0;\beta)\}\right\|_{op} =~& o_{p}(1)\\
    n^{1/2} \left\|V(\hat g^{\beta},h_0;\beta) - V(g_0^\beta, h_0;\beta) - \{V_n(\hat g^{\beta},h_0;\beta) - V_n(g_0^\beta, h_0;\beta)\}\right\|_2 =~& o_p(1).
\end{aligned} 
\end{align}
\end{assumption}

\begin{assumption}[Regularity]\label{ass:regularity}
Assume that $A_*(g_0, h_0)^{-1}$ exists, that for any $j,k\le p$ and for some $\epsilon > 0$,
\begin{align*}
\sup_{\beta>0}\|\theta_0^\beta\|_2,\sup_{\beta>0}\|a_{\beta,j,k}(Z;g_0^\beta, h_0)\|_{L_{2+\epsilon}}, \sup_{\beta>0}\|\nu_{\beta,j,k}(Z;g_0^\beta, h_0)\|_{L_{2+\epsilon}} <\infty.
\end{align*}
%$M(\cdot,\cdot,\cdot;\cdot)$ and $A(\cdot,\cdot;\cdot)$ are uniformly bounded. 
% \jccomment{I think we need the moment functions themselves to be uniformly bounded? Like in equation (\ref{eq:dct})}
Assume that for any $i,j\in [p]\times[p]$, the random variables $\left(a_{\beta,i,j}(Z;g_0,h_0)\right)_{i,j}$ has bounded variances. Moreover, assume that for any $g, g'\in\mathcal{G}$:
\begin{align*}
    \sup_{\beta>0}\|A(g,h_0;\beta) - A(g',h_0;\beta)\|_{op} = O(\|g - g'\|_2).
\end{align*}
\end{assumption}

\begin{theorem}\label{thm:general-linearity} Under Assumptions~\ref{ass:influence}, \ref{ass:mbeta-smooth}, \ref{ass:ortho}, \ref{ass:rates}, \ref{ass:equicont}, \ref{ass:regularity}, if the nuisance parameter estimate $\hat{h}$ is asymptotically linear with some influence function $f_h$ such that $\E[f_h(Z_i)] = 0$:
\begin{align*}
    \hat h - h_0 = n^{-1}\sum_{i=1}^n f_h(Z_i) + o_p(n^{-1/2})
\end{align*}
then the parameter estimate $\hat{\theta}^\beta$ is asymptotically linear around $\theta_0^\beta$:
\begin{align*}
    n^{1/2} (\hat{\theta}^\beta - \theta_0^\beta) = n^{-1/2} \sum_{i=1}^n \rho_{\theta}(Z_i) + o_p(1)
\end{align*}
with influence function:
\begin{align*}
    \rho_{\theta}(Z) = -  A_*(g_0,h_0)^{-1} \{m_*(Z; \theta_0, g_0,h_0) + (J^{*})^{\trans}f_h(Z)\}.
\end{align*}
\end{theorem}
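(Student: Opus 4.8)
The plan is to run the standard debiased-moment asymptotic-linearity argument, but carrying the growing parameter $\beta=\beta_n$ through every limiting step; linearity of $m_\beta$ in $\theta$ lets me sidestep a separate consistency proof for $\hat\theta^\beta$. First I would use $m_\beta(Z;\theta,g,h)=a_\beta(Z;g,h)\theta+\nu_\beta(Z;g,h)$ to rewrite the estimating relation $M_n(\hat\theta^\beta,\hat g^\beta,\hat h;\beta)=o_p(n^{-1/2})$ as
\[
A_n(\hat g^\beta,\hat h;\beta)\,(\hat\theta^\beta-\theta_0^\beta) = -M_n(\theta_0^\beta,\hat g^\beta,\hat h;\beta)+o_p(n^{-1/2}),
\]
so it suffices to show (i) $A_n(\hat g^\beta,\hat h;\beta)\to A_*(g_0,h_0)$ in probability, which is invertible by Assumption~\ref{ass:regularity}, and (ii) an expansion of $n^{1/2}M_n(\theta_0^\beta,\hat g^\beta,\hat h;\beta)$. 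For (i), I decompose $A_n(\hat g^\beta;\beta)-A_*(g_0,h_0)$ into: the empirical-process fluctuation $A_n(\hat g^\beta;\beta)-A(\hat g^\beta;\beta)-\{A_n(g_0^\beta;\beta)-A(g_0^\beta;\beta)\}$, which is $o_p(n^{-1/2})$ by the first line of Assumption~\ref{ass:equicont}; a plug-in-bias term $A(\hat g^\beta;\beta)-A(g_0^\beta;\beta)=O(\|\hat g^\beta-g_0^\beta\|_2)=o_p(1)$ by the Lipschitz bound in Assumption~\ref{ass:regularity} and the rate in Assumption~\ref{ass:rates}; a fixed-argument law-of-large-numbers term $A_n(g_0^\beta;\beta)-A(g_0^\beta;\beta)=o_p(1)$ by Chebyshev using the bounded variances in Assumption~\ref{ass:regularity}; and the deterministic convergence $A(g_0^\beta,h_0;\beta)\to A_*(g_0,h_0)$ of Assumption~\ref{ass:mbeta-smooth}.

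For the oracle term I write $n^{1/2}M_n(\theta_0^\beta,g_0^\beta,h_0;\beta)=n^{-1/2}\sum_i m_\beta(Z_i;\theta_0^\beta,g_0^\beta,h_0)$. Since $\E\{m_\beta(Z;\theta_0^\beta,g_0^\beta,h_0)\}=M(\theta_0^\beta,g_0^\beta,h_0;\beta)=0$ and, in the limit, $\E\{m_*(Z;\theta_0,g_0,h_0)\}=0$, the difference $n^{-1/2}\sum_i\{m_\beta(Z_i;\theta_0^\beta,g_0^\beta,h_0)-m_*(Z_i;\theta_0,g_0,h_0)\}$ is a centered i.i.d.\ sum with variance $\E\{\|m_\beta-m_*\|_2^2\}$; combining the pointwise convergence of Assumption~\ref{ass:mbeta-smooth} with the uniform integrability afforded by the $L_{2+\epsilon}$ bounds in Assumption~\ref{ass:regularity} shows this variance vanishes, so by Chebyshev the oracle term equals $n^{-1/2}\sum_i m_*(Z_i;\theta_0,g_0,h_0)+o_p(1)$. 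Next I handle the $g$-correction $n^{1/2}\{M_n(\theta_0^\beta,\hat g^\beta,h_0;\beta)-M_n(\theta_0^\beta,g_0^\beta,h_0;\beta)\}$: the stochastic equicontinuity of Assumption~\ref{ass:equicont} (applied to the $a_\beta$ and $\nu_\beta$ pieces) removes the empirical-process part, leaving $n^{1/2}\{M(\theta_0^\beta,\hat g^\beta,h_0;\beta)-M(\theta_0^\beta,g_0^\beta,h_0;\beta)\}$, which by a second-order Taylor expansion along $t\mapsto g_0^\beta+t(\hat g^\beta-g_0^\beta)$ has vanishing first-order term by Neyman orthogonality (Assumption~\ref{ass:ortho}) and a remainder of order $O_{p,\mathrm{unif}(\beta)}(n^{1/2}\|\hat g^\beta-g_0^\beta\|_2^2)=o_p(1)$ by the second-order bound in Assumption~\ref{ass:ortho} and the rate in Assumption~\ref{ass:rates}.

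It remains to handle the $h$-correction $n^{1/2}\{M_n(\theta_0^\beta,\hat g^\beta,\hat h;\beta)-M_n(\theta_0^\beta,\hat g^\beta,h_0;\beta)\}$; since $h$ is finite-dimensional and no orthogonality in $h$ is assumed, I Taylor-expand $M_n$ in $h$ about $h_0$. The quadratic remainder is bounded coordinatewise by $\tfrac12\|\hat h-h_0\|_2^2\sup_z\|\partial_{hh}m_{\beta,t}\|_{op}=O_p(n^{-1})\cdot o(n^{1/2})=o_p(n^{-1/2})$ using Assumption~\ref{ass:influence} and the fact that asymptotic linearity of $\hat h$ gives $\hat h-h_0=O_p(n^{-1/2})$, so multiplying by $n^{1/2}$ it is $o_p(1)$. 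The linear term is $\partial_h M_n(\theta_0^\beta,\hat g^\beta,h_0;\beta)(\hat h-h_0)$, and $\partial_h M_n(\theta_0^\beta,\hat g^\beta,h_0;\beta)\to J_*$ in probability by the equicontinuity of $\partial_h m_\beta$ in $(\theta,g)$ uniform in $(z,\beta)$ (used with $\hat g^\beta\to g_0^\beta$), a law of large numbers using the uniform $L_2$ bound on $\partial_h m_\beta$, and the limit $\E\{\partial_h m_\beta(Z;\theta_0^\beta,g_0^\beta,h_0)\}\to J_*$; hence, invoking asymptotic linearity of $\hat h$, this contributes $(J^{*})^{\trans}n^{-1/2}\sum_i f_h(Z_i)+o_p(1)$. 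Assembling everything and multiplying through by $-A_n(\hat g^\beta,\hat h;\beta)^{-1}=-A_*(g_0,h_0)^{-1}+o_p(1)$ yields
\[
n^{1/2}(\hat\theta^\beta-\theta_0^\beta)=n^{-1/2}\sum_{i=1}^n\Big(-A_*(g_0,h_0)^{-1}\{m_*(Z_i;\theta_0,g_0,h_0)+(J^{*})^{\trans}f_h(Z_i)\}\Big)+o_p(1),
\]
which is the claim, with $\E\{\rho_\theta\}=0$ since $\E\{m_*\}=\E\{f_h\}=0$.

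The main obstacle is controlling the interaction between the growing parameter $\beta=\beta_n$ and the limiting arguments: one must verify that every ``$o_{p,\mathrm{unif}(\beta)}$'' and deterministic $\beta\to\infty$ statement genuinely collapses to $o_p(1)$ once $\beta$ is tied to $n$, and in particular that the triangular-array oracle sum in the second step can be replaced by the fixed influence $m_*$ — which requires upgrading the pointwise limit of Assumption~\ref{ass:mbeta-smooth} to an $L^2$ limit via uniform integrability. The non-orthogonal $h$-correction is the other delicate point, since absorbing its second-order error relies on the somewhat unusual Hessian-growth bound $\sup_z\|\partial_{hh}m_{\beta,t}\|_{op}=o(n^{1/2})$ together with the parametric rate of $\hat h$.
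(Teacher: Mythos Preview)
Your proposal is correct and follows essentially the same route as the paper: both reduce to an oracle term, a $g$-correction handled by Neyman orthogonality plus stochastic equicontinuity, and a non-orthogonal $h$-correction handled by a finite-dimensional Taylor expansion, with the triangular-array passage from $m_\beta$ to $m_*$ justified via uniform integrability from the $L_{2+\epsilon}$ bounds. The only organizational difference is that you factor out the empirical $A_n(\hat g^\beta,\hat h;\beta)$ directly using linearity and Taylor-expand in $h$ at $\theta_0^\beta$, whereas the paper works with the population $A(\hat g^\beta,h_0;\beta)$ and Taylor-expands at $\hat\theta^\beta$; this forces the paper to carry residual $o_p(\|\hat\theta^\beta-\theta_0^\beta\|_2)$ terms through the argument and absorb them at the very end, a step your version avoids.

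One small omission: in step (i) you silently drop the $\hat h$ argument and decompose only $A_n(\hat g^\beta;\beta)-A_*(g_0,h_0)$, but your estimating relation involves $A_n(\hat g^\beta,\hat h;\beta)$, so you still owe the piece $A_n(\hat g^\beta,\hat h;\beta)-A_n(\hat g^\beta,h_0;\beta)=o_p(1)$. This is recoverable from the stated assumptions---the equicontinuity of $\partial_h m_\beta$ in $\theta$ at $h_0$ forces $\partial_h a_\beta(\cdot;\cdot,h_0)$ to be uniformly bounded, and the Hessian growth bound $o(n^{1/2})$ transfers to $\partial_{hh}a_\beta$ by differencing in $\theta$, so a mean-value argument with $\|\hat h-h_0\|=O_p(n^{-1/2})$ finishes it---but you should say so. The paper's decomposition sidesteps this entirely by never varying $h$ inside the $A$ term (and in the paper's concrete instantiation $a_\beta$ is free of $\psi$ anyway, so the point is moot there).
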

\begin{remark}
    Note here if we instead exploit a cross-fitting approach which trains $\hat g^\beta$ and $\hat h$ on one half of the entire dataset while evaluating the empirical moment on the other half of the data, the theorem still applies. We will formally present a statement of Theorem~\ref{thm:general-linearity} adapted to cross-fitting approach in Appendix~\ref{app:cfit}.
\end{remark}
\subsection{Proof of Theorem~\ref{thm:general-linearity}}
\begin{proof}
By the linearity of the moment with respect to $\theta$, we have
\begin{align*}
    A(\hat g^{\beta},h_0;\beta) (\hat{\theta}^\beta-\theta_0^\beta) =~&  M(\hat{\theta}^\beta, \hat g^{\beta},h_0;\beta) - M({\theta}_0^\beta, \hat g^{\beta},h_0;\beta)\\
    =~& M({\theta}_0^\beta, g_0^{\beta},h_0;\beta) - M({\theta}_0^\beta, \hat g^{\beta},h_0;\beta)\\
    +~& M(\hat{\theta}^\beta, \hat g^{\beta},h_0;\beta) - M_n(\hat{\theta}^\beta, \hat g^{\beta},h_0;\beta)\\
    +~& M_n(\hat{\theta}^\beta, \hat g^{\beta},h_0;\beta) - M_n(\hat{\theta}^\beta, \hat g^{\beta},\hat h;\beta)\\
    +~& M_n(\hat{\theta}^\beta, \hat g^{\beta},\hat h;\beta) - M({\theta}_0^\beta, g_0^\beta,h_0;\beta)    .
\end{align*}
%\textcolor{blue}{not sure to know what you mean here}
Note that by definition of $\hat \theta ^\beta$
 and of $\theta_0^\beta$ we have $$M({\theta}_0^\beta, g_0^\beta,h_0;\beta) ~=~ 0$$
and
$$M_n(\hat{\theta}^\beta, \hat g^{\beta},\hat h;\beta) ~=~ 0. $$
Therefore we obtain that \begin{align*}
    A(\hat g^{\beta},h_0;\beta)  (\hat{\theta}^\beta-\theta_0^\beta) 
    =~& M({\theta}_0^\beta, g_0^{\beta},h_0;\beta) - M({\theta}_0^\beta, \hat g^{\beta},h_0;\beta)\\
    +~& M(\hat{\theta}^\beta, \hat g^{\beta},h_0;\beta) - M_n(\hat{\theta}^\beta, \hat g^{\beta},h_0;\beta)\\
    +~& M_n(\hat{\theta}^\beta, \hat g^{\beta},h_0;\beta) - M_n(\hat{\theta}^\beta, \hat g^{\beta},\hat h;\beta)
  %  +~& M_n(\hat{\theta}^\beta, \hat g^\beta,\hat h;\beta) - M({\theta}_0^\beta, g_0^\beta,h_0;\beta)
\end{align*} We will analyze separately each term on the right-hand side. 
%%%%%%%%%%%%%%%%%%%%
In this goal and for ease of notations, we write
\begin{align*}
     I_{1,n} = M({\theta}_0^\beta, g_0^{\beta},h_0;\beta) - M({\theta}_0^\beta, \hat g^{\beta},h_0;\beta),\\
    I_{2,n} = M(\hat{\theta}^\beta, \hat g^{\beta},h_0;\beta) - M_n(\hat{\theta}^\beta, \hat g^{\beta},h_0;\beta),\\
    I_{3,n} = M_n(\hat{\theta}^\beta, \hat g^{\beta},h_0;\beta) - M_n(\hat{\theta}^\beta, \hat g^{\beta},\hat h;\beta).
\end{align*}
We will prove that the terms $I_{1,n}$ is negligible and re-express the terms $I_{2,n}$ and $I_{3,n}.$ Moreover, we remark that by assumption~\ref{ass:mbeta-smooth} we have $\|A(g_{0}^{\beta},h_0;\beta)  - A_*(g_0, h_0)\|_{op}=o(1)$ for some limit function $A_*$, and hence by triangle inequality
\begin{align*}
    &~\|A(\hat g^{\beta},h_0;\beta) - A_*(g_0, h_0)\|_{op} \\
    \le &~\|A(\hat g^{\beta},h_0;\beta) - A(g_{0}^{\beta},h_0;\beta)\|_{op} + \|A(g_{0}^{\beta},h_0;\beta) - A_*(g_0, h_0)\|_{op}\\
    = &~ O_{p,\rm{unif}(\beta)}(\|\hat g^{\beta} - g_{0}^{\beta}\|_2) + o(1) ~=~o_{p,\rm{unif}(\beta)}(1),
\end{align*}
where the penultimate equality follows from Assumption~\ref{ass:regularity}, and the last equality holds by Assumption~\ref{ass:rates}. 
Therefore the following holds%for any $g$, with $\|g-g_0\|_2=o_p(1)$:
\begin{align*}
    A(\hat g^{\beta},h_0;\beta)  (\hat{\theta}^\beta-\theta_0^\beta) =~&  A_*(g_0, h_0) (\hat{\theta}^\beta-\theta_0^\beta) + \{A(\hat g^{\beta},h_0;\beta) - A_*(g_0, h_0)\}(\hat{\theta}^\beta-\theta_0^\beta)\\
    =~& A_*(g_0, h_0) (\hat{\theta}^\beta-\theta_0^\beta) + o_{p,\rm{unif}(\beta)}(\|\hat{\theta}^\beta-\theta_0^\beta\|_2).
\end{align*}
Hence, we have
\begin{align*}
 A_*(g_0, h_0)  (\hat{\theta}^\beta-\theta_0^\beta)~=~ I_{1,n}+I_{2,n}+I_{3,n}+o_{p,\rm{unif}(\beta)}(\|\hat{\theta}^\beta-\theta_0^\beta\|_2).   
\end{align*}
In the following, we analyze the asymptotic behaviors of $I_{1,n}, I_{2,n}, I_{3,n}$ separately, starting with $I_{1,n}$. By exploiting the Neyman orthogonality assumption and the Smoothness hypothesis, we obtain that:
\begin{align*}
     M({\theta}_0^\beta, \hat g^{\beta},h_0;\beta) - M({\theta}_0^\beta, g_0^{\beta},h_0;\beta) ~=&~ D_g M({\theta}_0^\beta, g_0^\beta, h_0;\beta)[\hat g^{\beta}-g_0^\beta] + O_{p,\rm{unif}(\beta)}\left(\|\hat g^{\beta}-g_0^\beta\|_2^2\right) \\
    =&~ O_{p,\rm{unif}(\beta)}\left(\|\hat g^{\beta}-g_0^\beta\|_2^2\right) ~=~o_{p,\rm{unif}(\beta)}(n^{-1/2}).
\end{align*}
% \jcdelete{\textcolor{red}{typo above to fix about $\hat \theta$}} 
where to get the first equality we used the Smoothness condition and where to obtain the second equality we used the Neyman orthogonality assumption. This implies that the first term $I_{1,n}$ is negligible meaning that
\begin{align*}
    I_{1,n} =~& o_{p,\rm{unif}(\beta)}(n^{-1/2}).
\end{align*}

  We now move on to analyzing $I_{2,n}$. In this goal, let $G_n(\theta, g, h;\beta) = M(\theta, g,h;\beta) - M_n(\theta, g,h;\beta).$ Then we remark that $I_{2,n}$ can be reformulated as 
\begin{align*}
I_{2,n}=G_n(\hat\theta^\beta, \hat g^{\beta}, h_0;\beta).
\end{align*}

Now we decompose this empirical process into an asymptotically normal component and an asymptotically negligible part. To achieve this we use the shorthand notation:
\begin{align*}
    G_n(\theta_0, g_0, h_0;\infty) = \E\{m_*(Z; \theta_0, g_0, h_0)\}-n^{-1}\sum_{i=1}^n m_*(Z_i;\theta_0, g_0, h_0) 
\end{align*}
Indeed we remark that %equicontinuous parts that converge to zero in probability:
\begin{align*}
G_n(\hat\theta^\beta, \hat g^{\beta}, h_0;\beta) =~& G_n(\theta_0, g_0, h_0;\infty) 
+
\{G_n(\theta_0^\beta, g_0^\beta, h_0;\beta) - G_n(\theta_0, g_0, h_0;\infty)\} \\
+~& 
\{G_n(\theta_0^\beta, \hat g^{\beta}, h_0;\beta) - G_n(\theta_0^\beta, g_0^\beta, h_0;\beta)\} \\
+~& 
\{G_n(\hat\theta^\beta, \hat g^{\beta}, h_0;\beta) - G_n(\theta_0^\beta, \hat g^{\beta}, h_0;\beta)\}. \label{last_term}
\end{align*}
We analyze each one of those terms separately. Firstly we remark
that the second term:
\begin{align*}
G_n(\theta_0^\beta, g_0^\beta, h_0;\beta) - G_n(\theta_0, g_0, h_0;\infty) 
\end{align*}
is of the form:
\begin{align*}
    n^{-1}\sum_{i=1}^{n}\left\{m_*(Z_i;\theta_0,g_0,h_0)-m_\beta(Z_i;\theta^\beta_0,g^\beta_0,h_0)\right\}-E\Big\{m_*(Z_i;\theta_0,g_0,h_0)-m_\beta(Z_i;\theta^\beta_0,g^\beta_0,h_0)\Big\}.
\end{align*}
We easily remark that $\{m_*(Z_i;\theta_0,g_0,h_0)-m_\beta(Z_i;\theta^\beta_0,g^\beta_0,h_0)\}$ is a sequence of i.i.d observations. Therefore we have that 
\begin{align*}
&\| G_n(\theta_0^\beta, g_0^\beta, h_0;\beta) - G_n(\theta_0, g_0, h_0;\infty) \|_{L_2}\\
    &=\left(var\left[n^{-1}\sum_{i=1}^{n}\left\{m_*(Z_i;\theta_0,g_0,h_0)-m_\beta(Z_i;\theta^\beta_0,g^\beta_0,h_0)\right\}\right]\right)^{1/2}\\
    &=\left(n^{-1}var\left[m_*(Z_i;\theta_0,g_0,h_0)-m_\beta(Z_i;\theta^\beta_0,g^\beta_0,h_0)\right]\right)^{1/2}\\
    &\le n^{-1/2}\Big\|m_*(Z_i;\theta_0,g_0,h_0)-m_\beta(Z_i;\theta^\beta_0,g^\beta_0,h_0)\Big\|_{L_2} = o_\beta(n^{-1/2}),
\end{align*}
where to obtain the last identity we exploited the assumption that 
\begin{align*}
    m_\beta(Z_1;\theta^\beta_0,g^\beta_0,h_0) - m_*(Z_1;\theta_0,g_0,h_0)~= o_p(1)
\end{align*}
and Assumption~\ref{ass:regularity}. 

Therefore as long as we take $\beta\to\infty$ we obtain that
\begin{align*}
G_n(\theta_0^\beta, g_0^\beta, h_0;\beta) - G_n(\theta_0, g_0, h_0;\infty) =o_{p}(n^{-1/2}).
\end{align*}

  By the linearity of the moment, we can also write
\begin{align*}
   G_n(\hat\theta^\beta, \hat g^{\beta}, h_0;\beta) - G_n(\theta_0^\beta, \hat g^{\beta}, h_0;\beta) = \left\{A(\hat g^{\beta}, h_0;\beta) - A_n(\hat g^{\beta}, h_0;\beta)\right\}^{\trans}(\hat\theta^\beta-\theta_0^\beta).
\end{align*}

  To successfully upper-bound this term we need to upper bound $\left\|A(\hat g^{\beta}, h_0;\beta) - A_n(\hat g^{\beta}, h_0;\beta)\right\|_{op} $. In this goal, note that by the triangle inequality, the following holds
\begin{equation}\label{tr2}
\begin{aligned}
   &\left\|A(\hat g^{\beta}, h_0;\beta) - A_n(\hat g^{\beta}, h_0;\beta)\right\|_{op}\\ \leq~& \left\|A(g_0^\beta, h_0;\beta)-A_n(g_0^\beta, h_0;\beta)\right\|_{op} \\
+~&  \left\|A(\hat g^{\beta}, h_0;\beta) - A(g_0^\beta, h_0;\beta) - \{A_n(\hat g^{\beta}, h_0;\beta) - A_n(g_0^\beta, h_0;\beta)\}\right\|_{op}.
\end{aligned} 
\end{equation}
We can prove that each term on the right-hand side of \eqref{tr2} is negligible. In this goal, note that for all $j,k\le p$  we have that for the $(j, k)$-th component $a_{\beta,j,k}$ of function $a_\beta$
\begin{align*}
    &\Big\|n^{-1}\sum_{i=1}^na_{\beta,j,k}(Z_i;g_0^\beta, h_0)-E
\{a_{\beta,j,k}(Z_i;g_0^\beta, h_0)\}\Big\|_{L_2}
=\left[var\left\{n^{-1}\sum_{i=1}^na_{\beta,j,k}(Z_i;g_0^\beta, h_0)\right\}\right]^{1/2} \\
&=\left[n^{-1}var\left\{a_{\beta,j,k}(Z_i;g_0^\beta, h_0)\right\}\right]^{1/2}
\le  n^{-1/2} \|a_{\beta,j,k}(Z_1;g_0^\beta, h_0)\|_{L_2}=o(1),
\end{align*}
where the last equality is in fact a convergence that is uniform in $\beta.$
Therefore this directly implies that 
\begin{align*}
    \left\|A(g_0^\beta, h_0;\beta)-A_n(g_0^\beta, h_0;\beta)\right\|_{op} =
    o_{p,\rm{unif}(\beta)}(1)
\end{align*}
Moreover, for the sequence of $\beta$ that we chose,
according to our stochastic equicontinuity condition, we have that:
\begin{align*}
    \left\|A(\hat g^{\beta}, h_0;\beta) - A(g_0^\beta, h_0;\beta) - \left\{A_n(\hat g^{\beta}, h_0;\beta) - A_n(g_0^\beta, h_0;\beta)\right\}\right\|_{op} = o_{p}(n^{-1/2}) = o_{p}(1).
\end{align*}
Thus we get that $\left\|A(\hat g^{\beta}, h_0;\beta) - A_n(\hat g^{\beta}, h_0;\beta)\right\|_{op}=o_{p}(1)$, and therefore:
\begin{align*}
    G_n(\hat\theta^\beta, \hat g^{\beta}, h_0;\beta) - G_n(\theta_0^\beta, \hat g^{\beta}, h_0;\beta) =~& o_{p}(\|\hat\theta^\beta-\theta_0^\beta\|_2).
\end{align*}
Moreover, by using the triangle inequality, the definition of the operator norm, the condition that $\sup_{\beta>0}\|\theta_0^\beta\|_2$ is finite, and the stochastic equicontinuity conditions
we obtain that 
\begin{align*}
    &\left\|G_n(\theta_0^\beta, \hat g^{\beta}, h_0;\beta) - G_n(\theta_0^\beta, g_0^\beta, h_0;\beta)\right\|_2\\
    \le~&\left\|A(\hat g^{\beta}, h_0;\beta) - A(g_0^\beta, h_0;\beta) - \{A_n(\hat g^{\beta}, h_0;\beta) - A_n(g_0^\beta, h_0;\beta)\}\right\|_{op}\, \|\theta_0^\beta\|_2 \\
    ~& + \left\|V(\hat g^{\beta},h_0;\beta) - V(g_0^\beta, h_0;\beta) - \{V_n(\hat g^{\beta},h_0;\beta) - V_n(g_0^\beta, h_0;\beta)\}\right\|_2\\
    =~& o_{p}(n^{-1/2}). 
\end{align*}
Altogether, we obtain that
$$I_{2,n} = G_n(\theta_0, g_0, h_0;\infty) + o_{p}(n^{-1/2} + \|\hat\theta^\beta - \theta_0^\beta\|_2).$$
% \jcdelete{\jccomment{I changed all $o_{p,\rm{unif}(\beta)}$ to $o_p$ for $I_{2,n}$ because stochastic equicontinuity condition was in $o_p$ and so I think we should not have any $o_{p,\rm{unif}(\beta)}$ here.}\textcolor{red}{I think you want to clarify that this if for the sequence of $\beta$ that we chose}}
Finally we want to analyze the term $I_{3,n}.$ By a second order Taylor expansion of each coordinate of the moment vector and our assumption on the Hessian of each coordinate of the moment vector, we have that 
\begin{align*}
    I_{3,n} =~& M_n(\hat{\theta}^\beta, \hat g^{\beta},h_0;\beta) - M_n(\hat{\theta}^\beta, \hat g^\beta,\hat h;\beta)\\
    &=\{\partial_h M_n(\hat{\theta}^\beta, \hat g^{\beta},h_0;\beta)\}^{\trans}(h_0 - \hat h) + o(n^{1/2}) \|h_0 - \hat{h}\|_2^2.\\
    &=\left\{n^{-1}\sum_{i=1}^n \partial_h m_\beta(Z_i;\hat{\theta}^\beta, \hat g^{\beta},h_0)\right\}^{\trans}(h_0 - \hat h) + o_p(n^{-1/2}).
\end{align*}
We will first show that $\{n^{-1}\sum_{i=1}^n \partial_h m_\beta(Z_i;\hat{\theta}^\beta, \hat g^{\beta},{h}_0)\}$ concentrates to a deterministic quantity. In this goal, by exploiting the uniform continuity in $\theta, g$ assumption and linearity of the moment function we obtain that 
 \begin{align*}
 n^{-1}\sum_{i=1}^n \partial_h m_\beta(Z_i;\hat{\theta}^\beta, \hat g^\beta,{h}_0)=n^{-1}\sum_{i=1}^n \partial_h m_\beta(Z_i;{\theta}^{\beta}_0,  g^{\beta}_0, {h}_0)+O_p(\|\hat\theta^{\beta}-\theta^{\beta}_0\|_2)+o_{p,\rm{unif}(\beta)}(1).
\end{align*}
Moreover notice that the observations $\{\partial_h m_\beta(Z_i;{\theta}^{\beta}_0,  g^{\beta}_0,h_0)\}$ are i.i.d. Hence, we have that
\begin{align*}
  &\left\|n^{-1}\sum_{i=1}^n \partial_h m_\beta(Z_i;{\theta}^{\beta}_0,  g^{\beta}_0,{h}_0)-E\{\partial_h m_\beta(Z_1;{\theta}^{\beta}_0,  g^{\beta}_0,{h}_0)\}\right\|_{L_2}\\
  &=\left[var\left\{n^{-1}\sum_{i=1}^n \partial_h m_\beta(Z_i;{\theta}^{\beta}_0,  g^{\beta}_0,{h}_0)\right\}\right]^{1/2}=n^{-1/2}\left[var\{\partial_h m_\beta(Z_i;{\theta}^{\beta}_0,  g^{\beta}_0,{h}_0)\}\right]^{1/2}\\
  &\le n^{-1/2}\left\|\partial_h m_\beta(Z_1;{\theta}^{\beta}_0,  g^{\beta}_0,{h}_0)\right\|_{L_2}\leq \frac{C}{\sqrt{n}}
\end{align*}
for some universal constant $C$.
Finally, by assumption, we have assumed that
\begin{align*}
 E\{\partial_h m_\beta(Z_1;{\theta}_0^{\beta},  g^{\beta}_0,{h}_0)\}= J_* + o_\beta(1).
\end{align*}
All of this combined together implies that
\begin{align*}
 n^{-1}\sum_{i=1}^n \partial_h m_\beta(Z_i;\hat{\theta}^\beta, \hat g^{\beta},{h}_0)=J_*+o_{p,\rm{unif}(\beta)}(1)+O_p(\|\hat \theta^{\beta}-\theta_0^\beta\|).
\end{align*}
Therefore, we obtain that
\begin{align*}
I_{3,n} &= -J_*^{\trans}(\hat h - h_0) + O_p(\|h_0 - \hat h\|_2\, \|\hat \theta^{\beta}-\theta_0^\beta\|_2) + o_p(n^{-1/2}) \\
&= -J_*^{\trans}(\hat h - h_0) + o_p(n^{-1/2} + \|\hat \theta^{\beta}-\theta_0^\beta\|_2).
\end{align*}
According to the asymptotic linearity assumption we know that
$$\hat h - h_0 = n^{-1}\sum_{i=1}^n f_h(Z_i)+ o_p\left(n^{-1/2}\right). $$

This implies that 
% \vsdelete{ $\|h_0 - \hat h\|_2 = O_p(n^{-1/2})$ and that }
\begin{align*}
    I_{3,n}= - n^{-1}\sum_{i=1}^n J_*^{\trans}f_h(Z_i) +o_p(n^{-1/2}+\|\hat \theta^\beta-\theta_0^{\beta}\|_2).
\end{align*}

By combining the analysis of $I_{1,n}, I_{2,n}$, and $I_{3,n}$ together, we obtain that 
\begin{align*}
    &A_*(g_0, h_0)
    \left(\hat{\theta}^\beta-\theta_0^\beta\right)\\
    ~=~& G_n(\theta_0,g_0,h_0,\infty)- n^{-1}\sum_{i=1}^n J_*^{\trans}f_h(Z_i) +o_p\left(n^{-1/2} + \|\hat{\theta}^\beta-\theta_0^\beta\|_2\right)\\
    =~&  -n^{-1}\sum_{i=1}^n (m_*(Z_i;\theta_0, g_0, h_0) + J_*^{\trans}f_h(Z_i)) +o_p\left(n^{-1/2} + \|\hat{\theta}^\beta-\theta_0^\beta\|_2\right).
\end{align*}
By assumption the matrix $A_*(g_0,h_0)$ is invertible and therefore we obtain that \begin{align*}
     \hat{\theta}^\beta-\theta_0^\beta= -& A_*(g_0, h_0)^{-1} \Big[n^{-1}\sum_{i=1}^n  \left\{m_*(Z_i;\theta_0, g_0, h_0) + J_*^{\trans}f_h(Z_i)\right\} \Big] \\
     +& o_p\left(n^{-1/2} + \|\hat{\theta}^\beta-\theta_0^\beta\|_2\right).
\end{align*}
The desired result immediately follows.
% Moreover, by exploiting the uniform continuity assumption we obtain that 
%  \begin{align*}
%  n^{-1}\sum_{i=1}^n \partial_h m_\beta(Z_i;\hat{\theta}^\beta, \hat g^\beta,\tilde{h})=n^{-1}\sum_{i=1}^n \partial_h m_\beta(Z_i;{\theta}^{\beta}_0,  g^{\beta}_0,{h}_0)+\vsedit{O_p(\|\hat\theta^{\beta}-\theta^{\beta}_0\|_2)}+o_{p,\rm{unif}(\beta)}(1).
% \end{align*}
\end{proof}

\section{Proof of Main Theorem~\ref{thm:main}}

In this section we provide the full proof of the main theorem. The proof is divided into several steps. First we invoke the general asymptotic linearity Theorem~\ref{thm:general-linearity} to the dynamic treatment regime setting of Theorem~\ref{thm:main}. This requires verifying the set of high-level assumptions required in Theorem~\ref{thm:general-linearity}. We provide lemmas that verify each of these assumptions and defer the proof of each of these lemmas to the end of the section. Then we combine the result of this instantiation with the main bias lemma to provide a complete proof of Theorem~\ref{thm:main}.

\subsection{Instantiating Asymptotic Linearity Theorem}\label{sec:instant}

Our goal in this section is to apply Theorem~\ref{thm:general-linearity} to the problem of estimating the first period structural parameter $\theta_0$ of the blip function that corresponds to the optimal regime. Note that estimating $\theta_0$ falls exactly into the framework of the previous section, with $\theta_0$ and more generally $\theta_0^\beta$ being the corresponding first period structural parameters, $h_0$ being the second period structural parameter $\psi_0$ and $g_0^\beta$ being the first period nuisance functions $g_0^\beta = (q^*, p_1^*, p_{2,\beta}^*)$. Finally, the moment function is the moment $m_\beta$ presented in Equation~\eqref{eqn:apx-moment-def} in the main text and the quantities $a_{\beta}(Z; g, h)$ and $\nu_{\beta}(Z;g,h)$ correspond to:
\begin{align*}
    a_{\beta}(Z; \psi, g) =~& -\{\mu(T_1, S) - p_1(S)\}\, \{\mu(T_1, S) - p_1(S)\}^{\trans}\\
    \nu_{\beta}(Z; \psi, g) =~& \{Y - \psi^{\trans}\phi(T_2, X) + \softmax_{\tau_2\in \mcT} \psi^{\trans}\phi(\tau_2, X) - q(S) + p_2(S)\}\, \{\mu(T_1, S) - p_1(S)\}.
\end{align*}
Thus to apply Theorem~\ref{thm:general-linearity}, we need to show that all Assumptions~\ref{ass:influence}, \ref{ass:mbeta-smooth}, \ref{ass:ortho}, \ref{ass:rates}, \ref{ass:equicont}, \ref{ass:regularity} are satisfied, under the conditions of our main Theorem~\ref{thm:main}. We present lemmas verifying each of these assumptions and conclude with a corollary that is an instantiation of Theorem~\ref{thm:main} to the problem of estimating the first period structural parameter $\theta_0$. For proofs that we present in this section, we will drop all data split indices for the cross-fitting approach for simplicity.

\begin{lemma}[Verifying Assumption~\ref{ass:influence}: Influence of $\psi$]\label{lem:variance}
Under the conditions of Theorem~\ref{thm:main}, we have that Assumption~\ref{ass:influence} is satisfied for the problem of estimating the first period structural parameter. Moreover, the limit $J_*$ is of the form:
\begin{equation}\label{eqn:specific-Jstar}
    J_* = \E[\{\phi_\infty(X) - \phi(T_2, X)\}\tilde{M}^{\trans}]
\end{equation}
where if we denote $\mathcal{M}(X)=\{\tau: \psi_0^{\trans}\phi(\tau, X) = \max_{t}\psi_0^{\trans}\phi(t, X)\}$,
\begin{align*}
    \phi_\infty(X)=|\mathcal{M}(X)|^{-1}\sum_{\tau\in\mathcal{M}(X)}\phi(\tau, X).
\end{align*}
\end{lemma}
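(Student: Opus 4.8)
The plan is to verify the four clauses of Assumption~\ref{ass:influence} (with $h=\psi$ and $g=(q,p_1,p_2)$) by first exploiting a structural observation: in the moment $m_\beta$ of \eqref{eqn:apx-moment-def}, the nuisance $\psi$ enters \emph{only} through the scalar $s(\psi):=\softmax_{\tau_2\in\mcT}\psi^{\trans}\phi(\tau_2,X)-\psi^{\trans}\phi(T_2,X)$, which multiplies the $\psi$-free vector $\mu(T_1,S)-p_1(S)$. Hence $\partial_\psi m_\beta(Z;\theta,q,p_1,p_2,\psi)=v_\beta(X,\psi)\,\{\mu(T_1,S)-p_1(S)\}^{\trans}$ is a rank-one matrix with $v_\beta(X,\psi):=\nabla_\psi s(\psi)=\nabla_\psi\{\softmax_{\tau\in\mcT}\psi^{\trans}\phi(\tau,X)\}-\phi(T_2,X)$, a quantity independent of $\theta$ and of $(q,p_2)$ and affine in $p_1$. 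This collapses the equicontinuity clause and the $L_2$-boundedness clause to a single task --- bounding $\sup_{\beta>0}\|v_\beta(X,\psi)\|_{\infty}$ almost surely and uniformly in $\psi$ --- and collapses the $J_*$ clause to computing $\lim_{\beta\to\infty}\E\{v_\beta(X,\psi_0)\,\tilde M^{\trans}\}$, where $\tilde M=\mu(T_1,S)-p_1^*(S)$.

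To obtain workable formulas I would pass through the log-sum-exp $L_\beta(\psi):=\beta^{-1}\log\sum_{\tau\in\mcT}\exp\{\beta\psi^{\trans}\phi(\tau,X)\}$, using the identity $\softmax_{\tau\in\mcT}\psi^{\trans}\phi(\tau,X)=\psi^{\trans}\nabla_\psi L_\beta(\psi)$ together with the standard facts $\nabla_\psi L_\beta=\phi_{\beta,\psi}(X)$, $\nabla_\psi^2 L_\beta=\beta\,\mathrm{Cov}_{W^{\beta,\psi}}\{\phi\}$, and $\nabla_\psi^3 L_\beta=\beta^2\,\kappa_3$, where $\kappa_3$ is the third central-moment tensor of $\phi$ under the Gibbs weights $\{W_\tau^{\beta,\psi}\}$. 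The product rule then gives $v_\beta(X,\psi)=\phi_{\beta,\psi}(X)-\phi(T_2,X)+\beta\,\mathrm{Cov}_{W^{\beta,\psi}}\{\phi,\psi^{\trans}\phi\}$ and, for each coordinate $t$, $\partial_{\psi\psi}m_{\beta,t}=\{\mu(T_1,S)-p_1(S)\}_t\,\big(2\beta\,\mathrm{Cov}_{W^{\beta,\psi}}\{\phi\}+\beta^2\,\E_{W^{\beta,\psi}}[(\phi-\bar\phi)(\phi-\bar\phi)^{\trans}(\psi^{\trans}\phi-\overline{\psi^{\trans}\phi})]\big)$, where bars denote $W^{\beta,\psi}$-averages.

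The crux is a uniform bound on the ``soft-max gap'' $\Delta_\beta(X,\psi):=\max_{\tau}\psi^{\trans}\phi(\tau,X)-\softmax_{\tau}\psi^{\trans}\phi(\tau,X)=\sum_{\tau}W_\tau^{\beta,\psi}\delta_\tau$, where $\delta_\tau\geq0$ is the suboptimality gap of action $\tau$. Since at least one $\delta_\tau$ is zero we have $\sum_t e^{-\beta\delta_t}\geq1$, hence $W_\tau^{\beta,\psi}\leq e^{-\beta\delta_\tau}$ and therefore $\Delta_\beta(X,\psi)\leq\sum_\tau\delta_\tau e^{-\beta\delta_\tau}\leq|\mcT|/(e\beta)$, uniformly over $X$ and all $\psi$ --- this rests on the elementary estimate $\sup_{x\geq0}xe^{-\beta x}=(e\beta)^{-1}$ (cf.\ the estimates in Lemma~\ref{lem:second order}). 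Every $W^{\beta,\psi}$-central moment above that carries a factor $(\psi^{\trans}\phi-\overline{\psi^{\trans}\phi})$ is then at most $C\,\Delta_\beta(X,\psi)$ in absolute value, with $C$ controlled by the a.s.\ bound on $\sum_\tau\|\phi(\tau,X)\|_2$; so $\|\beta\,\mathrm{Cov}_{W^{\beta,\psi}}\{\phi,\psi^{\trans}\phi\}\|_\infty\leq C$ --- which makes $\sup_{\beta>0}\|v_\beta(X,\psi)\|_\infty$ a.s.\ bounded, yielding equicontinuity and, since the coordinates of $\mu(T_1,S)-p_1(S)$ are a.s.\ bounded, the $L_2$-boundedness clause --- and $\|\beta^2\,\E_{W^{\beta,\psi}}[\cdots]\|_{op}\leq C\beta$; together with $\|\beta\,\mathrm{Cov}_{W^{\beta,\psi}}\{\phi\}\|_{op}=O(\beta)$ this gives $\sup_{z,\psi,\theta,g}\|\partial_{\psi\psi}m_{\beta,t}\|_{op}=O(\beta)=o(n^{1/2})$ because $\beta=o(n^{1/2})$.

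Finally, for $J_*$ I would take $\beta\to\infty$ pointwise in $X$: the Gibbs weights $W_\tau^{\beta,\psi_0}$ converge to $|\mathcal{M}(X)|^{-1}\mathbbm{1}\{\tau\in\mathcal{M}(X)\}$ so that $\phi_{\beta,\psi_0}(X)\to\phi_\infty(X)$, while the off-argmax weights decay exponentially in $\beta$ and $\psi_0^{\trans}\phi$ is constant on $\mathcal{M}(X)$, so $\Delta_\beta(X,\psi_0)=O(e^{-\beta\delta_{\min}(X)})=o(1/\beta)$ and hence $\beta\,\mathrm{Cov}_{W^{\beta,\psi_0}}\{\phi,\psi_0^{\trans}\phi\}\to0$; therefore $v_\beta(X,\psi_0)\to\phi_\infty(X)-\phi(T_2,X)$ a.s. Since $\|v_\beta(X,\psi_0)\,\tilde M^{\trans}\|$ is a.s.\ bounded uniformly in $\beta$, dominated convergence gives $\E\{\partial_\psi m_\beta(Z;\theta_0^\beta,g_0^\beta,\psi_0)\}\to\E[\{\phi_\infty(X)-\phi(T_2,X)\}\tilde M^{\trans}]=:J_*$, which is finite in $\|\cdot\|_\infty$, establishing \eqref{eqn:specific-Jstar}. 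I expect the Hessian clause to be the main obstacle: the naive bound on $\nabla_\psi^2\softmax$ is $O(\beta^2)$, which with $\beta=o(n^{1/2})$ only gives $o(n)$, so one genuinely needs the $1/\beta$ improvement from $\Delta_\beta=O(1/\beta)$ --- i.e.\ the observation that the $\beta^2$-term of the Hessian is a third \emph{central} moment whose factor $(\psi^{\trans}\phi-\overline{\psi^{\trans}\phi})$ is itself of order $1/\beta$.
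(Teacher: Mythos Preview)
Your proposal is correct and arrives at the same conclusions as the paper, but the organization is genuinely cleaner. The paper computes $\partial_\psi\big(\softmax_\tau\psi^\trans\phi(\tau,X)\big)$ by brute force and splits it as $A_{\beta,\psi}+Q_{\beta,\psi}$, where $A_{\beta,\psi}=\phi_{\beta,\psi}(X)$ and $Q_{\beta,\psi}$ is re-written (after some algebra) as $\sum_\tau J_\tau^\beta\,\phi(\tau,X)$; the bound $|J_\tau^\beta|\le K/e$ is then obtained from the symmetric inequality $|u|/\{e^{u}+e^{-u}\}\le |u|e^{-|u|}\le 1/e$. For the Hessian the paper differentiates again by hand and gets four terms $R_1,\ldots,R_4$, each shown to be $O(\beta)$ by the same device. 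Your route via $L_\beta=\beta^{-1}\log\sum_\tau e^{\beta\psi^\trans\phi(\tau,X)}$ and the cumulant identities $\nabla L_\beta=\phi_{\beta,\psi}$, $\nabla^2L_\beta=\beta\,\mathrm{Cov}_W$, $\nabla^3L_\beta=\beta^2\kappa_3$ gives exactly the same decomposition --- indeed $Q_{\beta,\psi}=\beta\,\mathrm{Cov}_{W^{\beta,\psi}}\{\phi,\psi^\trans\phi\}$ and $R_1+R_2+R_3-R_4=2\beta\,\mathrm{Cov}_W\{\phi\}+\beta^2\kappa_3(\cdot,\cdot,\psi)$ --- but without the intermediate bookkeeping. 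Your controlling quantity $\Delta_\beta=\max-\softmax$ and the observation $\sum_\tau W_\tau|U_\tau-\bar U|\le 2\Delta_\beta\le 2|\mcT|/(e\beta)$ is equivalent to the paper's $|J_\tau^\beta|\le K/e$, and both rest on $\sup_{x\ge 0}xe^{-x}=1/e$.

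One phrasing to tighten: when you write that ``the factor $(\psi^\trans\phi-\overline{\psi^\trans\phi})$ is itself of order $1/\beta$'', this is not true pointwise in $\tau$ (for a suboptimal $\tau$ it is $O(1)$); what is $O(1/\beta)$ is its $W^{\beta,\psi}$-weighted $L_1$ norm $\sum_\tau W_\tau|U_\tau-\bar U|\le 2\Delta_\beta$. Your Hessian bound uses exactly this weighted sum after pulling out $\sup_\tau|\phi_\tau-\bar\phi|^2$, so the argument is fine --- just make the weighting explicit.
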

  We will highlight the proof of Lemma~\ref{lem:variance} in Section~\ref{sec:main-var}.

\begin{lemma}[Verifying Assumption~\ref{ass:mbeta-smooth}: Limits of $\beta$]\label{lem: limitbeta}
Under the conditions of Theorem~\ref{thm:main}, we have that Assumption~\ref{ass:mbeta-smooth} is satisfied for the problem of estimating the first period structural parameter, with limit functions:
\begin{align}
m_*(Z;\theta, \psi, g) =~& \{\epsilon_1(\theta, \psi) - q(S) + p_2(S) + \theta^{\trans}p_1(S) \} \{\mu(T_1, S) - p_1(S)\} \label{eqn:specific-mstar} \\
A_*(\psi, g) =~& -\E\left[\{\mu(T_1, S) - p_1(S)\}\, \{\mu(T_1, S) - p_1(S)\}^{\trans}\right] 
\end{align}
where $\epsilon_1(\theta, \psi)$ is defined in Equation~(2).
\end{lemma}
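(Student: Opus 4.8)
The plan is to exhibit the two limit objects named in the statement and verify, directly, the two convergence requirements of Assumption~\ref{ass:mbeta-smooth} in the instantiation of Section~\ref{sec:instant}, where the nuisance parameter is $h_0=\psi_0$, the nuisance functions are $g_0^\beta=(q^*,p_1^*,p_{2,\beta}^*)$ with limit $g_0=(q^*,p_1^*,p_{2,\infty}^*)$, and $a_\beta(Z;\psi,g)=-\{\mu(T_1,S)-p_1(S)\}\{\mu(T_1,S)-p_1(S)\}^{\trans}$. The convergence of $A$ is essentially free: $a_\beta$ carries no dependence on $\beta$ and depends on $g$ only through its coordinate $p_1$, which equals $p_1^*$ for every member of the family $\{g_0^\beta\}_{\beta}$. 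Hence $A(g_0^\beta,\psi_0;\beta)=\E[a_\beta(Z;\psi_0,g_0^\beta)]=-\E[\{\mu(T_1,S)-p_1^*(S)\}\{\mu(T_1,S)-p_1^*(S)\}^{\trans}]=A_*(\psi_0,g_0)$ identically in $\beta$, so $\|A(g_0^\beta,\psi_0;\beta)-A_*(\psi_0,g_0)\|_{op}=0=o(1)$.

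The substance is the pointwise convergence $m_\beta(Z;\theta_0^\beta,\psi_0,g_0^\beta)-m_*(Z;\theta_0,\psi_0,g_0)=o_p(1)$. Recall that the two moments differ only through the $\softmax$-versus-$\max$ substitution inside $\epsilon_1$ and through the arguments at which they are evaluated ($p_{2,\beta}^*$ versus $p_{2,\infty}^*$, and $\theta_0^\beta$ versus $\theta_0$). Since both carry the common vector factor $\tilde M=\mu(T_1,S)-p_1^*(S)$ and both use $q=q^*$, $p_1=p_1^*$, I would first perform the short algebraic cancellation that writes the difference as the product $\Delta_\beta\,\tilde M$, where
\[
\Delta_\beta = \Big(\softmax_{\tau\in\mcT}\psi_0^{\trans}\phi(\tau,X)-\max_{\tau\in\mcT}\psi_0^{\trans}\phi(\tau,X)\Big) + \big(p_{2,\beta}^*(S)-p_{2,\infty}^*(S)\big) - (\theta_0^\beta-\theta_0)^{\trans}\tilde M .
\]
The first summand is handled by the elementary estimate already used in the proof of Lemma~\ref{lem:bias}: writing $U_{\tau}=\psi_0^{\trans}\phi(\tau,X)$ and $U_{\max}=\max_\tau U_\tau$, one has $0\le U_{\max}-\softmax_{\tau\in\mcT}U_\tau=\sum_\tau W_\tau^\beta(U_{\max}-U_\tau)\le\sum_\tau(U_{\max}-U_\tau)\,e^{-\beta(U_{\max}-U_\tau)}\le|\mcT|/(e\beta)$, using $x\,e^{-\beta x}\le 1/(e\beta)$ for $x\ge 0$; since the $U_\tau$ are almost surely bounded under Theorem~\ref{thm:main}, this is an almost sure bound of order $1/\beta$. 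The second summand equals $\E\{U_{\max}-\softmax_{\tau\in\mcT}U_\tau\mid S\}$ and is therefore almost surely bounded by $|\mcT|/(e\beta)$ as well. The third summand is the deterministic vector $\theta_0^\beta-\theta_0$, which tends to $0$ as $n\to\infty$ (indeed $\sqrt n(\theta_0^\beta-\theta_0)=o(1)$ by Lemma~\ref{lem:bias} under the rate condition of Theorem~\ref{thm:main}), times the almost surely bounded vector $\tilde M$, hence $o_p(1)$. Combining, $|\Delta_\beta|\le 2|\mcT|/(e\beta)+\|\theta_0^\beta-\theta_0\|_2\,\|\tilde M\|_2=o_p(1)$; multiplying by $\|\tilde M\|_2=O_p(1)$ (the almost sure boundedness of $\|\tilde M\|_2$ being assumed in Theorem~\ref{thm:main}) gives $\|m_\beta(Z;\theta_0^\beta,\psi_0,g_0^\beta)-m_*(Z;\theta_0,\psi_0,g_0)\|_2=o_p(1)$, as required.

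I do not anticipate a real obstacle in this lemma. The two points deserving a little care are the recognition that $a_\beta$ is free of $\beta$ --- so $A_*$ is attained exactly rather than merely approached --- and the almost sure uniform $O(1/\beta)$ control of the softmax--max gap, which leans on the almost sure boundedness of $\{\psi_0^{\trans}\phi(\tau,X)\}_{\tau\in\mcT}$ and on the previously established consistency $\theta_0^\beta\to\theta_0$ from Lemma~\ref{lem:bias}. It is worth noting that the density-tail hypotheses of Theorem~\ref{thm:main} are not invoked here, since Assumption~\ref{ass:mbeta-smooth} asks only for $o_p(1)$, not for a rate.
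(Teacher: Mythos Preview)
Your proposal is correct and follows the same essential approach as the paper---both arguments rest on the two observations that $a_\beta$ is independent of $\beta$ (and depends on $g$ only through $p_1$, which is fixed along $\{g_0^\beta\}_\beta$) and that the softmax converges to the max. Your treatment is in fact more complete than the paper's one-line sketch: the paper writes only ``$m_\beta(Z;\theta_0,g_0,h_0)$ converges as $\beta\to\infty$,'' whereas Assumption~\ref{ass:mbeta-smooth} requires convergence of $m_\beta$ evaluated at $(\theta_0^\beta,g_0^\beta)$, not $(\theta_0,g_0)$. You correctly isolate the extra contributions $p_{2,\beta}^*-p_{2,\infty}^*$ and $(\theta_0^\beta-\theta_0)^{\trans}\tilde M$ and dispatch each with the crude $x e^{-\beta x}\le 1/(e\beta)$ bound.

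One small inconsistency worth tidying: you invoke Lemma~\ref{lem:bias} for $\theta_0^\beta\to\theta_0$ and then assert in the final sentence that the density-tail hypotheses are not used. Lemma~\ref{lem:bias} as stated does use them (to get the $\sqrt n$ rate), but you only need the weaker $o(1)$ conclusion, which follows directly from the intermediate bound $\|\theta_0^\beta-\theta_0\|_2\le C|\mcT|/(e\beta)$ obtained via the same $x e^{-\beta x}\le 1/(e\beta)$ estimate---so your closing remark is correct, but it would be cleaner to cite that crude bound directly rather than the full Lemma~\ref{lem:bias}.
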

\begin{proof}
Note that for any $Z$ we have that $m_\beta(Z; \theta_0, g_0, h_0)$ converges as $\beta\to \infty$ to the original moment with the maximum instead of the softmax, simply because the softmax converges to the max as $\beta\to \infty$, i.e. $\softmax_{\tau_2\in \mcT} \psi^{\trans}\phi(\tau_2, X)\to \max_{\tau_2\in \mcT} \psi^{\trans}\phi(\tau_2, X)$. Moreover, note that in this case the quantity $a_{\beta}$ is independent of $\beta$, thus the second property in Assumption\ref{ass:mbeta-smooth} is trivially satisfied. 
\end{proof}

\begin{lemma}[Verifying Assumption~\ref{ass:ortho}: Orthogonality in $g$]\label{lem: neyman}
Under the conditions of Theorem~\ref{thm:main}, we have that Assumption~\ref{ass:ortho} is satisfied for the problem of estimating the first period structural parameter.
\end{lemma}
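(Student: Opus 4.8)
The plan is to exploit the partialled-out, residual-on-residual structure of $m_\beta$ together with the fact that, with the second-period parameter held at its true value $h_0=\psi_0$, the map $g\mapsto M(\theta,g,\psi_0;\beta)$ is a \emph{quadratic} functional of the nuisance tuple $g=(q,p_1,p_2)$. Using the identity $\epsilon_1^\beta(\theta,\psi_0)+\theta^{\trans}p_1(S)=Y-\psi_0^{\trans}\phi(T_2,X)+\softmax_{\tau_2}\psi_0^{\trans}\phi(\tau_2,X)-\theta^{\trans}\{\mu(T_1,S)-p_1(S)\}$, I would first rewrite
$$m_\beta(Z;\theta,\psi_0,q,p_1,p_2)=\bigl\{V_{q,p_2}(Z)-\theta^{\trans}R_{p_1}(Z)\bigr\}\,R_{p_1}(Z),$$
with $V_{q,p_2}(Z)=Y-\psi_0^{\trans}\phi(T_2,X)+\softmax_{\tau_2}\psi_0^{\trans}\phi(\tau_2,X)-q(S)+p_2(S)$ and $R_{p_1}(Z)=\mu(T_1,S)-p_1(S)$. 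The structural fact I would rely on is that at the true nuisances $g_0^\beta=(q^*,p_1^*,p_{2,\beta}^*)$ both residuals are conditionally mean zero given $S$: $\E\{R_{p_1^*}\mid S\}=0$ by definition of $p_1^*$, and $\E\{V_{q^*,p_{2,\beta}^*}\mid S\}=0$ because, by the definition of the finite-temperature nuisance $p_{2,\beta}^*$, one has $\E\{Y-\psi_0^{\trans}\phi(T_2,X)+\softmax_{\tau_2}\psi_0^{\trans}\phi(\tau_2,X)\mid S\}=q^*(S)-p_{2,\beta}^*(S)$. Hence $\xi:=V_{q^*,p_{2,\beta}^*}-(\theta_0^\beta)^{\trans}R_{p_1^*}$ also satisfies $\E\{\xi\mid S\}=0$.

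Next I would plug in the perturbation path. Writing $\Delta_q=q-q^*$, $\Delta_{p_1}=p_1-p_1^*$, $\Delta_{p_2}=p_2-p_{2,\beta}^*$, $\Delta:=\Delta_q-\Delta_{p_2}-(\theta_0^\beta)^{\trans}\Delta_{p_1}$ (a scalar function of $S$), and $\tilde M:=R_{p_1^*}$, a direct expansion gives, at $\theta=\theta_0^\beta$,
$$m_\beta\bigl(Z;\theta_0^\beta,\psi_0,g_0^\beta+t(g-g_0^\beta)\bigr)=\xi\,\tilde M-t\bigl(\xi\,\Delta_{p_1}+\Delta\,\tilde M\bigr)+t^2\,\Delta\,\Delta_{p_1}.$$
Taking expectations and using the tower rule annihilates the linear coefficient, since $\E\{\xi\,\Delta_{p_1}\}=\E\{\E(\xi\mid S)\,\Delta_{p_1}(S)\}=0$ and $\E\{\Delta\,\tilde M\}=\E\{\Delta(S)\,\E(\tilde M\mid S)\}=0$. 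Therefore $\partial_t M(\theta_0^\beta,g_0^\beta+t(g-g_0^\beta),\psi_0;\beta)\big|_{t=0}=0$, which is exactly the Neyman orthogonality condition of Assumption~\ref{ass:ortho}.

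For the second-order smoothness bound, I would note that since $M(\theta_0^\beta,\cdot,\psi_0;\beta)$ is quadratic in $g$, its second derivative along any affine line is constant in $t_0$ and depends only on the direction (and in particular is the same whether the line is anchored at $g_0$ or at $g_0^\beta$), namely $\partial_t^2 M=2\E\{\Delta\,\Delta_{p_1}\}$. By Cauchy--Schwarz, $|\E\{\Delta\,\Delta_{p_1}\}|\le\|\Delta\|_2\,\|\Delta_{p_1}\|_2\le\bigl(\|\Delta_q\|_2+\|\Delta_{p_2}\|_2+\|\theta_0^\beta\|_2\,\|\Delta_{p_1}\|_2\bigr)\,\|\Delta_{p_1}\|_2$, and since each of $\|\Delta_q\|_2,\|\Delta_{p_1}\|_2,\|\Delta_{p_2}\|_2$ is at most $\|g-g_0^\beta\|_2$, the theorem's hypothesis $\sup_{\beta>0}\|\theta_0^\beta\|_2<\infty$ yields $\partial_t^2 M=O_{\rm{unif}(\beta)}(\|g-g_0^\beta\|_2^2)$, as required. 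The argument is largely mechanical once the residual decomposition is in place; the only delicate point---and the closest thing to an obstacle---is $\beta$-uniformity, which rests precisely on the conditional-mean-zero identity for $V_{q^*,p_{2,\beta}^*}$ holding exactly at every finite temperature (by construction of $p_{2,\beta}^*$) and on the uniform bound $\sup_\beta\|\theta_0^\beta\|_2<\infty$ controlling the $\theta$-dependence of the quadratic remainder.
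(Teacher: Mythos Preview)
Your proof is correct and rests on the same structural facts as the paper's: the conditional mean-zero property $\E\{\tilde M\mid S\}=0$ of the residualized feature map, the analogous property $\E\{\xi\mid S\}=0$ of the centered first-factor, and the uniform bound $\sup_\beta\|\theta_0^\beta\|_2<\infty$ for the second-order term. The paper's proof, however, is packaged differently: it treats the moment as a function of the finite-dimensional \emph{output values} $\gamma=(q(S),p_1(S),p_{2,\beta}(S))$, and then checks coordinate by coordinate that the conditional gradient $\E\{\partial_\gamma \tilde m_{\beta,t}(Z;\theta_0^\beta,\psi_0,g_0^\beta(S))\mid S\}$ vanishes for each nuisance component and each moment coordinate $t$, and similarly tabulates every entry of the conditional Hessian. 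Your affine-path expansion $m_\beta(Z;\theta_0^\beta,\psi_0,g_0^\beta+t(g-g_0^\beta))=\xi\tilde M-t(\xi\Delta_{p_1}+\Delta\tilde M)+t^2\Delta\Delta_{p_1}$ collapses all of that bookkeeping into a single quadratic in $t$ and kills the linear term with two applications of the tower rule; it also makes transparent that the second derivative is constant in $t_0$ (so the anchor point $g_0$ versus $g_0^\beta$ in the smoothness clause is immaterial) and is controlled directly by Cauchy--Schwarz. The two arguments are equivalent in content, but yours is more compact and makes the quadratic-in-$g$ structure explicit rather than leaving it implicit in a table of Hessian entries.
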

% \vscomment{this proof can move to appendix; fairly standard}
  We will provide a proof of Lemma~\ref{lem: neyman} in Appendix~\ref{sec:neyman}.

\begin{lemma}[Verifying Assumption~\ref{ass:rates}: Rates for $g$]\label{lem:ratesg}
Under the conditions of Theorem~\ref{thm:main}, we have that Assumption~\ref{ass:rates} is satisfied for the problem of estimating the first period structural parameter.
\end{lemma}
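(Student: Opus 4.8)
The plan is to verify Assumption~\ref{ass:rates} for the first‑period nuisance vector $g_0^\beta=(q^*,p_1^*,p_{2,\beta}^*)$ with estimate $\hat g^\beta=(\hat q,\hat p_1,\hat p_{2,\beta})$, for which $\|\hat g^\beta-g_0^\beta\|_2^2=\|\hat q-q^*\|_2^2+\|\hat p_1-p_1^*\|_2^2+\|\hat p_{2,\beta}-p_{2,\beta}^*\|_2^2$, so it suffices to show each of the three summands is $o_{p,\rm{unif}(\beta)}(n^{-1/2})$. For the first two this is immediate: $\hat q$ and $\hat p_1$ do not depend on $\beta$, and the rate conditions of Theorem~\ref{thm:main} give $\|\hat q-q^*\|_2,\|\hat p_1-p_1^*\|_2=o_p(n^{-1/4})$, so their squares are $o_p(n^{-1/2})$, and a sequence independent of $\beta$ that is $o_p(n^{-1/2})$ is by definition $o_{p,\rm{unif}(\beta)}(n^{-1/2})$.

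The real work is the term $\|\hat p_{2,\beta}-p_{2,\beta}^*\|_2$, since $\hat p_{2,\beta}=\hat p_{2,\beta,\hat\psi}$ is the regression output evaluated at the estimated $\hat\psi$ while $p_{2,\beta}^*=p_{2,\beta,\psi_0}^*$. I would split it, via the triangle inequality, as $\|\hat p_{2,\beta,\hat\psi}-p_{2,\beta,\hat\psi}^*\|_2+\|p_{2,\beta,\hat\psi}^*-p_{2,\beta,\psi_0}^*\|_2$. For the first piece, on the event $\{\hat\psi\in\mathcal{N}\}$ it is at most $\sup_{\psi\in\mathcal{N}}\|\hat p_{2,\beta,\psi}-p_{2,\beta,\psi}^*\|_2$; since the Step~1 estimator is consistent, $\hat\psi-\psi_0=o_p(1)$ (in fact $O_p(n^{-1/2})$ by the standard debiased‑machine‑learning analysis of the residual‑on‑residual moment under $\|h^*-\hat h\|_2,\|r^*-\hat r\|_2=o_p(n^{-1/4})$, cf.\ \cite{chernozhukov2018double,lewis2020double}), we have $pr(\hat\psi\in\mathcal{N})\to1$, and combining this with the hypothesis of Theorem~\ref{thm:main} that $\sup_{\beta>0}pr\{\sup_{\psi\in\mathcal{N}}\|\hat p_{2,\beta,\psi}-p_{2,\beta,\psi}^*\|_2\ge\epsilon n^{-1/4}\}\to0$ gives $\|\hat p_{2,\beta,\hat\psi}-p_{2,\beta,\hat\psi}^*\|_2=o_{p,\rm{unif}(\beta)}(n^{-1/4})$, hence its square is $o_{p,\rm{unif}(\beta)}(n^{-1/2})$.

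For the second piece, $\|p_{2,\beta,\hat\psi}^*-p_{2,\beta,\psi_0}^*\|_2$, the plan is to prove a Lipschitz‑in‑$\psi$ bound for $\psi\mapsto p_{2,\beta,\psi}^*(S)$ that is uniform over $\beta\in(0,\infty)$ and over $S$. Write $p_{2,\beta,\psi}^*(S)=\psi^{\trans}r^*(S)-\E\{\softmax_{\tau_2}\psi^{\trans}\phi(\tau_2,X)\mid S\}$ with $r^*(S)=\E\{\phi(T_2,X)\mid S\}$ a.s.\ bounded (since $\sum_\tau\|\phi(\tau,X)\|_2$ is a.s.\ bounded): the linear part is globally Lipschitz with constant $\|r^*\|_{L_\infty}$. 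For the softmax part I would establish that the Boltzmann operator $\softmax$, viewed as a function of its bounded vector of values $(f(\tau))_{\tau\in\mcT}$, is $C(|\mcT|)$‑Lipschitz with a constant that does \emph{not} depend on $\beta$. This rests on two elementary facts: (a) $0\le\max_\tau f(\tau)-\softmax_\tau f(\tau)\le|\mcT|/(e\beta)$, because each weighted gap satisfies $p_\tau(\max f-f(\tau))\le(\max f-f(\tau))\,e^{-\beta(\max f-f(\tau))}\le 1/(e\beta)$; and (b) $\partial_{f(\sigma)}\softmax f=p_\sigma+\beta p_\sigma(f(\sigma)-\softmax f)$, whose magnitude is bounded by combining (a) with $\beta p_\sigma(\max f-f(\sigma))\le\sup_{u\ge0}ue^{-u}=1/e$. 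Composing this with $\psi\mapsto(\psi^{\trans}\phi(\tau,X))_\tau$, whose Lipschitz constant $\max_\tau\|\phi(\tau,X)\|_2$ is a.s.\ bounded, and taking conditional expectations, gives $\|p_{2,\beta,\hat\psi}^*-p_{2,\beta,\psi_0}^*\|_2\le C'\|\hat\psi-\psi_0\|_2$ with $C'$ independent of $\beta$; since $\hat\psi-\psi_0=O_p(n^{-1/2})=o_p(n^{-1/4})$, its square is $o_p(n^{-1/2})$, uniformly in $\beta$. Summing the three contributions yields $\|\hat g^\beta-g_0^\beta\|_2^2=o_{p,\rm{unif}(\beta)}(n^{-1/2})$.

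The main obstacle is precisely the uniform‑in‑$\beta$ Lipschitz control of the softmax in the last step: a naive bound on $\partial_{f(\sigma)}\softmax f$ grows like $\beta$, and one must exploit the exponential smallness of the weights $p_\sigma$ on suboptimal actions — quantitatively, $ue^{-u}\le 1/e$ applied to $u=\beta(\max f-f(\sigma))$ — to see that the effective sensitivity stays bounded by a constant depending only on $|\mcT|$. The remaining steps are routine bookkeeping with the $o_{p,\rm{unif}(\beta)}$ calculus and the nuisance‑rate and uniform‑estimation hypotheses already assumed in Theorem~\ref{thm:main}.
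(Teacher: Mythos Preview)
Your proposal is correct and follows essentially the same route as the paper: both decompose $\|\hat p_{2,\beta}-p_{2,\beta}^*\|_2$ via the triangle inequality into an ``algorithm error'' piece $\|\hat p_{2,\beta,\hat\psi}-p_{2,\beta,\hat\psi}^*\|_2$ (handled by the uniform-in-$\psi$ regression assumption plus $pr(\hat\psi\in\mathcal N)\to1$) and a ``plug-in'' piece $\|p_{2,\beta,\hat\psi}^*-p_{2,\beta,\psi_0}^*\|_2$, and both control the latter by showing the $\psi$-sensitivity of the softmax is bounded uniformly in $\beta$ via the inequality $ue^{-u}\le 1/e$. The only difference is cosmetic: the paper applies the mean value theorem to $\psi\mapsto p_{2,\beta,\psi}^*(S)$ directly and bounds the resulting derivative (their $Q_{\beta,\psi}$ term, controlled by $|J_\tau^\beta|\le K/e$), whereas you first bound the Lipschitz constant of the Boltzmann operator in its value arguments and then compose with the linear map $\psi\mapsto(\psi^{\trans}\phi(\tau,X))_\tau$; one small point to tighten in your write-up is that $\partial_{f(\sigma)}\softmax f$ involves $f(\sigma)-\softmax f$ rather than $\max f-f(\sigma)$, so you need to invoke your fact~(a) once more (splitting $|f(\sigma)-\softmax f|\le(\max f-f(\sigma))+(\max f-\softmax f)$) to close the bound.
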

  This proof of this lemma is not immediate. For $\hat p_{2,\beta}$, we note that $\hat p_{2,\beta} = \hat{p}_{2,\beta,\hat\psi}$ is not a direct estimate of 
$p_{2,\beta}^*(S)=p_{2,\beta,\psi_0}^*(S)=\E[\psi_0^{\trans}\{\phi(T_2, X)-\phi_{\beta,\psi_0}(X)\}\mid S]$, since it has used $\hat\psi$ in place of $\psi_0$ in the regression algorithm.
We will present a lemma in Section~\ref{sec:betaun} that states that the consistency rate condition $\|\hat p_{2,\beta} - p_{2,\beta}^*\|_2 = o_{p,\rm{unif}(\beta)}(n^{-1/4})$ is in fact implied by a uniform consistency rate of the regression algorithm itself, as is assumed in the main Theorem~\ref{thm:main}.

\begin{lemma}[Verifying Assumption~\ref{ass:equicont}: Equicontinuity]\label{lem:stab-stoch}
Under the conditions of Theorem~\ref{thm:main}, we have that Assumption~\ref{ass:equicont} is satisfied for the problem of estimating the first period structural parameter.
\end{lemma}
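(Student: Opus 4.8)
The plan is to verify both displays of Assumption~\ref{ass:equicont} by reducing them to empirical-process increments over the nuisance errors and controlling those via the standard localized-complexity (critical radius) bound. First note the structure of the moment pieces for the first-period problem: $a_\beta(Z;\psi_0,g)=-\{\mu(T_1,S)-p_1(S)\}\{\mu(T_1,S)-p_1(S)\}^{\trans}$ depends on neither $\beta$ nor $\psi$ and on $g=(q,p_1,p_2)$ only through $p_1$, while $\nu_\beta(Z;\psi_0,g)=\{R^\beta-q(S)+p_2(S)\}\{\mu(T_1,S)-p_1(S)\}$ with $R^\beta:=Y-\psi_0^{\trans}\phi(T_2,X)+\softmax_{\tau_2\in\mcT}\psi_0^{\trans}\phi(\tau_2,X)$; in particular $R^\beta-q^*(S)+p_{2,\beta}^*(S)=\tilde Y-\psi_0^{\trans}\phi(T_2,X)+\softmax_{\tau_2\in\mcT}\psi_0^{\trans}\phi(\tau_2,X)+p_{2,\beta}^*(S)$ is almost surely bounded \emph{uniformly in} $\beta$ under the boundedness hypotheses of Theorem~\ref{thm:main} (using that $\softmax$ lies between the min and max of its arguments). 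Since $A=\E[a_\beta]$, $A_n=\E_n[a_\beta]$ and similarly for $V,V_n$, and since the dimension $p$ is a fixed constant, it suffices to show $n^{1/2}(\E_n-\E)$ applied coordinatewise to the increments $a_\beta(Z;\psi_0,\hat g^\beta)-a_\beta(Z;\psi_0,g_0^\beta)$ and $\nu_\beta(Z;\psi_0,\hat g^\beta)-\nu_\beta(Z;\psi_0,g_0^\beta)$ is $o_p(1)$.

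Write $\tilde M=\mu(T_1,S)-p_1^*(S)$, $\Delta_q=\hat q-q^*$, $\Delta_{p_1}=\hat p_1-p_1^*$, $\Delta_{p_2}=\hat p_{2,\beta}-p_{2,\beta}^*$. Bilinearity gives the expansions
\begin{align*}
a_\beta(Z;\psi_0,\hat g^\beta)-a_\beta(Z;\psi_0,g_0^\beta) &= \Delta_{p_1}\tilde M^{\trans}+\tilde M\Delta_{p_1}^{\trans}-\Delta_{p_1}\Delta_{p_1}^{\trans},\\
\nu_\beta(Z;\psi_0,\hat g^\beta)-\nu_\beta(Z;\psi_0,g_0^\beta) &= (-\Delta_q+\Delta_{p_2})\tilde M-\bigl(R^\beta-q^*+p_{2,\beta}^*\bigr)\Delta_{p_1}-(-\Delta_q+\Delta_{p_2})\Delta_{p_1},
\end{align*}
so each increment is a finite sum of \emph{linear} terms $c(Z)\,\Delta_\bullet(S)$, with $c$ a fixed almost surely bounded (uniformly in $\beta$) function and $\Delta_\bullet\in\{\Delta_q,\Delta_{p_1},\Delta_{p_2}\}$, and \emph{quadratic} cross-terms $\Delta_\bullet(S)\,\Delta_\star(S)$. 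Under the hypotheses of Theorem~\ref{thm:main}, $\|\Delta_q\|_2,\|\Delta_{p_1}\|_2=o_p(n^{-1/4})$ directly and $\|\Delta_{p_2}\|_2=o_{p,\rm{unif}(\beta)}(n^{-1/4})$ by Lemma~\ref{lem:ratesg}; moreover $\hat q,\hat p_1,\hat p_{2,\beta}$ and the corresponding true functions lie in function classes of critical radius $\delta_n=o_p(n^{-1/4})$, so each $\Delta_\bullet$ lies, after passing to a star hull, in a class of critical radius $\lesssim\delta_n$ (uniformly in $\beta$ for $\Delta_{p_2}$).

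For the linear terms I would invoke the one-sided localization inequality (see, e.g., Section~3 of \cite{chernozhukov2020adversarial} and the references therein): for a star-shaped, uniformly bounded class of critical radius $\delta_n$, with probability tending to one $|(\E_n-\E)f|\lesssim\delta_n\|f\|_2+\delta_n^2$ uniformly over the class. Multiplication by a fixed uniformly bounded function inflates the critical radius by at most a constant, so taking $f=c\,\Delta_\bullet$ yields $n^{1/2}\bigl|(\E_n-\E)[c\,\Delta_\bullet]\bigr|\lesssim n^{1/2}\bigl(\delta_n\|\Delta_\bullet\|_2+\delta_n^2\bigr)=o_p(1)$, since $\delta_n\|\Delta_\bullet\|_2=o_p(n^{-1/2})$ and $\delta_n^2=o_p(n^{-1/2})$. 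For the quadratic cross-terms a crude split is enough: $|(\E_n-\E)[\Delta_\bullet\Delta_\star]|\le|\E_n[\Delta_\bullet\Delta_\star]|+|\E[\Delta_\bullet\Delta_\star]|$, the population term is $\le\|\Delta_\bullet\|_2\|\Delta_\star\|_2=o_p(n^{-1/2})$ by Cauchy--Schwarz, and the empirical term is $\le(\E_n\Delta_\bullet^2)^{1/2}(\E_n\Delta_\star^2)^{1/2}$ where applying the same localization inequality to the (bounded) squared classes gives $\E_n\Delta_\bullet^2=\|\Delta_\bullet\|_2^2+o_p(n^{-1/2})=o_p(n^{-1/2})$ and likewise for $\Delta_\star$; hence the cross-term is $o_p(n^{-1/2})$ and $n^{1/2}$ times it is $o_p(1)$. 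Summing the finitely many contributions gives both displays of Assumption~\ref{ass:equicont}.

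The only genuine obstacle is bookkeeping around the $\beta$-dependent nuisance $p_{2,\beta}$: one must ensure the boundedness constants and the critical radius of the difference class containing $\Delta_{p_2}$ do not degrade as $\beta\to\infty$. This is precisely where the uniform-in-$\beta$ boundedness assumptions of Theorem~\ref{thm:main} ($\sup_{\beta>0}\|p_{2,\beta}^*(S)\|_2<\infty$ a.s., $\hat p_{2,\beta}$ almost surely bounded, $\sup_{\tau\in\mcT,\psi\in\mathcal{N}}|\psi^{\trans}\phi(\tau,X)|<\infty$) and the uniform-in-$\beta$ rate of Lemma~\ref{lem:ratesg} are used. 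Alternatively, as in the remark after Theorem~\ref{thm:general-linearity}, with nested cross-fitting $\hat g^\beta$ is independent of the evaluation fold, and the localization step is replaced by an elementary conditional second-moment bound, so the critical-radius hypothesis can be dropped altogether.
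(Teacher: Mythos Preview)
Your proposal is correct and rests on the same underlying mechanism as the paper's proof: control of the empirical-process increment via a localized-complexity (critical radius) bound. The difference is one of packaging. The paper does not expand the moment increments bilinearly into linear and quadratic pieces; instead it observes that $a_\beta(Z;g,h_0)$ and $\nu_\beta(Z;g,h_0)$ are uniformly Lipschitz in $g$ over $\mathcal{G}$ (uniformly in $\beta$), which follows from exactly the boundedness of $\mu(T_1,S)$ and of the nuisance estimates that you invoke. With Lipschitzness in hand, the paper applies a single black-box localization inequality (Lemma~14 of \cite{foster2019orthogonal}) to the whole increment, obtaining $|(\E_n-\E)[a_\beta(\cdot;\hat g^\beta,h_0)-a_\beta(\cdot;g_0^\beta,h_0)]|=O_p(\delta_n\|\hat g^\beta-g_0^\beta\|_2+\delta_n^2)$ and likewise for $\nu_\beta$, which is $o_p(n^{-1/2})$ under the rate and critical-radius hypotheses. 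This sidesteps your separate treatment of the quadratic cross-terms and your auxiliary step of controlling $\E_n\Delta_\bullet^2$ via the squared class. Conversely, your explicit decomposition makes transparent exactly where each boundedness assumption enters and isolates the $\beta$-dependent pieces (the coefficient $R^\beta-q^*+p_{2,\beta}^*$ and the error $\Delta_{p_2}$), content the paper compresses into a one-line remark that the uniform Lipschitz condition holds.
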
 
  In Section~\ref{sec:equicont}, we will establish that when the nuisance function space is of low statistical complexity or when we use the cross-fitted estimation approach, we will automatically have that stochastic equicontinuity holds, and hence Lemma~\ref{lem:stab-stoch} holds.

\begin{lemma}[Verifying Assumption~\ref{ass:regularity}: Regularity]\label{lem:regul}
Under the conditions of Theorem~\ref{thm:main}, we have that Assumption~\ref{ass:regularity} is satisfied for the problem of estimating the first period structural parameter.
\end{lemma}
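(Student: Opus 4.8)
The plan is to verify each clause of Assumption~\ref{ass:regularity} directly from the hypotheses of Theorem~\ref{thm:main}, exploiting the fact that in this instantiation the matrix integrand does not depend on $\beta$ or on $\psi$. Recall that here $h_0=\psi_0$, $g_0^\beta=(q^*,p_1^*,p_{2,\beta}^*)$, $a_\beta(Z;\psi,g)=-\{\mu(T_1,S)-p_1(S)\}\{\mu(T_1,S)-p_1(S)\}^{\trans}$, and $\nu_\beta(Z;\psi,g)=\{Y-\psi^{\trans}\phi(T_2,X)+\softmax_{\tau_2\in\mcT}\psi^{\trans}\phi(\tau_2,X)-q(S)+p_2(S)\}\{\mu(T_1,S)-p_1(S)\}$. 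In particular $a_\beta$ depends only on the $p_1$-coordinate of $g$, so $A(g,h_0;\beta)=-\E[\{\mu(T_1,S)-p_1(S)\}\{\mu(T_1,S)-p_1(S)\}^{\trans}]$ is independent of $\beta$ and of $\psi$, and every $\sup_{\beta>0}$ over $a_\beta$- or $A$-quantities is a supremum over a constant family.

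\emph{Invertibility of $A_*$.} By Lemma~\ref{lem: limitbeta}, $A_*(g_0,h_0)=-\E(\tilde M\tilde M^{\trans})$ with $\tilde M=\mu(T_1,S)-p_1^*(S)$; Theorem~\ref{thm:main} assumes $\E(\tilde M\tilde M^{\trans})$ is bounded and strictly positive definite, hence invertible, so $A_*(g_0,h_0)^{-1}=-\E(\tilde M\tilde M^{\trans})^{-1}$ exists.

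\emph{Uniform $L_{2+\epsilon}$ bounds and bounded variances.} The bound $\sup_{\beta>0}\|\theta_0^\beta\|_2<\infty$ is a direct hypothesis. For the entries of $a_\beta$: $|a_{\beta,j,k}(Z;g_0^\beta,h_0)|=|\tilde M_j\tilde M_k|\le\|\tilde M\|_2^2$, which is a.s.\ bounded by hypothesis; hence these entries lie in $L_\infty\subset L_{2+\epsilon}$ for every $\epsilon>0$ with a bound independent of $\beta$, and the same a.s.\ boundedness (at $g_0$) gives bounded variances of $a_{\beta,i,j}(Z;g_0,h_0)=-\tilde M_i\tilde M_j$. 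For the components of $\nu_\beta$, write the scalar prefactor $\xi_\beta:=Y-\psi_0^{\trans}\phi(T_2,X)+\softmax_{\tau_2\in\mcT}\psi_0^{\trans}\phi(\tau_2,X)-q^*(S)+p_{2,\beta}^*(S)$ and bound it termwise: $Y-q^*(S)=\tilde Y$ is a.s.\ bounded; $|\psi_0^{\trans}\phi(T_2,X)|\le\sup_{\tau\in\mcT,\psi\in\mathcal{N}}|\psi^{\trans}\phi(\tau,X)|$ is a.s.\ bounded since $\psi_0\in\mathcal{N}$; the softmax is a convex combination $\sum_{\tau}W_\tau^\beta\psi_0^{\trans}\phi(\tau,X)$, so $|\softmax_{\tau_2\in\mcT}\psi_0^{\trans}\phi(\tau_2,X)|\le\max_{\tau\in\mcT}|\psi_0^{\trans}\phi(\tau,X)|$ is a.s.\ bounded by the same quantity uniformly in $\beta$; and $\sup_{\beta>0}\|p_{2,\beta}^*(S)\|_2$ is a.s.\ bounded by hypothesis. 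Hence $\xi_\beta$ is a.s.\ bounded uniformly in $\beta$, and $|\nu_{\beta,j}(Z;g_0^\beta,h_0)|=|\xi_\beta|\,|\tilde M_j|\le|\xi_\beta|\,\|\tilde M\|_2$ is a.s.\ bounded uniformly in $\beta$, hence in $L_{2+\epsilon}$ for every $\epsilon>0$.

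\emph{Lipschitzness of $A$ in $g$.} For $g=(q,p_1,p_2)$ and $g'=(q',p_1',p_2')$, set $a:=\mu(T_1,S)-p_1(S)$ and $b:=\mu(T_1,S)-p_1'(S)$, so $a-b=p_1'(S)-p_1(S)$ and $aa^{\trans}-bb^{\trans}=a(a-b)^{\trans}+(a-b)b^{\trans}$. Taking expectations and using the triangle inequality for $\|\cdot\|_{op}$ together with Cauchy--Schwarz gives $\|A(g,h_0;\beta)-A(g',h_0;\beta)\|_{op}\le(\|a\|_2+\|b\|_2)\,\|p_1-p_1'\|_2$, with $\|\cdot\|_2$ the $L_2$-function norm. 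Since $\mathcal{G}$ may be taken to consist of uniformly bounded functions (the nuisance targets and estimates are a.s.\ bounded and $\|\mu(T_1,S)\|_2$ is a.s.\ bounded), $\|a\|_2,\|b\|_2\le C$ for a constant $C$, while $\|p_1-p_1'\|_2\le\|g-g'\|_2$; hence $\|A(g,h_0;\beta)-A(g',h_0;\beta)\|_{op}\le 2C\|g-g'\|_2=O(\|g-g'\|_2)$, independently of $\beta$. This establishes Assumption~\ref{ass:regularity}. The argument is essentially bookkeeping against the boundedness hypotheses of Theorem~\ref{thm:main}; the only step requiring a little care is the uniform-in-$\beta$ control of $\xi_\beta$, where one reduces the softmax term to the a.s.-bounded $\max_\tau|\psi_0^{\trans}\phi(\tau,X)|$ via the convex-combination structure of the softmax and invokes the explicit hypothesis $\sup_{\beta>0}\|p_{2,\beta}^*(S)\|_2<\infty$ a.s.
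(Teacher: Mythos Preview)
Your proof is correct and follows essentially the same approach as the paper: checking each clause of Assumption~\ref{ass:regularity} against the boundedness hypotheses of Theorem~\ref{thm:main}. Your version is considerably more explicit than the paper's brief sketch---in particular you spell out the uniform-in-$\beta$ control of the softmax term via the convex-combination structure and carry out the Lipschitz computation for $A$ in detail, whereas the paper simply points to ``triangle inequality and our boundedness conditions'' and ``almost sure boundedness of $\tilde M$'' without further elaboration.
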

\begin{proof}
That $A_*(g_0, h_0)^{-1}$ exists follows immediately from the fact that $\E(\tilde{M}\tilde{M}^{\trans})$ is strictly positive definite. The boundedness conditions in Assumption~\ref{ass:regularity} follow from triangle inequality and our boundedness conditions listed in out main theorems. The bounded variances conditions naturally follow since $\E(\tilde{M}\tilde{M}^{\trans})$ is bounded. The last condition in Assumption~\ref{ass:regularity} follows from almost sure boundedness of $\tilde{M}.$
\end{proof}

\begin{corollary}\label{thm:specific-linearity}
Suppose estimator $\hat\psi$ has influence function $\rho_\psi$
$$n^{1/2}(\hat\psi - \psi_0) = n^{-1/2}\sum_{i = 1}^n\rho_\psi(Z_i) + o_p(1).$$
with $\E\{\rho_\psi(Z)\}=0$. Under the conditions of Theorem~\ref{thm:main}, the estimate $\hat{\theta}^\beta$ is asymptotically linear around $\theta_0^\beta$:
\begin{align*}
    n^{1/2} (\hat{\theta}^\beta - \theta_0^\beta) = n^{-1/2} \sum_{i=1}^n \rho_{\theta}(Z_i) + o_p(1)
\end{align*}
with influence function:
\begin{align*}
    \rho_{\theta}(Z) =   \E(\tilde{M} \tilde{M}^{\trans})^{-1} \{m_*(Z; \theta_0, \psi_0, g_0) + J_*^{\trans}\rho_{\psi}(Z)\}
\end{align*}
where $m_*$ and $J_*$ are limits as defined in Equation~\eqref{eqn:specific-mstar} and Equation~\eqref{eqn:specific-Jstar} respectively.
% \jcdelete{\vscomment{Please verify the signs of the influence function}\jccoment{I checked and the signs should be correct}}
\end{corollary}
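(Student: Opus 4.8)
The plan is to derive Corollary~\ref{thm:specific-linearity} as a direct instantiation of the general asymptotic linearity Theorem~\ref{thm:general-linearity}, using Lemmas~\ref{lem:variance}, \ref{lem: limitbeta}, \ref{lem: neyman}, \ref{lem:ratesg}, \ref{lem:stab-stoch} and \ref{lem:regul} to supply its hypotheses. First I would spell out the dictionary between the two settings: the target parameter $\theta$ of Theorem~\ref{thm:general-linearity} is the first period structural parameter, with $\theta_0^\beta$ the solution of the softmax moment $M(\theta,\psi_0,q^*,p_1^*,p_{2,\beta}^*;\beta)=0$ and $\theta_0=\theta_0^\infty$; the finite-dimensional nuisance $h$ is the second period parameter $\psi_0$, with estimate $\hat h=\hat\psi$; the nuisance functions $g_0^\beta$ are the triple $(q^*,p_1^*,p_{2,\beta}^*)$ of Equation~\eqref{eqn:period1-nuisance}; and the moment $m_\beta(Z;\theta,g,h)$ is the function in Equation~\eqref{eqn:apx-moment-def}. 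A short expansion of $\epsilon_1^\beta(\theta,\psi)$ shows that $m_\beta=a_\beta\,\theta+\nu_\beta$ with $a_\beta,\nu_\beta$ exactly as displayed above, so the linearity-in-$\theta$ structure required by Theorem~\ref{thm:general-linearity} holds, and $A(g,h;\beta)=\E\{a_\beta(Z;g,h)\}$ does not depend on $\beta$.

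Next I would verify the six high-level assumptions. Lemma~\ref{lem:variance} gives Assumption~\ref{ass:influence} on the influence of $\psi$, together with the explicit limit $J_*=\E[\{\phi_\infty(X)-\phi(T_2,X)\}\tilde M^{\trans}]$ of Equation~\eqref{eqn:specific-Jstar}. Lemma~\ref{lem: limitbeta} gives Assumption~\ref{ass:mbeta-smooth}, identifying the $\beta\to\infty$ limits $m_*$ of Equation~\eqref{eqn:specific-mstar} and $A_*(\psi,g)=-\E[\{\mu(T_1,S)-p_1(S)\}\{\mu(T_1,S)-p_1(S)\}^{\trans}]$; since $a_\beta$ is $\beta$-free the second limit is immediate and only the pointwise convergence of the softmax to the max is needed for the first. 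Lemma~\ref{lem: neyman} supplies the Neyman orthogonality and second-order smoothness of Assumption~\ref{ass:ortho}; Lemma~\ref{lem:ratesg} supplies the uniform-in-$\beta$ nuisance rate of Assumption~\ref{ass:rates}; Lemma~\ref{lem:stab-stoch} supplies the stochastic equicontinuity of Assumption~\ref{ass:equicont} from the small critical radius (or cross-fitting) hypothesis of Theorem~\ref{thm:main}; and Lemma~\ref{lem:regul} supplies the regularity conditions of Assumption~\ref{ass:regularity} from strict positive definiteness of $\E(\tilde M\tilde M^{\trans})$ together with the boundedness conditions assumed in Theorem~\ref{thm:main}.

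With all assumptions in force, I would apply Theorem~\ref{thm:general-linearity} with $f_h=\rho_\psi$, which is admissible precisely because the corollary assumes $\hat\psi$ is asymptotically linear with $\E\{\rho_\psi(Z)\}=0$. The theorem then yields $n^{1/2}(\hat\theta^\beta-\theta_0^\beta)=n^{-1/2}\sum_{i=1}^n\rho_\theta(Z_i)+o_p(1)$ with $\rho_\theta(Z)=-A_*(g_0,h_0)^{-1}\{m_*(Z;\theta_0,g_0,h_0)+(J_*)^{\trans}f_h(Z)\}$. Substituting $-A_*(g_0,h_0)^{-1}=\E(\tilde M\tilde M^{\trans})^{-1}$ (from the form of $A_*$ in Lemma~\ref{lem: limitbeta}), $m_*$ from Equation~\eqref{eqn:specific-mstar}, $J_*$ from Equation~\eqref{eqn:specific-Jstar}, and $f_h=\rho_\psi$, and relabeling the arguments of $m_*$ to the ordering $m_*(Z;\theta_0,\psi_0,g_0)$ used in Lemma~\ref{lem: limitbeta}, gives exactly the claimed influence function $\rho_\theta(Z)=\E(\tilde M\tilde M^{\trans})^{-1}\{m_*(Z;\theta_0,\psi_0,g_0)+J_*^{\trans}\rho_\psi(Z)\}$.

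I expect the only step that is not purely mechanical to be the verification of Assumption~\ref{ass:rates} through Lemma~\ref{lem:ratesg}: the estimate $\hat p_{2,\beta}=\hat p_{2,\beta,\hat\psi}$ regresses the $\beta$- and $\hat\psi$-dependent response $\hat\psi^{\trans}\phi(T_2,X)-\softmax_{\tau_2}\hat\psi^{\trans}\phi(\tau_2,X)$ onto $S$, so one must pass from the uniform-in-$\psi$ regression guarantee assumed in Theorem~\ref{thm:main} to an $o_{p,\mathrm{unif}(\beta)}(n^{-1/4})$ rate for the plug-in estimate, simultaneously controlling the perturbation $\hat\psi-\psi_0$ and the dependence on $\beta$; this is deferred to Section~\ref{sec:betaun}. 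Everything else is bookkeeping against Theorem~\ref{thm:general-linearity}.
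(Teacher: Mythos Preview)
Your proposal is correct and follows essentially the same approach as the paper: invoke Lemmas~\ref{lem:variance}--\ref{lem:regul} to verify Assumptions~\ref{ass:influence}--\ref{ass:regularity}, then apply Theorem~\ref{thm:general-linearity} with $f_h=\rho_\psi$ and read off the influence function using $-A_*(g_0,h_0)^{-1}=\E(\tilde M\tilde M^{\trans})^{-1}$. The paper's proof is in fact a single sentence that cites the six lemmas and concludes; your version simply spells out the dictionary and the final substitution, and your remark that Lemma~\ref{lem:ratesg} is the one non-mechanical step (handled in Section~\ref{sec:betaun}) matches the paper's organization.
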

\begin{proof}
By Lemmas~\ref{lem:variance}, \ref{lem: limitbeta}, \ref{lem: neyman}, \ref{lem:ratesg}, \ref{lem:stab-stoch}, \ref{lem:regul}, we know that under the conditions of Theorem~\ref{thm:main}, all Assumptions~~\ref{ass:influence}, \ref{ass:mbeta-smooth}, \ref{ass:ortho}, \ref{ass:rates}, \ref{ass:equicont}, \ref{ass:regularity} are satisfied. Hence Corollary~\ref{thm:specific-linearity} follows.
\end{proof}
\begin{remark}
    We comment that previous work in the literature (e.g. \cite{chernozhukov2020adversarial}) has established that either under the conditions of small critical radius in Theorem~\ref{thm:main} or under the cross-fitting approach in Theorem~\ref{thm:main2}, the asymptotic normality of second period structural parameter estimator $\hat\psi$ is indeed true. We will include more details in Sections~\ref{sec: pf_main} and \ref{sec: pf_main2}.
\end{remark}

\subsection{Proof of Theorem~\ref{thm:main}: Asymptotic Linearity of First-Period Parameter}\label{sec: pf_main}
\begin{proof}
    The results of \cite{chernozhukov2020adversarial} establishes asymptotic linearity of $\hat\psi$:
    \begin{equation}\label{eq: an1}
    \sqrt{n}(\hat\psi-\psi_0) = n^{-1/2}\sum_{i=1}^n\rho_\psi(Z_i) + o_p(1)
    \end{equation}
    where if we define $\tilde{P}=\phi(T_2, X) - \E\{\phi(T_2, X)\mid X\}$:
    $$\rho_\psi(Z)=\E(\tilde{P}\tilde{P}^{\trans})^{-1}\{Y-\E(Y\mid X) - \psi_0^{\trans}\tilde{P}\}\tilde{P}.$$
    Moreover, Lemma~\ref{lem:bias} gives us the softmax bias control that
    $$\sqrt{n}(\theta_0^\beta-\theta_0) = o(1),$$
    and Corollary~\ref{thm:specific-linearity} gives us asymptotic linearity of $\hat{\theta}^\beta$ around $\theta_0^\beta$:
    \begin{align*}
    \sqrt{n} \left(\hat{\theta}^\beta - \theta_0^\beta\right) = n^{-1/2} \sum_{i=1}^n \rho_{\theta}(Z_i) + o_p(1)
\end{align*}
where
\begin{align*}
    \rho_{\theta}(Z) =   \E(\tilde{M} \tilde{M}^{\trans})^{-1} \{m_*(Z; \theta_0, g_0,h_0) + J_*^{\trans}\rho_{\psi}(Z)\}.
\end{align*}
Combining these two results, we overall can conclude that
\begin{equation}\label{eq: an2}
    \sqrt{n} \left(\hat{\theta}^\beta - \theta_0\right) = n^{-1/2} \sum_{i=1}^n \rho_{\theta}(Z_i) + o_p(1).
\end{equation}
We can hence use the two asymptotic linearity statements \eqref{eq: an1} and \eqref{eq: an2} to construct confidence intervals for $\psi_0$ and $\theta_0.$
\end{proof}

\subsection{Proof of Lemma~\ref{lem:variance}: Main Variance Lemma}\label{sec:main-var}

We need to verify that the functions $\{\partial_\psi m_{\beta}(z;\cdot,\psi_0,\cdot)\}$ are equicontinuous and that the expected gradient $\E\{\partial_\psi m_{\beta}(Z;\theta_0^\beta, \psi_0, g_0^\beta)\}$ converges to some finite limit as $\beta\to \infty$, and that the $L_2$ norm of the derivative is bounded uniformly in $\beta$, i.e.
\begin{align*}
\sup_{\beta}\left\|\|\partial_\psi m_{\beta}(Z_1;\theta^\beta_0, \psi_0, g^\beta_0)\|_\infty\right\|_{L_2}<~&\infty,
\end{align*}
and, finally, that the Hessian of each coordinate $t$ of the moment vector $m_{\beta}$ with respect to $\psi$, is uniformly bounded: almost surely,
\begin{align*}
    \sup_{z\in {\cal Z}, \psi,\theta,g} \left\|\partial_{\psi\psi} m_{\beta,t}(z; \theta, \psi,g)\right\|_{op} =~& o(n^{1/2}).
\end{align*}
We note that $\psi$ appears in the moment $m_{\beta}(Z;\theta, \psi, g)$ in a term of the form:
\begin{align*}
    u_{\beta}(Z; \theta, \psi, g) = \{\mu(T_1, S) - p_1(S)\}\, \{\softmax_{\tau\in \mcT} \psi^{\trans}\phi(\tau, X) - \psi^{\trans}\phi(T_2,X)\}\, 
\end{align*}
Thus to understand $\partial_{\psi} m_{\beta}$, we need to analyze the derivative of the softmax, with respect to the vector $\psi$. To express this derivative it helps to recall the following definition for the weights that the softmax assigns to each treatment: 
\begin{align*}
    W_\tau^{\beta,\psi} =~& \frac{\exp\{\beta\, \psi^{\trans}\phi(\tau, X)\}}{\sum_{t\in\mcT} \exp\{\beta\, \psi^{\trans}\phi(t, X)\}}.
\end{align*}
And introduce the shorthand notation for the softmax random variable:
\begin{equation}\label{eq: V_smax}
    V_{\beta,\psi} =~ \softmax_{\tau \in \mcT} \psi^{\trans}\phi(\tau, X) = \sum_{\tau \in \mcT} W_{\tau}^{\beta, \psi} \psi^{\trans}\phi(\tau, X).
\end{equation}
Then we will use throughout the following useful lemma:
\begin{lemma}[Derivative of Softmax]
The derivative of the softmax $V_{\beta,\psi}(X)$ with respect to $\psi$ is of the form:
\begin{equation}\label{eq: decom}
    \partial_{\psi} V_{\beta,\psi} = \sum_{\tau\in \mcT} W_{\tau}^{\beta, \psi} \phi(\tau, X) + \sum_{\tau\in \mcT}  W_{\tau}^{\beta,\psi} \beta \psi^{\trans}\phi(\tau, X)  \left\{\phi(\tau, X) - \sum_{t\in\mcT} W_{t}^{\beta,\psi}\phi(t, X)\right\}
\end{equation}
where we denote the first term on the right hand side of the equation, i.e. the soft argmax of feature map, as:
$$A_{\beta,\psi}=\sum_{\tau\in \mcT} W_{\tau}^{\beta, \psi} \phi(\tau, X)$$
and denote the second term, i.e. the derivative of softmax weights, as:
$$Q_{\beta,\psi}=\sum_{\tau\in \mcT}  W_{\tau}^{\beta,\psi} \beta \psi^{\trans}\phi(\tau, X)  \left\{\phi(\tau, X) - \sum_{t\in\mcT} W_{t}^{\beta,\psi}\phi(t, X)\right\}.$$
\end{lemma}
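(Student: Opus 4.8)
The plan is a direct differentiation, holding $X$ fixed and viewing every quantity as a function of the vector $\psi\in\R^d$. Introduce the shorthand $g_\tau(\psi)\defeq\psi^{\trans}\phi(\tau,X)$, the unnormalized weights $s_\tau(\psi)\defeq\exp\{\beta g_\tau(\psi)\}$, and the normalizer $Z(\psi)\defeq\sum_{t\in\mcT}s_t(\psi)$, so that $W_\tau^{\beta,\psi}=s_\tau(\psi)/Z(\psi)$ and, by the definition in \eqref{eq: V_smax}, $V_{\beta,\psi}=\sum_{\tau\in\mcT}W_\tau^{\beta,\psi}\,g_\tau(\psi)$. The only point requiring care is that $\psi$ enters $V_{\beta,\psi}$ in two distinct ways — through the weights $W_\tau^{\beta,\psi}$ and through the linear scores $g_\tau(\psi)$ — so the product rule gives
\[
\partial_\psi V_{\beta,\psi}=\sum_{\tau\in\mcT}\bigl(\partial_\psi W_\tau^{\beta,\psi}\bigr)g_\tau(\psi)+\sum_{\tau\in\mcT}W_\tau^{\beta,\psi}\,\bigl(\partial_\psi g_\tau(\psi)\bigr).
\]

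First I would handle the second sum: since $\partial_\psi g_\tau(\psi)=\phi(\tau,X)$, it equals $\sum_{\tau\in\mcT}W_\tau^{\beta,\psi}\phi(\tau,X)=A_{\beta,\psi}$, which is precisely the first term on the right-hand side of the claimed identity. Next I would compute the softmax Jacobian for the first sum. From $\partial_\psi s_\tau=\beta s_\tau\phi(\tau,X)$ and $\partial_\psi Z=\beta\sum_{t\in\mcT}s_t\phi(t,X)$, the quotient rule yields
\[
\partial_\psi W_\tau^{\beta,\psi}=\frac{(\partial_\psi s_\tau)\,Z-s_\tau\,(\partial_\psi Z)}{Z^2}=\beta\,W_\tau^{\beta,\psi}\Bigl(\phi(\tau,X)-\sum_{t\in\mcT}W_t^{\beta,\psi}\phi(t,X)\Bigr),
\]
using $s_t/Z=W_t^{\beta,\psi}$ in the cancellation of the second term. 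Multiplying by $g_\tau(\psi)=\psi^{\trans}\phi(\tau,X)$ and summing over $\tau$ reproduces exactly $Q_{\beta,\psi}$, the second term of the claim. Adding the two contributions gives $\partial_\psi V_{\beta,\psi}=A_{\beta,\psi}+Q_{\beta,\psi}$, which is \eqref{eq: decom}.

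I do not expect a genuine obstacle here: the argument is elementary bookkeeping, and the only delicate step is the quotient-rule cancellation that produces the ``centered'' factor $\phi(\tau,X)-\sum_{t\in\mcT}W_t^{\beta,\psi}\phi(t,X)$ in the weight derivative. (Note that $\sum_{\tau\in\mcT}W_\tau^{\beta,\psi}=1$ is not even needed for this particular identity, though it will be used repeatedly in the subsequent estimates on $A_{\beta,\psi}$ and $Q_{\beta,\psi}$.) The real work, carried out in the lemmas that follow, is to bound these two pieces — in particular $Q_{\beta,\psi}$, which carries the explicit factor of $\beta$ — uniformly in $\beta$ so as to verify the influence, Hessian, and equicontinuity conditions of Assumption~\ref{ass:influence}.
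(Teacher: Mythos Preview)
Your proposal is correct and is exactly the ``simple chain rule of differentiation and numerical manipulations'' that the paper invokes without spelling out. You have merely filled in the routine details the paper omits, so there is nothing to add or correct.
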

\begin{proof}
The proof follows by a simple chain rule of differentiation and numerical manipulations.
\end{proof}
  From this lemma we get that:
\begin{align*}
    \partial_{\psi} m_{\beta}(Z;\theta, \psi, g) =  \{\mu(T_1, S) - p_1(S)\}\, \{A_{\beta,\psi} + Q_{\beta, \psi} - \phi(T_2, X)\}^{\trans}
\end{align*}
Our goal is to analyze the properties of this function. First note that the term $\|A_{\beta,\psi}\|_2$ is upper bounded by $\max_{\tau}\|\phi(\tau, X)\|_2$. Second we show a similar upper bound for the quantity $Q_{\beta,\psi}$. In that respect it will be easy to re-write the derivative of the soft-max in a manner that is more amenable to analysis:
\begin{lemma}[Re-writting $Q_{\beta,\psi}$]\label{lem:rewritting} Let $U_{\tau,\psi}^\beta=\beta \psi^{\trans}\phi(\tau,X)$. Then we have:
\begin{align*}
    Q_{\beta,\psi}&
    = \sum_{\tau\in\mcT} \phi(\tau, X) \sum_{t\in\mcT} (U_{\tau,\psi}^\beta - U_{t,\psi}^\beta) \frac{\exp(U_{t,\psi}^\beta)\, \exp(U_{\tau,\psi}^\beta)}{\left\{\sum_{t\in\mcT} \exp(U_{t,\psi}^\beta)\right\}^2}\, 
\end{align*}
\end{lemma}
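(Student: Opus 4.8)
The plan is to prove the claimed identity by a purely algebraic expansion; no probabilistic or analytic input is needed, only the elementary fact that the softmax weights form a partition of unity, $\sum_{\tau\in\mcT} W_\tau^{\beta,\psi} = 1$, which is immediate from their definition.

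First I would set up notation: write $w_\tau := W_\tau^{\beta,\psi} = \exp(U_{\tau,\psi}^\beta)/\sum_{s\in\mcT}\exp(U_{s,\psi}^\beta)$ and $u_\tau := U_{\tau,\psi}^\beta = \beta\psi^{\trans}\phi(\tau,X)$, and observe that $w_\tau w_t = \exp(U_{t,\psi}^\beta)\exp(U_{\tau,\psi}^\beta)/\{\sum_{s\in\mcT}\exp(U_{s,\psi}^\beta)\}^2$, so that the right-hand side of the asserted identity equals $\sum_{\tau\in\mcT}\phi(\tau,X)\sum_{t\in\mcT}(u_\tau - u_t)\,w_\tau w_t$. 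This reduces the lemma to verifying that $Q_{\beta,\psi} = \sum_{\tau}\phi(\tau,X)\,w_\tau\sum_t (u_\tau - u_t)\,w_t$.

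Next I would expand the definition of $Q_{\beta,\psi}$, distributing the bracket $\phi(\tau,X) - \sum_t w_t\phi(t,X)$, to get $Q_{\beta,\psi} = \sum_\tau w_\tau u_\tau\phi(\tau,X) - \bigl(\sum_\tau w_\tau u_\tau\bigr)\bigl(\sum_t w_t\phi(t,X)\bigr)$, where the factorization of the second term is legitimate because the scalar $\sum_\tau w_\tau u_\tau$ is constant with respect to the index $t$.

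Finally I would expand the target expression using the partition of unity: since $\sum_t w_t = 1$, we have $\sum_t(u_\tau - u_t)w_t = u_\tau - \sum_t u_t w_t$, so $\sum_\tau \phi(\tau,X)\, w_\tau\sum_t (u_\tau - u_t)\, w_t = \sum_\tau w_\tau u_\tau \phi(\tau,X) - \bigl(\sum_t u_t w_t\bigr)\bigl(\sum_\tau w_\tau\phi(\tau,X)\bigr)$, which is exactly the expression obtained for $Q_{\beta,\psi}$ in the previous step. Every step is an elementary identity, so there is no genuine obstacle here; the only thing requiring care is the bookkeeping of the nested sums and keeping the two summation indices distinct.
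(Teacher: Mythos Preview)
Your proposal is correct and follows essentially the same approach as the paper: both expand $Q_{\beta,\psi}$ into the common intermediate form $\sum_\tau w_\tau u_\tau\phi(\tau,X) - \bigl(\sum_\tau w_\tau u_\tau\bigr)\bigl(\sum_t w_t\phi(t,X)\bigr)$ and then use $\sum_t w_t = 1$ to match it with the target expression. The only cosmetic difference is that the paper transforms $Q_{\beta,\psi}$ step by step into the target, whereas you expand both sides toward this common middle; the algebraic content is identical.
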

\begin{proof}
For simplicity, for this proof we let $U_\tau =  \beta \psi^{\trans}\phi(\tau, X)$ and  $Q_{\beta}=Q_{\beta,\psi}$ and $W_{\tau}^{\beta} = W_{\tau}^{\beta,\psi}$. First we note that $Q_{\beta}$ can be re-written in a more convenient manner for analysis, by simply re-arranging the sums and doing a change of variable names:
\begin{align*}
    Q_{\beta} =~& \sum_{\tau\in\mcT} W_{\tau}^{\beta}\, U_\tau\, \left\{\phi(\tau, X) - \sum_{\tau'\in\mcT} W_{\tau'}^{\beta} \phi(\tau', X)\right\} \\
     =~& \sum_{\tau\in\mcT} W_{\tau}^{\beta}\, U_\tau\, \phi(\tau, X) - \sum_{\tau\in\mcT} W_{\tau}^{\beta}\, U_\tau \sum_{\tau'\in\mcT} W_{\tau'}^{\beta} \phi(\tau', X) \\
     =~& \sum_{\tau\in\mcT} W_{\tau}^{\beta}\, U_\tau\, \phi(\tau, X) - \sum_{\tau'\in\mcT} W_{\tau'}^\beta \phi(\tau', X) \sum_{\tau\in\mcT} W_{\tau}^{\beta}\, U_\tau \\
     =~& \sum_{\tau\in\mcT} W_{\tau}^{\beta}\, U_\tau\, \phi(\tau, X) - \sum_{\tau\in\mcT} W_{\tau}^\beta \phi(\tau, X) \sum_{\tau'\in\mcT} W_{\tau'}^{\beta}\, U_{\tau'} \\
     =~& \sum_{\tau\in\mcT} W_{\tau}^{\beta}\, \left\{U_\tau - \sum_{\tau'\in\mcT} W_{\tau'}^{\beta}\, U_{\tau'}\right\}\, \phi(\tau, X).
\end{align*}
Expanding the definition of the softmax weights $W_{\tau}^\beta$ we get:
\begin{align*}
    J_\tau^\beta :=~& W_{\tau}^{\beta}\, \left\{U_\tau - \sum_{\tau'} W_{\tau'}^{\beta}\, U_{\tau'}\right\}\\ =~&\frac{\exp(U_\tau)}{\sum_{t\in\mcT} \exp(U_t)}\, \left\{U_\tau - \sum_{t\in\mcT} \frac{U_t \exp(U_t)}{\sum_{t'\in\mcT} \exp(U_t)}\right\}
    = \sum_{t\in\mcT} (U_\tau - U_t) \frac{\exp(U_t)\, \exp(U_\tau)}{\left\{\sum_{t'\in\mcT} \exp(U_t)\right\}^2}
\end{align*}
\end{proof}

\begin{lemma}[Boundedness of $Q_{\beta,\psi}$] The quantity $Q_{\beta,\psi}$ is absolutely bounded as:
\begin{align*}
    \|Q_{\beta,\psi}\|_2 \leq \frac{K}{e}\, \sum_{\tau\in\mcT} \|\phi(\tau, X)\|_2
\end{align*}
where $K=|\mathcal{T}|$ is the number of treatments.
\end{lemma}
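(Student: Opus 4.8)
The plan is to reduce the claim to an elementary scalar inequality, starting from the re-expression of $Q_{\beta,\psi}$ in Lemma~\ref{lem:rewritting}. Fix a realization of $X$ and abbreviate $U_\tau = \beta\,\psi^{\trans}\phi(\tau,X)$ and $W_\tau = \exp(U_\tau)/\sum_{s\in\mcT}\exp(U_s)$. By Lemma~\ref{lem:rewritting}, $Q_{\beta,\psi} = \sum_{\tau\in\mcT} J_\tau\,\phi(\tau,X)$ where $J_\tau = \sum_{t\in\mcT}(U_\tau - U_t)\,W_\tau W_t$. Applying the triangle inequality for the Euclidean norm gives $\|Q_{\beta,\psi}\|_2 \le \sum_{\tau\in\mcT}|J_\tau|\,\|\phi(\tau,X)\|_2$, so it suffices to prove $|J_\tau|\le K/e$ for each $\tau$, where $K = |\mcT|$.

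The heart of the argument is the pairwise bound $|U_\tau - U_t|\,W_\tau W_t \le 1/e$, valid for all $\tau,t\in\mcT$. I would argue this by assuming without loss of generality $U_\tau \ge U_t$ (the left-hand side is unchanged under swapping $\tau$ and $t$, and vanishes when $U_\tau = U_t$), then using $\sum_{s}\exp(U_s)\ge \exp(U_\tau)$ to bound $W_\tau W_t \le \exp(U_\tau)\exp(U_t)/\exp(2U_\tau) = \exp\{-(U_\tau - U_t)\}$. With $x := U_\tau - U_t \ge 0$, the left-hand side is then at most $x\exp(-x)$, and the function $x\mapsto x\exp(-x)$ is maximized at $x=1$ with value $1/e$ on $[0,\infty)$ (see Figure~\ref{fig:graph}).

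Summing the pairwise bound over $t\in\mcT$ gives $|J_\tau| \le \sum_{t\in\mcT}|U_\tau - U_t|\,W_\tau W_t \le K/e$, and combining with the triangle inequality step yields $\|Q_{\beta,\psi}\|_2 \le (K/e)\sum_{\tau\in\mcT}\|\phi(\tau,X)\|_2$, as desired. I do not anticipate any real obstacle here: the only subtlety is discarding exactly the right terms from the softmax normalizer so that the ratio $W_\tau W_t$ collapses to $\exp\{-(U_\tau - U_t)\}$, after which the universal bound $xe^{-x}\le 1/e$ does all the work. Since each realization of $X$ is handled identically, the stated bound holds almost surely.
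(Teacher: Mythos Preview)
Your proof is correct and follows essentially the same approach as the paper: both start from the rewriting in Lemma~\ref{lem:rewritting}, bound $|J_\tau|$ termwise by reducing $|U_\tau - U_t|\,W_\tau W_t$ to an expression of the form $x e^{-x}\le 1/e$, and finish with the triangle inequality. The only cosmetic difference is that the paper retains both $\exp(2U_t)+\exp(2U_\tau)$ in the denominator before dropping to $\exp(|U_\tau - U_t|)$, whereas you drop directly to $\exp(2U_\tau)$ after fixing $U_\tau\ge U_t$; either way the same scalar bound emerges.
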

\begin{proof}
For simplicity, for this proof we let $U_\tau =  \beta \psi^{\trans}\phi(\tau, X)$ and  $Q_{\beta}=Q_{\beta,\psi}$ and $W_{\tau}^{\beta} = W_{\tau}^{\beta,\psi}$. We note that:
\begin{equation}\label{J}
\begin{aligned}
    |J_\tau^\beta| =~& \left|\sum_{t\in\mcT} (U_\tau - U_t) \frac{\exp(U_t)\, \exp(U_\tau)}{\left\{\sum_{t\in\mcT} \exp(U_t)\right\}^2}\right|\\
    \leq~& \sum_{t\in\mcT} |U_\tau - U_t| \frac{\exp(U_t)\, \exp(U_\tau)}{\left\{\sum_{t\in\mcT} \exp(U_t)\right\}^2}\\
    \leq~& \sum_{t\in\mcT} |U_\tau - U_t| \frac{\exp(U_t + U_\tau)}{\exp(2 U_t) + \exp(2 U_\tau)}\\
    \leq~& \sum_{t\in\mcT} |U_\tau - U_t| \frac{1}{\exp(U_t - U_\tau) + \exp(U_\tau - U_t)}\\
    \leq~& \sum_{t\in\mcT}  \frac{|U_\tau - U_t|}{\exp(|U_t - U_\tau|)} \leq \frac{K}{e}.
\end{aligned}
\end{equation}
Thus we have:
\begin{align*}
    \|Q_{\beta}\|_2 \leq \sum_{\tau\in\mcT} |J_\tau^\beta| \, \|\phi(\tau, X)\|_2 \leq \frac{K}{e}\, \sum_{\tau\in\mcT} \|\phi(\tau, X)\|_2.
\end{align*}
\end{proof}

  Thus we get that the derivative $\partial_{\psi} m_\beta$ is Lipschitz in $p_1$ with Lipschitz constant:
\begin{align*}
    2\max_{\tau} \|\phi(\tau, X)\|_2 + \frac{K}{e} \sum_{\tau\in\mcT} \|\phi(\tau, X)\|_2 < C
\end{align*}
for some universal constant $C$, by the assumptions of our main Theorem~\ref{thm:main}. Moreover, note that the gradient $\partial_{\psi} m_\beta$ is independent of $\theta$. Thus we get that the gradient is equicontinuous in both $\theta$ and $g$, which is the first condition of Assumption~\ref{ass:influence}.
Moreover, by the aforementioned bounds we get that:
\begin{align*}
    \|\partial_\psi m_{\beta}(Z;\theta_0^\beta, \psi_0, g_0^\beta)\|_\infty =~& \|\tilde{M} \{A_{\beta,\psi_0} + Q_{\beta, \psi_0}-\phi(T_2,X)\}^{\trans}\|_\infty \\
    \leq~& \|\tilde{M}\|_{\infty} \|A_{\beta,\psi_0} + Q_{\beta, \psi_0}-\phi(T_2,X)\|_\infty\\
    \leq~& \|\tilde{M}\|_{\infty}\, C \leq C'
\end{align*}
by our assumption of our main Theorem~\ref{thm:main} on the boundedness of $\|\tilde{M}\|_{2}$. Thus we get that the third condition of Assumption~\ref{ass:influence} is satisfied.

  Next we argue the limit behavior of $A_{\beta,\psi}$ and $Q_{\beta,\psi}$. Note that the first term $A_{\beta,\psi}$ in the derivative converges trivially, by the definition of the softmax to the feature map of the best action under $\psi$; or a uniform distribution over the best actions if there are ties. Hence, this implies that as $\beta\to \infty$, $ A_{\beta,\psi} \to \phi_\infty(X)$ where
\begin{align*}
    \phi_\infty(X) = \frac{1}{|\arg\max_{\tau} \psi^{\trans}\phi(\tau, X)|} \sum_{\tau^* \in \arg\max_{\tau} \psi^{\trans}\phi(\tau, X)} \phi(\tau^*, X).
\end{align*}
The second term $Q_{\beta,\psi}$ is the problematic one, as it involves how the argmax changes as a function of $\psi$. However, we show that the second term $Q_{\beta,\psi}$ converges to zero.

\begin{lemma}[Key Variance Building Block]
The term $Q_{\beta,\psi}$ in the derivative of the softmax $\partial_{\psi} V_{\beta,\psi}$, converges to zero, i.e. as $\beta\to \infty$:
\begin{align*}
    Q_{\beta,\psi} \to 0.
\end{align*}
\end{lemma}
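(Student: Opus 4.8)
The plan is to argue pointwise in $X$: for each fixed realization of $X$ the quantity $Q_{\beta,\psi}$ is a deterministic function of $\beta$, and I will show it tends to $0$ as $\beta\to\infty$. Together with the uniform bound $\|Q_{\beta,\psi}\|_2\le (K/e)\sum_{\tau\in\mcT}\|\phi(\tau,X)\|_2$ already established, this also yields convergence of $\E\{Q_{\beta,\psi}\}$ (and of $\E\{\tilde M Q_{\beta,\psi}^{\trans}\}$) by dominated convergence, which is how the limit is ultimately used in the proof of Lemma~\ref{lem:variance}.

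First I would invoke the representation from Lemma~\ref{lem:rewritting}: writing $U_{\tau,\psi}^\beta=\beta\psi^{\trans}\phi(\tau,X)$,
$$Q_{\beta,\psi}=\sum_{\tau\in\mcT}J_\tau^\beta\,\phi(\tau,X),\qquad J_\tau^\beta=\sum_{t\in\mcT}\bigl(U_{\tau,\psi}^\beta-U_{t,\psi}^\beta\bigr)\frac{\exp(U_{t,\psi}^\beta)\exp(U_{\tau,\psi}^\beta)}{\bigl\{\sum_{t'\in\mcT}\exp(U_{t',\psi}^\beta)\bigr\}^2}.$$
Since $\mcT$ is finite and each $\phi(\tau,X)$ is a fixed vector, it suffices to show that every summand of every $J_\tau^\beta$ tends to zero. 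Fix a pair $(\tau,t)$ and set $\Delta_{\tau t}=\psi^{\trans}\phi(\tau,X)-\psi^{\trans}\phi(t,X)$, so that $U_{\tau,\psi}^\beta-U_{t,\psi}^\beta=\beta\Delta_{\tau t}$. If $\Delta_{\tau t}=0$, the prefactor $U_{\tau,\psi}^\beta-U_{t,\psi}^\beta$ is identically zero and the summand vanishes for every $\beta$. If $\Delta_{\tau t}\neq 0$, I would bound the summand exactly as in the chain of inequalities leading to \eqref{J}: using $\bigl\{\sum_{t'}\exp(U_{t',\psi}^\beta)\bigr\}^2\ge\exp(2U_{t,\psi}^\beta)+\exp(2U_{\tau,\psi}^\beta)$, the ratio is at most $1/\{\exp(U_{t,\psi}^\beta-U_{\tau,\psi}^\beta)+\exp(U_{\tau,\psi}^\beta-U_{t,\psi}^\beta)\}\le\exp(-|U_{\tau,\psi}^\beta-U_{t,\psi}^\beta|)$, so the absolute value of the summand is at most $|U_{\tau,\psi}^\beta-U_{t,\psi}^\beta|\exp(-|U_{\tau,\psi}^\beta-U_{t,\psi}^\beta|)=\beta|\Delta_{\tau t}|\exp(-\beta|\Delta_{\tau t}|)$. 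Because $x\mapsto xe^{-x}\to 0$ as $x\to\infty$ (cf.\ Figure~\ref{fig:graph}) and $\beta|\Delta_{\tau t}|\to\infty$, this bound tends to $0$.

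Summing the finitely many vanishing summands gives $J_\tau^\beta\to 0$ for each $\tau\in\mcT$, and then $Q_{\beta,\psi}=\sum_{\tau\in\mcT}J_\tau^\beta\phi(\tau,X)\to 0$, as claimed. There is no real obstacle here; the only point that needs care is isolating the degenerate directions $\Delta_{\tau t}=0$ — where the exponential-decay estimate is uninformative but the linear prefactor is exactly zero — from the directions $\Delta_{\tau t}\neq 0$, where the conclusion is immediate from the elementary limit $xe^{-x}\to 0$.
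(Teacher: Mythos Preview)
Your proof is correct and follows essentially the same route as the paper: both invoke the representation from Lemma~\ref{lem:rewritting}, bound each summand of $J_\tau^\beta$ by $\beta|\Delta_{\tau t}|\exp(-\beta|\Delta_{\tau t}|)$ via the chain of inequalities in~\eqref{J}, and conclude using $xe^{-x}\to 0$. Your explicit separation of the degenerate case $\Delta_{\tau t}=0$ is a slight clarification over the paper's version, which handles both cases at once since $0\cdot e^{-0}=0$.
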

\begin{proof}
For simplicity, for this proof we let $U_\tau =  \beta \psi^{\trans}\phi(\tau, X)$ and  $Q_{\beta}=Q_{\beta,\psi}$ and $W_{\tau}^{\beta} = W_{\tau}^{\beta,\psi}$ and
\begin{align*}
    J_\tau^\beta =~& \sum_{t\in\mcT} (U_\tau - U_t) \frac{\exp(U_t)\, \exp(U_\tau)}{\left\{\sum_{t'\in\mcT} \exp(U_{t'})\right\}^2}
\end{align*}
It suffices to show that:
\begin{align*}
    \lim_{\beta \to \infty} |J_\tau^\beta | = 0
\end{align*}
Since:
\begin{align*}
    0 \leq |J_\tau^\beta| \leq~& \sum_{t\in\mcT}  \frac{|U_\tau - U_t|}{\exp(|U_t - U_\tau|)} = \sum_{t\in\mcT}  \frac{\beta\, |Z_\tau - Z_t|}{\exp(\beta\, |Z_\tau - Z_t|)}
\end{align*}
where $Z_\tau = \psi^{\trans}\phi(\tau, X)$. Since almost surely:
\begin{align*}
    \lim_{\beta \to \infty} \frac{\beta\, |Z_\tau - Z_t|}{\exp(\beta\, |Z_\tau - Z_t|)} = 0
\end{align*}
we get the desired statement that $\lim_{\beta \to \infty} |J_\tau^\beta| = 0$.
\end{proof}

  Thus we have shown that the gradient converges to the limit:
\begin{align*}
    \lim_{\beta\to \infty} \partial_{\psi} m_{\beta}(Z; \theta_0^\beta, \psi_0, g_0^\beta) = \tilde{M} \{\phi(\pi_{2}^*(X), X) - \phi(T_2, X)\}
\end{align*}
and is uniformly upper bounded by a constant. Note again that if there are ties for maximizing $\psi_0^{\trans}\phi(\tau,X)$ then we take the average: 
$$\phi(\pi_{2}^*(X), X)= \frac{1}{|\arg\max_{\tau} \psi_0^{\trans}\phi(\tau, X)|} \sum_{\tau^* \in \arg\max_{\tau} \psi_0^{\trans}\phi(\tau, X)} \phi(\tau^*, X).$$
Thus by the dominated convergence theorem we get that:
\begin{align*}
    \lim_{\beta\to \infty} \E\left\{\partial_{\psi} m_{\beta}(Z; \theta_0^\beta, \psi_0, g_0^\beta)\right\} = \E[\tilde{M} \{\phi(\pi_{2}^*(X), X) - \phi(T_2, X)\}]
\end{align*}
Thus we conclude that the second condition in Assumption~\ref{ass:influence} is satisfied, with:
\begin{align*}
    J_* = \E[\tilde{M} \{\phi(\pi_{2}^*(X), X) - \phi(T_2, X)\}].
\end{align*}

  It remains to show the fourth condition of Assumption~\ref{ass:influence}.
\begin{lemma}[Order of Hessian]\label{lem: hes}
     When $\beta$ grows as $o(n^{1/2})$, the Hessian of each coordinate $t$ of the moment vector $m_{\beta}$ with respect to $\psi$, grows as:
\begin{align*}
    \sup_{z\in {\cal Z}, \psi,\theta,g\in\mathcal{G}} \left\|\partial_{\psi\psi} m_{\beta,t}(z; \theta, \psi, g)\right\|_{op} =~& o(n^{1/2})
\end{align*}
almost surely.
\end{lemma}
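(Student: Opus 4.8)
The plan is to use that $\psi$ enters the moment $m_\beta(Z;\theta,\psi,g)$ only through the scalar $\softmax_{\tau_2\in\mcT}\psi^\trans\phi(\tau_2,X) - \psi^\trans\phi(T_2,X) = V_{\beta,\psi} - \psi^\trans\phi(T_2,X)$, which multiplies the $\psi$-free vector $\mu(T_1,S)-p_1(S)$, while every other summand in the first bracket of $m_\beta$ is constant in $\psi$. Since $\psi\mapsto\psi^\trans\phi(T_2,X)$ is linear it contributes nothing to the second derivative, so for the $t$-th coordinate of the moment vector we have the exact identity
\begin{align*}
\partial_{\psi\psi}m_{\beta,t}(z;\theta,\psi,g) = \big(\mu(T_1,S)-p_1(S)\big)_t\,\partial_{\psi\psi}V_{\beta,\psi}.
\end{align*}
Because $\|\mu(T_1,S)\|_2$ and the nuisance functions in $\mathcal{G}$ are almost surely uniformly bounded under the hypotheses of Theorem~\ref{thm:main}, it suffices to establish $\sup_{X,\psi}\|\partial_{\psi\psi}V_{\beta,\psi}\|_{op} = O(\beta)$; combined with $\beta = o(n^{1/2})$ this yields the claimed rate uniformly over $z,\theta,\psi,g\in\mathcal{G}$.

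To control $\partial_{\psi\psi}V_{\beta,\psi}$, I would differentiate once more the first-order identity $\partial_\psi V_{\beta,\psi} = A_{\beta,\psi} + Q_{\beta,\psi}$ from the Derivative of Softmax lemma, using $\partial_\psi W_\tau^{\beta,\psi} = \beta\,W_\tau^{\beta,\psi}\{\phi(\tau,X) - A_{\beta,\psi}\}$. For the first term,
\begin{align*}
\partial_\psi A_{\beta,\psi} = \beta\sum_{\tau\in\mcT}W_\tau^{\beta,\psi}\,\phi(\tau,X)\{\phi(\tau,X) - A_{\beta,\psi}\}^\trans,
\end{align*}
whose operator norm is at most $2\beta\,\big(\max_{\tau}\|\phi(\tau,X)\|_2\big)^2$, hence $O(\beta)$ by the almost sure boundedness of $\sum_{\tau}\|\phi(\tau,X)\|_2$.

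For the second term I would begin from the rewriting of Lemma~\ref{lem:rewritting}, namely $Q_{\beta,\psi} = \sum_{\tau\in\mcT}\phi(\tau,X)\,K_\tau^\beta$ with $K_\tau^\beta = \sum_{t\in\mcT}(U_{\tau,\psi}^\beta - U_{t,\psi}^\beta)\,W_t^{\beta,\psi}W_\tau^{\beta,\psi}$ and $U_{\tau,\psi}^\beta = \beta\psi^\trans\phi(\tau,X)$. As $\phi(\tau,X)$ does not depend on $\psi$, only $K_\tau^\beta$ is differentiated, producing two groups of terms: from $\partial_\psi(U_{\tau,\psi}^\beta - U_{t,\psi}^\beta) = \beta\{\phi(\tau,X)-\phi(t,X)\}$, a group bounded in norm by $2\beta\max_\tau\|\phi(\tau,X)\|_2$ since $\sum_t W_t^{\beta,\psi}W_\tau^{\beta,\psi}\le 1$; and from $\partial_\psi\big(W_t^{\beta,\psi}W_\tau^{\beta,\psi}\big) = \beta\,W_t^{\beta,\psi}W_\tau^{\beta,\psi}\{\phi(t,X)+\phi(\tau,X)-2A_{\beta,\psi}\}$, a group of the form $\sum_t (U_{\tau,\psi}^\beta - U_{t,\psi}^\beta)\,W_t^{\beta,\psi}W_\tau^{\beta,\psi}\,\beta\,\{\cdots\}$. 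The crucial observation is the same cancellation that already bounded $Q_{\beta,\psi}$: since $W_t^{\beta,\psi}/W_\tau^{\beta,\psi} = \exp(U_{t,\psi}^\beta - U_{\tau,\psi}^\beta)$ and $W_\tau^{\beta,\psi}\le 1$, we have $W_t^{\beta,\psi}W_\tau^{\beta,\psi}\le \exp(-|U_{\tau,\psi}^\beta - U_{t,\psi}^\beta|)$, so $|U_{\tau,\psi}^\beta - U_{t,\psi}^\beta|\,W_t^{\beta,\psi}W_\tau^{\beta,\psi}\le \sup_{x\ge0}x e^{-x}=1/e$, whence the second group is also $O(\beta)$ with a constant depending only on $K=|\mcT|$ and $\max_\tau\|\phi(\tau,X)\|_2$. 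Therefore $\partial_\psi Q_{\beta,\psi} = \sum_\tau\phi(\tau,X)\big(\partial_\psi K_\tau^\beta\big)^\trans$ has operator norm $O(\beta)$.

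Collecting the two pieces gives $\|\partial_{\psi\psi}V_{\beta,\psi}\|_{op}\le C\beta$ for a constant $C$ depending only on $K$ and the almost sure bound on $\sum_\tau\|\phi(\tau,X)\|_2$, uniformly in $z,\theta,\psi,g\in\mathcal{G}$; multiplying by the almost sure bound on $\|\mu(T_1,S)-p_1(S)\|_\infty$ and invoking $\beta=o(n^{1/2})$ completes the argument. The step I expect to be the real obstacle — and the reason the Hessian is $O(\beta)$ rather than $O(\beta^2)$ — is bounding the second group of terms in $\partial_\psi K_\tau^\beta$: it naively carries two powers of $\beta$, one explicit in $Q_{\beta,\psi}$ and one from differentiating the softmax weights, and only the inequality $x e^{-x}\le 1/e$ applied to $|U_{\tau,\psi}^\beta - U_{t,\psi}^\beta|\,W_t^{\beta,\psi}W_\tau^{\beta,\psi}$ removes one of them. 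This is the second-derivative analogue of the refinement that powers the bias analysis in Lemma~\ref{lem:second order}.
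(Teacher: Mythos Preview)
Your proposal is correct and follows essentially the same approach as the paper: reduce to bounding $\|\partial_{\psi\psi}V_{\beta,\psi}\|_{op}$, differentiate the softmax once more using $\partial_\psi W_\tau^{\beta,\psi}=\beta W_\tau^{\beta,\psi}\{\phi(\tau,X)-A_{\beta,\psi}\}$, and kill the spurious second power of $\beta$ via the inequality $xe^{-x}\le 1/e$ applied to $|U_{\tau,\psi}^\beta-U_{t,\psi}^\beta|\,W_t^{\beta,\psi}W_\tau^{\beta,\psi}$. The only cosmetic difference is that the paper differentiates the representation $Q_{\beta,\psi}=\sum_\tau W_\tau^{\beta,\psi}(U_\tau^{\beta,\psi}-\sum_t W_t^{\beta,\psi}U_t^{\beta,\psi})\{\phi(\tau,X)-A_{\beta,\psi}\}$ and decomposes the Hessian into four pieces $R_1,\ldots,R_4$, whereas you differentiate the equivalent $J_\tau^\beta$-form from Lemma~\ref{lem:rewritting}; the resulting bounds and the key cancellation are identical.
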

  For succinctness, we omit the proof here and include it in Appendix~\ref{sec: order_hes}.

\subsection{Proof of Lemma~\ref{lem: neyman}: Verifying Neyman Orthogonality}\label{sec:neyman}
\begin{proof}
To prove Neyman orthogonality it suffices to show that if we view the moment function as a function of the output of each nuisance function, then the derivative with respect to that finite dimensional vector output, conditional on the variables that correspond to the input of each function is equal to zero. In other words if we write:
\begin{align*}
    m_{\beta}(Z; \theta, \psi, g) = \tilde{m}_{\beta}(Z; \theta, \psi, (q(S), p_1(S), p_{2,\beta}(S)))
\end{align*}
then we need the function $\tilde{m}_{\beta}(z;\theta, \psi, \gamma)$ to satisfy:
\begin{align*}
    J_g(S) = \E\{\partial_{\gamma} \tilde{m}_{\beta}(Z; \theta_0^\beta, \psi_0, g_0^\beta(S))\mid S\} = 0
\end{align*}
We verify this property for each component $g=(q, p_1, p_{2,\beta})$ and each coordinate $t$ of the moment vector separately: for $t'\neq t$
\begin{align*}
    J_{q,t}(S) =~& \E[-\{\mu_t(T_1, S) - p_{1,t}^*(S)\}\mid S]=0\\
    J_{p_{1,t'},t}(S) =~& \E[
        \{\mu_t(T_1, S) - p_{1,t}^*(S)\}\, \theta_{0,t'}^\beta \mid S]=0\\
    J_{p_{1,t},t}(S) =~& \E[
        -\{\epsilon_1^\beta(\theta_0^\beta, \psi_0) - q^*(S) + p_{2,\beta}^*(S) + (\theta_0^\beta)^\top p_1^*(S)\} + \{\mu_t(T_1, S) - p_{1,t}^*(S)\}\, \theta_{0,t}^\beta \mid S]\\
        =~& \E(
        -[\epsilon_1^\beta(\theta_0^\beta, \psi_0) - \E\{\epsilon_1^\beta(\theta_0^\beta, \psi_0)\mid S\}] + \{\mu_t(T_1, S) - p_{1,t}^*(S)\}\, \theta_{0,t}^\beta \mid S)=0\\
    J_{p_{2,\beta}, t}(S) =~& \E\{\mu_t(T_1, S) - p_{1, t}^*(S)\mid S\} = 0
\end{align*}
Similarly, for the smoothness part, it suffices to bound the operator norm of the conditional Hessian, for each coordinate of the moment:
\begin{align*}
    H_{g,t}(S) = \E\{\partial_{\gamma\gamma} \tilde{m}_{\beta, t}(Z; \theta_0^\beta, \psi_0, g_0^\beta(S))\mid S\}
\end{align*}
We analyze each block of the Hessian separately: for $t'\neq t$
\begin{align*}
    H_{q,q,t}(S) =~& 0, & 
    H_{q,p_{1, t'}, t}(S) =~& 0, & 
    H_{q, p_{1, t}, t}(S) =~& 1, & 
    H_{q, p_{2,\beta}, t}(S) =~& 0\\
    & & 
    H_{p_{1, t'},p_{1, t'}, t} =~& 0, &
    H_{p_{1, t'}, p_{1, t}, t}(S) =~& -\theta_{0,t'}^\beta, & 
    H_{p_{1, t'}, p_{2,\beta}, t}(S) =~& 0\\
    & & 
    & &
    H_{p_{1, t}, p_{1, t}, t}(S) =~& -2\theta_{0,t}^\beta, & 
    H_{p_{1, t}, p_{2,\beta}, t}(S) =~& -1\\
    & &
    & &
    & & 
    H_{p_{2,\beta}, p_{2,\beta}, t}(S) =~& 0.
\end{align*}
Since we assumed that $\sup_{\beta}\|\theta_0^\beta\|_2<\infty$ is bounded and the dimension $d$ is a constant, we get that the conditional Hessian is, uniformly over all $\beta>0$, upper bounded by a constant. Thus we get the uniform in $\beta$ smoothness property.
\end{proof}

\subsection{Proof of Lemma~\ref{lem:ratesg}: Beta-Uniform Rates on Nuisance Functions}
\label{sec:betaun}

The proof is an immediate consequence of the following lemma.

\begin{lemma}\label{lem:p2convergence}
% Let $\mathcal{N}$ denote the nuisance vector space where $\hat\psi$ and $\psi_0$ take values from. 
Suppose that for some open neighborhood $\mathcal{N}$ of $\psi_0$ in the nuisance space, for any $\tau\in\mathcal{T},$ we have that $\phi(\tau,X)$ and $\sup_{\psi\in\mathcal{N}}|\psi^{\trans}\phi(\tau,X)|$ are uniformly bounded. Let $\hat p_{2,\beta,\psi}(S)$ denote the estimator for $$p_{2,\beta,\psi}^*(S)=\E[\psi^{\trans}\{\phi(T_2, X) - \phi_\beta(X)\}\mid S].$$
Note that then our estimator $\hat p_{2,\beta} = \hat p_{2,\beta,\hat\psi}.$ For any $\psi\in\mathcal{N}$, define
$$\|\hat p_{2,\beta,\psi} - p_{2,\beta,\psi}^*\|_2=\left(\E_S\left[\left\{\hat p_{2,\beta,\psi}(S) - p_{2,\beta,\psi}^*(S)\right\}^2\right]\right)^{1/2}.$$
Suppose that our estimation algorithm for $\hat{p}_{2,\beta, \psi}$ satisfies the following consistency condition: for all $\epsilon>0$, there exists a neighborhood $\mathcal{N}$ of $\psi_0$: as $n\rightarrow\infty,$
\begin{align*}
\sup_{\beta>0}\ pr\left(\sup_{\psi\in\mathcal{N}}\|\hat p_{2,\beta,\psi} - p_{2,\beta,\psi}^*\|_2 \ge \epsilon n^{-1/4}\right)\to 0.
\end{align*}
Then the consistency condition required of our estimator to achieve asymptotic normality, 
\begin{align*}\|\hat{p}_{2,\beta} - p^*_{2,\beta}\|^2_2~=~ \E_X\left\{\|\hat{p}_{2,\beta}(S) - p^*_{2,\beta}(S)\|_2^2\right\} =o_{p,\rm{unif}(\beta)}\left(n^{-1/2}\right),
\end{align*}
will be satisfied.
\end{lemma}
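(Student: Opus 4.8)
\emph{Proof plan.}
The idea is to separate the two sources of error in $\hat p_{2,\beta}=\hat p_{2,\beta,\hat\psi}$ relative to the asymptotic target $p^*_{2,\beta}=p^*_{2,\beta,\psi_0}$: the error of the regression routine evaluated at the realized point $\hat\psi$, and the drift of the population object $p^*_{2,\beta,\psi}$ as $\psi$ moves from $\psi_0$ to $\hat\psi$. By the triangle inequality for $\|\cdot\|_2$,
\begin{align*}
\|\hat p_{2,\beta}-p^*_{2,\beta}\|_2\ \le\ \|\hat p_{2,\beta,\hat\psi}-p^*_{2,\beta,\hat\psi}\|_2\ +\ \|p^*_{2,\beta,\hat\psi}-p^*_{2,\beta,\psi_0}\|_2 .
\end{align*}
I will argue that each summand on the right is $o_{p,\rm{unif}(\beta)}(n^{-1/4})$; since squaring turns an $o_{p,\rm{unif}(\beta)}(n^{-1/4})$ rate into an $o_{p,\rm{unif}(\beta)}(n^{-1/2})$ rate directly from the definition of uniform-in-$\beta$ convergence, this yields $\|\hat p_{2,\beta}-p^*_{2,\beta}\|_2^2=o_{p,\rm{unif}(\beta)}(n^{-1/2})$, which is exactly the condition demanded in Assumption~\ref{ass:rates}.

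For the first summand, fix $\epsilon>0$ and let $\mathcal N=\mathcal N_\epsilon$ be the neighborhood of $\psi_0$ furnished by the hypothesis. On the event $\{\hat\psi\in\mathcal N_\epsilon\}$ the quantity $\|\hat p_{2,\beta,\hat\psi}-p^*_{2,\beta,\hat\psi}\|_2$ is dominated by $\sup_{\psi\in\mathcal N_\epsilon}\|\hat p_{2,\beta,\psi}-p^*_{2,\beta,\psi}\|_2$, so for every $\beta$,
\begin{align*}
pr\big(\|\hat p_{2,\beta,\hat\psi}-p^*_{2,\beta,\hat\psi}\|_2\ge\epsilon n^{-1/4}\big)\ \le\ pr(\hat\psi\notin\mathcal N_\epsilon)\ +\ pr\Big(\sup_{\psi\in\mathcal N_\epsilon}\|\hat p_{2,\beta,\psi}-p^*_{2,\beta,\psi}\|_2\ge\epsilon n^{-1/4}\Big).
\end{align*}
Taking $\sup_{\beta>0}$ and letting $n\to\infty$: the first term does not depend on $\beta$ and vanishes because $\hat\psi$ is consistent for $\psi_0$ under the conditions of Theorem~\ref{thm:main}---indeed $\sqrt{n}(\hat\psi-\psi_0)=O_p(1)$ by the asymptotic linearity of $\hat\psi$ established in \cite{chernozhukov2020adversarial}---and $\mathcal N_\epsilon$ is a neighborhood of $\psi_0$; the second term vanishes by the assumed uniform-in-$\beta$ consistency of the regression algorithm. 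Hence $\|\hat p_{2,\beta,\hat\psi}-p^*_{2,\beta,\hat\psi}\|_2=o_{p,\rm{unif}(\beta)}(n^{-1/4})$.

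For the second summand, the key step---and the part I expect to be the main obstacle---is a Lipschitz bound for the map $\psi\mapsto p^*_{2,\beta,\psi}$ with a Lipschitz constant that is \emph{uniform in $\beta$}. Writing $p^*_{2,\beta,\psi}(S)=\E[\psi^{\trans}\phi(T_2,X)-V_{\beta,\psi}(X)\mid S]$ with $V_{\beta,\psi}$ the softmax random variable of \eqref{eq: V_smax}, the map $\psi\mapsto\psi^{\trans}\phi(T_2,X)$ is linear with coefficient of norm at most $\sup_{\tau}\|\phi(\tau,X)\|_2$, which is a.s. bounded by hypothesis, and the softmax derivative $\partial_\psi V_{\beta,\psi}=A_{\beta,\psi}+Q_{\beta,\psi}$ is a.s. bounded, uniformly over $\psi$ \emph{and} $\beta$, by a constant---this is precisely the content of the bounds $\|A_{\beta,\psi}\|_2\le\max_\tau\|\phi(\tau,X)\|_2$ and $\|Q_{\beta,\psi}\|_2\le \tfrac{K}{e}\sum_{\tau}\|\phi(\tau,X)\|_2$ established in the proof of Lemma~\ref{lem:variance}, where the factor $\beta$ appearing in $Q_{\beta,\psi}$ is absorbed by the elementary inequality $|a|e^{-|a|}\le 1/e$. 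A naive differentiation of the softmax produces a bare factor of $\beta$, so it is exactly this sharp cancellation that keeps the Lipschitz constant from exploding with $\beta$. Consequently, by the mean value inequality, $|V_{\beta,\psi}(X)-V_{\beta,\psi'}(X)|\le C\|\psi-\psi'\|_2$ for a.e. $X$, hence $\|p^*_{2,\beta,\psi}-p^*_{2,\beta,\psi'}\|_2\le L\|\psi-\psi'\|_2$ for a finite constant $L$ independent of $\beta$. Taking $\psi=\hat\psi$, $\psi'=\psi_0$ gives $\|p^*_{2,\beta,\hat\psi}-p^*_{2,\beta,\psi_0}\|_2\le L\|\hat\psi-\psi_0\|_2=O_p(n^{-1/2})=o_p(n^{-1/4})$, and, being free of $\beta$, this bound is $o_{p,\rm{unif}(\beta)}(n^{-1/4})$. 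Combining the two summands and squaring finishes the argument.
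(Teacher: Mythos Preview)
Your proof is correct and follows essentially the same approach as the paper: the same triangle-inequality split into the regression error at $\hat\psi$ and the drift $\|p^*_{2,\beta,\hat\psi}-p^*_{2,\beta,\psi_0}\|_2$, the same union-bound treatment of the first term via $\{\hat\psi\in\mathcal N\}$, and the same uniform-in-$\beta$ derivative bound on the softmax (the $|a|e^{-|a|}\le 1/e$ cancellation) to control the second term. Your pointwise-in-$X$ Lipschitz argument for $V_{\beta,\psi}$ is slightly more direct than the paper's route of differentiating under the conditional expectation and then applying Jensen, but the key idea and overall structure are the same.
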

\begin{proof}
Note that by triangle inequality we can decompose
    \begin{align*}
        &\|\hat{p}_{2,\beta} - p^*_{2,\beta}\|_2\\
        =~&\left\|\left(\hat{p}_{2,\beta,\hat\psi} - p^*_{2,\beta,\hat\psi}\right) +\left(p^*_{2,\beta,\hat\psi} - p^*_{2,\beta,\psi_0}\right)\right\|_2\\
        \le~&\left\|\hat{p}_{2,\beta,\hat\psi} - p^*_{2,\beta,\hat\psi}\right\|_2 +\left\|p^*_{2,\beta,\hat\psi} - p^*_{2,\beta,\psi_0}\right\|_2.
    \end{align*}
Now by our assumptions, for all $\epsilon > 0$, there exists a neighborhood $\mathcal{N}$ of $\psi_0$ such that as $n\to\infty,$
\begin{align*}
\sup_{\beta>0}pr\left(\sup_{\psi\in\mathcal{N}}\|\hat p_{2,\beta,\psi} - p_{2,\beta,\psi}^*\|_2 \ge \epsilon n^{-1/4}\right)\to0,
\end{align*}
which then implies that
\begin{align*}
&~\sup_{\beta>0}pr\left(\|\hat{p}_{2,\beta} - p^*_{2,\beta}\|_2 \ge 2\epsilon n^{-1/4}\right)\\
\le&~ \sup_{\beta>0}pr\left(\|\hat{p}_{2,\beta,\hat\psi} - p^*_{2,\beta,\hat\psi}\|_2 \ge \epsilon n^{-1/4}\right)
+
\sup_{\beta>0}pr\left(\|p^*_{2,\beta,\hat\psi} - p^*_{2,\beta,\psi_0}\|_2 \ge \epsilon n^{-1/4}\right)\\
\le&~ \sup_{\beta>0}pr\left(\sup_{\psi\in\mathcal{N}}\left\|\hat{p}_{2,\beta,\psi} - p^*_{2,\beta,\psi}\right\|_2 \ge \epsilon n^{-1/4}\right)
+
\sup_{\beta>0}pr\left(\|p^*_{2,\beta,\hat\psi} - p^*_{2,\beta,\psi_0}\|_2 \ge \epsilon n^{-1/4}\right)\\
&~~~+ pr\left(\hat\psi\notin\mathcal{N}\right)\\
=&~ \sup_{\beta>0}pr\left(\|p^*_{2,\beta,\hat\psi} - p^*_{2,\beta,\psi_0}\|_2 \ge \epsilon n^{-1/4}\right)+ o(1)
\end{align*}
as $n\to\infty.$
Now by Mean Value Theorem, we have that
\begin{align*}
&~\left\|p^*_{2,\beta,\hat\psi} - p^*_{2,\beta,\psi_0}\right\|_2^2~=~\E_S\left[\left\{p^*_{2,\beta,\hat\psi}(S) - p^*_{2,\beta,\psi_0}(S)\right\}^2\right]\\
~=&~\E_S\left(\left[\left\{\frac{\partial}{\partial\psi}p_{2,\beta,\psi}^*(S)\Big|_{\psi = \bar{\psi}}\right\}^{\trans}(\hat\psi - \psi_0)\right]^2\right)
\end{align*}
for some $\bar{\psi}$ between $\hat\psi$ and $\psi_0$.
Here
\begin{align*}
&~\frac{\partial}{\partial\psi}p_{2,\beta,\psi}^*(S)\\
~=&~\E\left[\{\phi(T_2, X) - \phi_{\beta,\psi}(X)\} - \sum_\tau\psi^{\trans}\phi(\tau, X) \frac{\partial}{\partial\psi}W_\tau^{\beta,\psi}\ \middle|\ S\right]\\
~=&~\E\left[\{\phi(T_2, X) - \phi_{\beta,\psi}(X)\} - \beta\sum_\tau\psi^{\trans}\phi(\tau, X) W_\tau^{\beta,\psi}\left\{\phi(\tau, X) - \sum_t W_t^{\beta,\psi}\phi(t, X) \right\}\ \middle|\ S\right],
\end{align*}
where in the first equality, we could interchange differentiation with conditional expectation by uniform boundedness conditions. By Cauchy-Schwarz inequality, we know that 
\begin{align*}
&~\E_S\left(\left[\left\{\frac{\partial}{\partial\psi}p_{2,\beta,\psi}^*(S)\Big|_{\psi = \bar{\psi}}\right\}^{\trans}(\hat\psi - \psi_0)\right]^2\right)\\
\le&~ \E_S\left\{\left\|\frac{\partial}{\partial\psi}p_{2,\beta,\psi}^*(S)\Big|_{\psi = \bar{\psi}}\right\|_2^2\cdot\left\|\hat\psi - \psi_0\right\|_2^2\right\}~=~ \E_S\left\{\left\|\frac{\partial}{\partial\psi}p_{2,\beta,\psi}^*(S)\Big|_{\psi = \bar{\psi}}\right\|_2^2\right\}\cdot\left\|\hat\psi - \psi_0\right\|_2^2.
\end{align*}
By Jensen's inequality and Tower Law, we have that
\begin{align*}
&~\E_S\{\|\frac{\partial}{\partial\psi}p_{2,\beta,\psi}^*(S)\Big|_{\psi = \bar{\psi}}\|_2^2\}\\
\le&~ \E_S(
\E[\|\{\phi(T_2, X) - \phi_{\beta,\psi}(X)\} - \beta\sum_\tau\psi^{\trans}\phi(\tau, X) W_\tau^{\beta,\psi}\{\phi(\tau, X) - \sum_t W_t^{\beta,\psi}\phi(t, X) \}\|_2^2\mid  S]
)\\
=&~\E_S\bigg[
\|\{\phi(T_2, X) - \phi_{\beta,\psi}(X)\} - \beta\sum_\tau\psi^{\trans}\phi(\tau, X) W_\tau^{\beta,\psi}\{\phi(\tau, X) - \sum_t W_t^{\beta,\psi}\phi(t, X)\}\|_2^2\bigg]\\
=&~\E_S\bigg[
\bigg\|\{\phi(T_2, X) - \phi_{\beta,\psi}(X)\} \\
&~~~~~~~~~- \beta\sum_\tau\left\{\psi^{\trans}\phi(\tau, X) -\sum_t\psi^{\trans}\phi(t, X)\right\}\, W_\tau^{\beta,\psi}\,\left\{\phi(\tau, X) - \sum_t W_t^{\beta,\psi}\phi(t, X) \right\}\bigg\|_2^2
\bigg]\\
\le&~2\E_S\Bigg[
\left\{\sum_\tau\|\phi(\tau,X)\|_2\right\}^2\\
&~~~~~~~~~~~ + \left\|\beta\sum_\tau\left\{\psi^{\trans}\phi(\tau, X) -\sum_t\psi^{\trans}\phi(t, X)\right\}\, W_\tau^{\beta,\psi}\, \left\{\phi(\tau, X) - \sum_t W_t^{\beta,\psi}\phi(t, X) \right\}\right\|_2^2
\Bigg]\\ 
\le&~2\E_S[
\{\sum_\tau\|\phi(\tau,X)\|_2\}^2+|\mathcal{T}|\exp(-1)\{\sum_\tau\|\phi(\tau,X)\|_2\}^2
]\\
=&~O_{p,\rm{unif}(\beta)}\left(1\right),
\end{align*}
where the penultimate equality follow from the fact that $\sum_\tau W_\tau^{\beta, \psi} = 1$, and the penultimate inequality follows by triangle inequality and the fact $(a+b)^2\le 2(a^2+b^2)$. The last equality follows from the same line of proof in Equation~\eqref{J}.
  Hence, we obtain that
$$\left\|p_{\beta,\hat\psi}^* - p_{\beta,\psi_0}^*\right\|_2~=~ O_{p,\rm{unif}(\beta)}(n^{-1/2})~=~o_{p,\rm{unif}(\beta)}(n^{-1/4}).$$
Hence, indeed
$$\|\hat p_{2,\beta} - p_{2,\beta}^*\|_2=o_{p,\rm{unif}(\beta)}(n^{-1/4}).$$
\end{proof}

\subsection{Proof of Lemma~\ref{lem:stab-stoch}: Critical Radius Implies Stochastic Equicontinuity}\label{sec:equicont}
In this section, we show that the assumptions in Theorem~\ref{thm:main} suffice to imply stochastic equicontinuity (Assumption~\ref{ass:equicont}). To achieve this, we first present a general lemma claiming that small critical radius would imply equicontinuity in the general setting of Theorem~\ref{thm:general-linearity}, and then we will verify that our special instantiation of dynamic treatment regime in Theorem~\ref{thm:main} falls under the scope of the general lemma, by commenting on how the uniform Lipschitz requirement in the general lemma is met.
\begin{lemma}
Assume that functions $a_\beta(Z;g,h_0)$ and $\nu_\beta(Z;g,h_0)$ are uniformly Lipschitz in $g$ for $g\in\mathcal{G}$ over all $\beta>0$. Assume that the the nuisance estimator is almost surely bounded: $\|\hat g^\beta(X)\|_2 \le B$ almost surely for some $B>0.$
% Define $\mathcal{G}_B=\{g\in\mathcal{G}:\|g\|_2 \le B\}$, and define $a_\beta\circ\mathcal{G}_B=\{a_\beta(\cdot;g,h_0):g\in\mathcal{G}_B\}$.
Let $\delta_n$ bound the critical radius of $\mathcal{G}_B=\{g\in\mathcal{G}:\|g\|_2 \le B\}.$
Then if the nuisance estimator $\hat g^\beta$ satisfies the consistency rate
$$\|\hat g^\beta - g_0^\beta\|_2=o_p(n^{-1/4})$$
and if
$$\delta_{n} = o_p(n^{-1/4}),$$
we have that stochastic equicontinuity conditions (Assumption~\ref{ass:equicont}) hold:
\begin{align*}
\begin{aligned}
    n^{-1/2}\left\|\sum_{i=1}^n\left[A(\hat g^\beta,h_0;\beta)- A(g_0^\beta,h_0;\beta)-\{a_\beta(Z_i;\hat g^\beta,h_0) - a_\beta(Z_i;g_0^\beta,h_0)\}\right]\right\|_{op} =~& o_{p}(1)\\
    n^{-1/2}\left\|\sum_{i=1}^n\left[V(\hat g^\beta,h_0;\beta)- V(g_0^\beta,h_0;\beta)-\{\nu_\beta(Z_i;\hat g^\beta,h_0) - \nu_\beta(Z_i;g_0^\beta,h_0)\}\right]\right\|_{2} =~& o_{p}(1).
\end{aligned} 
\end{align*}
\end{lemma}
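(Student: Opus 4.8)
The plan is to reduce both displays in the lemma to a single \emph{localized empirical-process bound} for the ``difference class'' obtained by perturbing the nuisance argument around $g_0^\beta$, and then to combine the $o(n^{-1/4})$ bound on the critical radius with the $o_p(n^{-1/4})$ consistency of $\hat g^\beta$. Since the dimension $p$ is a fixed constant, it suffices to control each entry of the matrix $A$ and each coordinate of the vector $V$ separately (using $\|M\|_{op}\le p\max_{j,k}|M_{j,k}|$ and $\|v\|_2\le\sqrt{p}\max_j|v_j|$), so fix indices $j,k$ and consider the scalar class
\[
\mathcal{F}_\beta\;=\;\Big\{\,z\mapsto a_{\beta,j,k}(z;g,h_0)-a_{\beta,j,k}(z;g_0^\beta,h_0)\ :\ g\in\mcG_B\,\Big\},
\]
together with its star hull, and the analogous class built from $\nu_{\beta,j}$. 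By the uniform-in-$\beta$ Lipschitz hypothesis every element of $\mathcal{F}_\beta$ is an $L$-Lipschitz image of an element of $\mcG_B-g_0^\beta$, a $2B$-bounded subset of the $\beta$-free class $\mcG$; hence $\mathcal{F}_\beta$ has envelope $2LB$, and by a standard argument (contraction, star-hull, comparison of localized Rademacher complexities) its critical radius is at most $C\delta_n$ with $C=C(L,B)$ \emph{independent of $\beta$}. I would also record the elementary bound $\|a_{\beta,j,k}(\cdot;g,h_0)-a_{\beta,j,k}(\cdot;g_0^\beta,h_0)\|_2\le L\,\|g-g_0^\beta\|_2$.

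Next I would invoke the one-step localized uniform law for star-shaped classes (as used in \cite{chernozhukov2020adversarial}, or Wainwright's localization theorem): for a confidence level $\zeta_n\to 0$, say $\zeta_n=1/n$, on an event of probability at least $1-\zeta_n$ — which by the $\beta$-uniformity of the envelope and the critical-radius bound can be taken uniformly over all $\beta>0$ — we have, for every $g\in\mcG_B$,
\[
\big|(\E_n-\E)\{a_{\beta,j,k}(\cdot;g,h_0)-a_{\beta,j,k}(\cdot;g_0^\beta,h_0)\}\big|\ \le\ c\Big(\delta_n\,\|a_{\beta,j,k}(\cdot;g,h_0)-a_{\beta,j,k}(\cdot;g_0^\beta,h_0)\|_2+\delta_n^2\Big),
\]
up to an additive confidence-level correction of order $\log(1/\zeta_n)/n=\log n/n=o(n^{-1/2})$, and likewise for $\nu_{\beta,j}$. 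On the further event $\{\hat g^\beta\in\mcG_B\}\cap\{\|\hat g^\beta-g_0^\beta\|_2\le r_n\}$, which has probability $1-o(1)$ for some $r_n=o_p(n^{-1/4})$ by hypothesis, I would plug $g=\hat g^\beta$ into the display and use the Lipschitz bound to obtain
\[
\big|(\E_n-\E)\{a_{\beta,j,k}(\cdot;\hat g^\beta,h_0)-a_{\beta,j,k}(\cdot;g_0^\beta,h_0)\}\big|\ \le\ c\,L\,\delta_n r_n+c\,\delta_n^2+O\big(\tfrac{\log n}{n}\big)\ =\ o_p(n^{-1/2}),
\]
since $\delta_n r_n=o_p(n^{-1/2})$ (a product of two $o_p(n^{-1/4})$ terms), $\delta_n^2=o(n^{-1/2})$, and $\log n/n=o(n^{-1/2})$. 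Multiplying by $n^{1/2}$, taking the maximum over the constantly many coordinates, and converting back to $\|\cdot\|_{op}$ and $\|\cdot\|_2$ yields exactly the two asserted limits; the $V$-display follows by running the identical argument on the classes generated by $\nu_\beta$.

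The step I expect to be the main obstacle is the \emph{uniformity in $\beta$}: the difference class $\mathcal{F}_\beta$ genuinely varies with $\beta$ (the softmax sits inside $\nu_\beta$), so one cannot simply quote a single critical-radius statement and must instead argue that the Lipschitz reparametrization inflates neither the envelope nor the critical radius by more than a $\beta$-free constant — this is the sole role of the ``uniformly in $\beta$'' qualifier in the Lipschitz hypothesis, and it is what allows the localization constants and exceptional-probability bounds to be chosen independently of $\beta$. A secondary subtlety is that $\mcG$, hence $\mcG_B$, may itself be data-dependent, so ``$\delta_n=o_p(n^{-1/4})$'' should be read as: with probability $1-o(1)$ the realized class has critical radius below a deterministic $o(n^{-1/4})$ sequence, and the entire argument above is carried out on that event.
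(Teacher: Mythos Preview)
Your proposal is correct and follows essentially the same route as the paper: reduce to scalar coordinates, apply a localized empirical-process bound (the paper cites Lemma~14 of \cite{foster2019orthogonal} directly, whereas you spell out the contraction step that transfers the critical radius from $\mcG_B$ to the Lipschitz difference class), take $\zeta_n=1/n$, and conclude from $\delta_n\cdot\|\hat g^\beta-g_0^\beta\|_2+\delta_n^2=o_p(n^{-1/2})$. Your discussion of $\beta$-uniformity is more careful than needed---the statement only asks for $o_p(1)$ along the chosen sequence $\beta=\beta_n$, and the uniform Lipschitz hypothesis already makes the localization constants $\beta$-free---but this does not affect correctness.
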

\begin{proof}
For simplicity, we only prove the first statement here, and similar proofs apply to function $\nu$. For the first statement, it suffices to show that for any $j,k\le p$
    \begin{align*}
        n^{-1/2}\left|\sum_{i=1}^n\left[A_{j,k}(\hat g^\beta,h_0;\beta)- A_{j,k}(g_0^\beta,h_0;\beta)-\{a_{\beta,j,k}(Z_i;\hat g^\beta,h_0) - a_{\beta,j,k}(Z_i;g_0^\beta,h_0)\}\right]\right| =~ o_p(1).
    \end{align*}
In the remainder of the proof we look at a particular $(j,k)$ and hence for simplicity we
 overload notation and we let $a_\beta = a_{\beta,j,k}$ and $A = A_{j,k}.$ 
 
For any $\zeta\in (0,1),$ let $\delta_{n,\zeta}=\delta_n + c_0\{\log(c_1/\zeta)/n\}^{1/2}$, where $\delta_n$ upper bounds the critical radius of the function class $\mathcal{G}_B$, for some appropriately defined universal constants $c_0$ and $c_1$. By Lemma~14 of \cite{foster2019orthogonal}, we know that with probability $1-\zeta$: %for any $g\in\mathcal{G}_B,$
\begin{align*}
    &\left|n^{-1}\sum_{i=1}^n\left[A(\hat g^\beta,h_0;\beta) - A(g_0^\beta,h_0;\beta) - \{a_\beta(Z_i;\hat g^\beta,h_0)- a_\beta(Z_i;g_0^\beta,h_0)\}\right]\right|\\
    \le&~O(\delta_{n,\zeta}\cdot\|\hat g^\beta - g_0^\beta\|_2 + \delta_{n,\zeta}^2).
\end{align*}
Taking $\zeta = n^{-1},$ we obtain that
\begin{align*}
    &~\left|n^{-1}\sum_{i=1}^n\left[A(\hat g^\beta,h_0;\beta) - A(g_0^\beta,h_0;\beta) - \{a_\beta(Z_i;\hat g^\beta,h_0)- a_\beta(Z_i;g_0^\beta,h_0)\}\right]\right|\\
    \le&~ O_p(\delta_{n,*}\cdot\| \hat g^\beta - g_0^\beta\|_2 + \delta_{n,*}^2)
\end{align*}
where we let $\delta_{n,*} = \delta_n + c_0\{\log(c_1n)/n\}^{1/2}$.
Hence, if we have $\hat g^\beta - g_0^\beta = o_p(n^{-1/4})$ and $\delta_{n} = o_p(n^{-1/4})$, and thus $\delta_{n,*} = o_p(n^{-1/4}),$ we can conclude stochastic equicontinuity.
\end{proof}
\begin{remark}
    Our dynamic treatment regime satisfies the uniform Lipschitz condition because of the almost sure boundedness of $\mu(T_1,S)$ and that of the nuisance estimates.
\end{remark}

\subsection{Proof of Lemma~\ref{lem: hes}: Order of Hessian of Softmax}\label{sec: order_hes}
\begin{proof} 
    For simplicity, define
    $$
        K_\tau^{\beta,\psi} = \phi(\tau, X) - \sum_{t\in\mcT} W_{t}^{\beta,\psi}\phi(t, X)
    $$
    and write 
    $$
        U_\tau^{\beta,\psi} =  \beta \psi^{\trans}\phi(\tau, X).
    $$
    We calculate that
    \begin{align*}
        \frac{\partial}{\partial\psi}W_\tau^{\beta,\psi} = \beta W_{\tau}^{\beta,\psi} \{\phi(\tau, X) - \sum_{t\in\mcT} W_{t}^{\beta,\psi}\phi(t, X)\} = \beta W_{\tau}^{\beta,\psi}K_\tau^{\beta,\psi}.
    \end{align*}
    Recall that the first order derivative of the moment function satisfies
    \begin{align*}
        &\partial_{\psi} m_{\beta}(z; \theta, \psi, g) = \{\mu(T_1, S) - p_1(S)\}\, \{A_{\beta,\psi} + Q_{\beta, \psi} - \phi(T_2, X)\}^{\trans}\\
        =~&\{\mu(T_1, S) - p_1(S)\}\,\{A_{\beta,\psi} + \sum_{\tau\in \mcT}  W_{\tau}^{\beta,\psi}U_\tau^{\beta,\psi} K_\tau^{\beta,\psi} - \phi(T_2, X)\}^{\trans}\\
        =~&\{\mu(T_1, S) - p_1(S)\}\,\{A_{\beta,\psi} + \sum_{\tau\in \mcT}  W_{\tau}^{\beta,\psi}(U_\tau^{\beta,\psi} - \sum_t W_t^{\beta,\psi}U_t^{\beta,\psi})K_\tau^{\beta,\psi} - \phi(T_2, X)\}^{\trans},
    \end{align*}
    where we were able to set in the last equality that
    \begin{align*}
        \sum_{\tau\in \mcT}  W_{\tau}^{\beta,\psi}U_\tau^{\beta,\psi} K_\tau^{\beta,\psi}=\sum_{\tau\in \mcT}  W_{\tau}^{\beta,\psi}(U_\tau^{\beta,\psi} - \sum_t W_t^{\beta,\psi}U_t^{\beta,\psi})K_\tau^{\beta,\psi}
    \end{align*}
    since 
    \begin{align*}
        \sum_{\tau\in \mcT}  W_{\tau}^{\beta,\psi}K_\tau^{\beta,\psi} = 0
    \end{align*}
    by exploiting the fact that $\sum_{\tau\in \mcT} W_{\tau}^{\beta,\psi} = 1.$
    Hence, differentiating with respect to $\psi$ again, by chain rule, we obtain that the second order derivative of the moment function
    \begin{align*}
        &\partial_{\psi\psi} m_{\beta,t}(z; \theta, \psi, g)~
        =~\{\mu_t(T_1, S) - p_{1,t}(S)\} \left(R_{1,\beta,\psi} + R_{2,\beta,\psi}+R_{3,\beta,\psi}-R_{4,\beta,\psi}\right) 
    \end{align*}
    where
    \begin{align*}
        R_{1,\beta,\psi}~&=~\beta\sum_{\tau\in \mcT}  W_{\tau}^{\beta,\psi} \phi(\tau,X)(K_\tau^{\beta,\psi})^{\trans},\\
        R_{2,\beta,\psi}~&=~ \beta\sum_{\tau\in \mcT}  W_{\tau}^{\beta,\psi} (U_\tau^{\beta,\psi} - \sum_t W_t^{\beta,\psi}U_t^{\beta,\psi})K_\tau^{\beta,\psi}(K_\tau^{\beta,\psi})^{\trans},
        \\
        R_{3,\beta,\psi}~&=~\beta\sum_{\tau\in \mcT}  W_{\tau}^{\beta,\psi} K_\tau^{\beta,\psi}(K_\tau^{\beta,\psi})^{\trans},\\
        R_{4,\beta,\psi}~&=~\beta\{\sum_{\tau\in \mcT}  W_{\tau}^{\beta,\psi} (U_\tau^{\beta,\psi} - \sum_t W_t^{\beta,\psi}U_t^{\beta,\psi})  \}\{\sum_{\tau\in \mcT}
        W_{\tau}^{\beta,\psi}
        \phi(\tau, X)(K_\tau^{\beta,\psi})^{\trans}\}.
    \end{align*}
We would like to establish that all of 
\begin{align*}
    \sup_{Z\in {\cal Z}, \psi}\|R_{1,\beta,\psi}\|_{op}, \sup_{Z\in {\cal Z}, \psi}\|R_{2,\beta,\psi}\|_{op}, \sup_{Z\in {\cal Z}, \psi}\|R_{3,\beta,\psi}\|_{op},\sup_{Z\in {\cal Z}, \psi}\|R_{4,\beta,\psi}\|_{op}=O_p(\beta).
\end{align*} To achieve this, it suffices to show that for any coordinates $j,k\le n,$ we have that each component
\begin{align*}
    \sup_{Z\in {\cal Z}, \psi}|R_{1,\beta,\psi,j,k}|, \sup_{Z\in {\cal Z}, \psi}|R_{2,\beta,\psi,j,k}|, \sup_{Z\in {\cal Z}, \psi}|R_{3,\beta,\psi,j,k}|, \sup_{Z\in {\cal Z}, \psi}|R_{4,\beta,\psi,j,k}|=O_p(\beta).
\end{align*}
For $R_{1,\beta,\psi}$, we note that
\begin{align*}
|R_{1,\beta,\psi,j,k}|
\le~&\beta\sum_{\tau\in \mcT} W_{\tau}^{\beta,\psi}\cdot|\phi_j(\tau,X)|\cdot|K_{\tau,k}^{\beta,\psi}|
\le\beta\{\sum_{\tau\in \mcT} \|\phi(\tau, X)\|_2\}^2
\end{align*}
where we have used the fact that
\begin{equation}\label{K}
    |K_{\tau,k}^{\beta,\psi}|\le\sum_{\tau\in \mcT}|\phi_k(\tau,X)|\le \sum_{\tau\in \mcT} \|\phi(\tau, X)\|_2.
\end{equation}
Hence, we have that
\begin{align*}
    \sup_{Z\in {\cal Z}, \psi}|R_{1,\beta,\psi,j,k}| \le \beta\left\{\sup_X\sum_{\tau\in \mcT} \|\phi(\tau, X)\|_2\right\}^2 = O_p(\beta)
\end{align*}
by boundedness conditions.
For $R_{2,\beta,\psi}$, we similarly note that
\begin{align*}
|R_{2,\beta,\psi,j,k}|
\le~\beta\sum_{\tau\in \mcT} W_{\tau}^{\beta,\psi}\cdot\left|
U_\tau^{\beta,\psi} - \sum_t W_t^{\beta,\psi}U_t^{\beta,\psi}
\right|\cdot|K_{\tau,j}^{\beta,\psi}|\cdot|K_{\tau,k}^{\beta,\psi}|.
\end{align*}
Now we comment that
\begin{align*}
    W_{\tau}^{\beta,\psi}\cdot\left|
U_\tau^{\beta,\psi} - \sum_t W_t^{\beta,\psi}U_t^{\beta,\psi}
\right| = |J_\tau^{\beta,\psi}|
\end{align*}
where 
\begin{align*}
J_\tau^{\beta,\psi}=W_{\tau}^{\beta,\psi}\left(U_\tau^{\beta,\psi} - \sum_t W_t^{\beta,\psi}U_t^{\beta,\psi}\right),
\end{align*}
and we have shown in Eqn~(\ref{J}) that
\begin{align*}
    |J_\tau^{\beta,\psi}|\le\frac{K}{e}.
\end{align*}
Hence, by Eqn~(\ref{K}) we get that
\begin{align*}
|R_{2,\beta,\psi,j,k}|
\le~\beta\cdot \frac{K}{e}\cdot\sum_{\tau\in \mcT} |K_{\tau,j}^{\beta,\psi}|\cdot|K_{\tau,k}^{\beta,\psi}|\le \beta\cdot \frac{K^2}{e}\left\{\sum_{\tau\in \mcT} \|\phi(\tau, X)\|_2\right\}^2.
\end{align*}
Hence, we have that
\begin{align*}
    \sup_{Z\in {\cal Z}, \psi}|R_{2,\beta,\psi,j,k}| \le \beta\cdot \frac{K^2}{e}\left\{\sup_X\sum_{\tau\in \mcT} \|\phi(\tau, X)\|_2\right\}^2 = O_p(\beta)
\end{align*}
by boundedness conditions.
Similarly for $R_{3,\beta,\psi}$, we have
\begin{align*}
    \sup_{Z\in {\cal Z}, \psi}|R_{3,\beta,\psi,j,k}| \le \beta\left\{\sup_X\sum_{\tau\in \mcT} \|\phi(\tau, X)\|_2\right\}^2 = O_p(\beta).
\end{align*}
For $R_{4,\beta,\psi},$ we note that
\begin{align*}
    |R_{4,\beta,\psi,j,k}| \le \beta\left(\sum_{\tau\in \mcT} J_\tau^{\beta,\psi}\right)\left\{\sum_{\tau\in \mcT} W_\tau^{\beta,\psi}\cdot|\phi_j(\tau,X)|\cdot|K_{\tau,k}^{\beta,\psi}|\right\} = O_p(\beta).
\end{align*}
By invoking Eqn~(\ref{J}) and Eqn~(\ref{K}), we have that
\begin{align*}
    |R_{4,\beta,\psi,j,k}| &\le \beta\left(\sum_{\tau\in \mcT} J_\tau^{\beta,\psi}\right)\left\{\sum_{\tau\in \mcT} W_\tau^{\beta,\psi}\cdot|\phi_j(\tau,X)|\cdot|K_{\tau,k}^{\beta,\psi}|\right\}\\
    &\le\beta\cdot\frac{K^2}{e}\cdot\left\{\sum_{\tau\in \mcT} \|\phi(\tau, X)\|_2\right\}^2 = O_p(\beta).
\end{align*}
Hence, we obtain that
\begin{align*}
    \sup_{Z\in {\cal Z}, \psi}|R_{4,\beta,\psi,j,k}| \le\beta\cdot\frac{K^2}{e}\cdot\left\{\sup_{X}\sum_{\tau\in \mcT} \|\phi(\tau, X)\|_2\right\}^2 = O_p(\beta).
\end{align*}
Hence, since $\sup_{Z\in\mathcal{Z}, g\in\mathcal{G}}\left|\mu_t(T_1,S) - {p}_{1,t}(S)\right|$ is uniformly bounded, for $\beta=o(n^{1/2})$ the lemma follows.
%%%%%%%%%%%%%%%%%%%%%%%
\end{proof}

%%%%%%%%%%%%%%%%%%%%%%%%%
\section{Omitted Proofs from Section~\ref{sec: aysmlinval}}
\subsection{Proof of Theorem~\ref{thm: value_var}}
\begin{proof}   
Define the following moment function 
    $$m_{V,\beta}(Z;v,\theta,\psi)=Y - \psi^{\trans}\phi(T_2, X) - \theta^{\trans}\mu(T_1, S) + \softmax_{\tau_2} \psi^{\trans}\phi(\tau_2, X) + \softmax_{\tau_1} \theta^{\trans}\mu(\tau_1, S)-v$$
    and its mean
    $$M_V(v,\theta,\psi;\beta)=\E\{m_{V,\beta}(Z;v,\theta,\psi)\}.$$
    By definition we have the following moment condition:
    $$M_V(V_\beta^*,\theta_0,\psi_0;\beta)=0.$$
    Moreover, the estimator $\hat V$ satisfies the empirical moment condition:
    $$M_{V,n}(\hat V,\hat\theta,\hat\psi;\beta)=0$$
    where
     $$M_{V,n}(v,\theta,\psi;\beta)=n^{-1}\sum_{i=1}^nm_{V,\beta}(Z_i;v,\theta,\psi).$$
Since we already established asymptotic linearity of the structural parameter estimators $\hat\theta^\beta$ and $\hat\psi$ in Theorem~\ref{thm:main}, we can thus apply Theorem~\ref{thm:general-linearity}, in particular replacing the original structural parameter $\theta_0^\beta$ with the policy value parameter $V^*_\beta$ and replacing the original nuisance parameter $h_0$ with $\theta_0,\psi_0$, while ignoring the part of the nuisance $g$ that involves orthogonality. To achieve this, we only need to check that the assumptions of Theorem~\ref{thm:general-linearity} are satisfied. For Assumption~\ref{ass:influence}, note that 
$$\partial_\psi m_{V,\beta}(Z;v,\theta,\psi)=\partial_\psi V_{\beta,\psi}-\phi(T_2,X)$$
where recall in \eqref{eq: V_smax} we have defined
$$V_{\beta,\psi}=\softmax_{\tau_2} \psi^{\trans}\phi(\tau_2, X).$$
Hence, the derivative $\partial_\psi m_\beta(Z;v,\theta_0,\psi_0)$ does not depend on $v$ and is thus automatically equicontinuous with respect to $v$. Moreover, as we already established in Section~\ref{sec:main-var}, we have that 
$$\E(\partial_\psi V_{\beta,\psi}) = \E(A_{\beta,\psi}) + \E(Q_{\beta,\psi})\to\E\{\phi(\pi_2^*(X),X)\}$$
as $\beta\to\infty$, where $A_{\beta,\psi}, Q_{\beta,\psi}$ are the terms that we defined in the decomposition \eqref{eq: decom}. Hence, we obtain that
$$\E\{\partial_\psi m_{V,\beta}(Z;V_\beta^*,\theta_0,\psi_0)\}\to\E\{\phi_\infty(X) - \phi(T_2,X)\}$$
as $\beta\to\infty,$ 
where recall that
$$
    \phi_\infty(X)=|\mathcal{M}(X)|^{-1}\sum_{\tau\in\mathcal{M}(X)}\phi(\tau, X)
$$
for $\mathcal{M}(X)=\{\tau: \psi_0^{\trans}\phi(\tau, X) = \max_{t}\psi_0^{\trans}\phi(t, X)\}.$

The third and fourth conditions of Assumption~\ref{ass:influence} (uniform boundedness of the derivative and order of Hessian condition) follow similarly as in Section~\ref{sec:main-var}. Analogously, we can also show that the derivative of the moment function with respect to the first-period structural parameter $\partial_\theta m_{V,\beta}(Z;v,\theta,\psi)$ also satisfies Assumption~\ref{ass:influence} with the limit
$$\lim_{\beta\to\infty}\E\{\partial_\theta m_\beta(Z;V_\beta^*,\theta_0,\psi_0)\}=\E\{\mu_\infty(S)-\mu(T_1,S)\},$$
where recall that
$$\mu_\infty(S)=|\mathcal{L}(S)|^{-1}\sum_{\tau\in\mathcal{L}(S)}\mu(\tau, S),$$
for ${\cal L}(S)=\{\tau: \theta_0^{\trans}\mu(\tau, S) = \max_{t}\theta_0^{\trans}\mu(t, S)\}$.

By dominated convergence theorem, the first condition of Assumption~\ref{ass:mbeta-smooth} holds with 
$$m_{V,\beta}(Z_1;V_\beta^*,\theta_0,\psi_0) - m_{V,*}(Z_1;V^*,\theta_0,\psi_0)$$
where 
$$m_{V,*}(Z;v,\theta,\psi)=Y-\psi^{\trans}\phi(T_2,X)-\theta^{\trans}\mu(T_1,S) + \max_{\tau_2}\psi^{\trans}\phi(\tau_2,X) +  \max_{\tau_1} \theta^{\trans}\mu(\tau_1,S) - v.$$
To check the second condition of Assumption~\ref{ass:mbeta-smooth}, it suffices to note the linear form of the moment function $m_{V,\beta}$:
\begin{align*}
    m_{V,\beta}(Z; v, \theta, \psi) = a_{V,\beta}(Z;\theta, \psi)\, v + \nu_{V,\beta}(Z;\theta, \psi)
\end{align*}
where
$$a_{V,\beta}(Z;\theta, \psi)=-1$$
and
$$\nu_{V,\beta}(Z;\theta, \psi)=Y - \psi^{\trans}\phi(T_2, X) - \theta^{\trans}\mu(T_1, S) + \softmax_{\tau_2} \psi^{\trans}\phi(\tau_2, X) + \softmax_{\tau_1} \theta^{\trans}\mu(\tau_1, S).$$ 
Assumptions~\ref{ass:ortho}, \ref{ass:rates}, \ref{ass:equicont} trivially hold true because there is no nuisance parameters with respect to which the moment $M_V$ is Neyman orthogonal. Assumption~\ref{ass:regularity} is also simple to check with conditions of Theorem~\ref{thm:main} and our new regularity conditions.
\end{proof}

\subsection{Proof of Lemma~\ref{lem: value_bias}}
\begin{proof}
We note that the difference 
$$V_\beta^* - V^* = \E\{\softmax_{\tau_2}\psi_0^{\trans}\phi(\tau_2,X) - \max_{\tau_2}\psi_0^{\trans}\phi(\tau_2,X)\} + \E\{\softmax_{\tau_1}\theta_0^{\trans}\mu(\tau_1,S) - \max_{\tau_1}\theta_0^{\trans}\mu(\tau_1,S)\}.$$
For the first term on the right hand side, we notice that
\begin{align*}
&\left|\E\{\softmax_{\tau_2}\psi_0^{\trans}\phi(\tau_2,X) - \max_{\tau_2}\psi_0^{\trans}\phi(\tau_2,X)\}\right|\\
=&\left|\E\{\sum_{\tau\in\mathcal{T}} W_\tau^\beta (U_\tau^{\psi_0} - U_{\max}^{\psi_0})\}\right|\leq  \E\left(\sum_{\tau\in\mathcal{T}} W_\tau^\beta \left|U_\tau^{\psi_0} - U_{\max}^{\psi_0}\right|\right)
    =\sum_{\tau\in\mathcal{T}} \E\left(W_\tau^\beta \left|U_\tau^{\psi_0} - U_{\max}^{\psi_0}\right|\right). 
\end{align*}
Note that:
 \begin{align*}
    \E\left(W_\tau^\beta \left|U_\tau^{\psi_0} - U_{\max}^{\psi_0}\right|\right) =& \E\left\{\frac{\exp(\beta U_{\tau}^{\psi_0}) \left(U_{\max}^{\psi_0} - U_\tau^{\psi_0}\right)}{\sum_{t\in\mcT} \exp(\beta U_t^{\psi_0})} \right\}\\
    \leq& \E\left\{\frac{\exp(\beta U_{\tau}^{\psi_0})\left(U_{\max}^{\psi_0} - U_\tau^{\psi_0}\right)}{\exp(\beta U_{\max}^{\psi_0})} \right\}\\
    =& \E\left[\exp\{-\beta (U_{\max}^{\psi_0} - U_{\tau}^{\psi_0})\} \left(U_{\max}^{\psi_0} - U_\tau^{\psi_0}\right)\right].
 \end{align*}
 Applying Lemma~\ref{lem:second order}, for any fixed $\epsilon>0$ we have that for $n$ sufficiently large, such that $(1+\epsilon)\log(\beta)/\beta\leq c$ and $(1+\epsilon)\log(\beta) \geq 1$:
 \begin{align*}
    \E\left[\exp\{-\beta (U_{\max} - U_{\tau})\} \left(U_{\max} - U_\tau\right)\right] \leq \frac{H}{\beta^{1+\delta}} + \frac{(1+\epsilon)\log(\beta)}{\beta^{2+\epsilon}}
 \end{align*}
 for appropriately chosen $H$ and $\delta$. Hence, we know that
 $$\left|\E\{\softmax_{\tau_2}\psi_0^{\trans}\phi(\tau_2,X) - \max_{\tau_2}\psi_0^{\trans}\phi(\tau_2,X)\}\right| \le |\mathcal{T}|\cdot\left(\frac{H}{\beta^{1+\delta}} + \frac{(1+\epsilon)\log(\beta)}{\beta^{2+\epsilon}}\right).$$
Similarly, we can also obtain that
$$\left|\E\{\softmax_{\tau_1}\theta_0^{\trans}\mu(\tau_1,S) - \max_{\tau_1}\theta_0^{\trans}\mu(\tau_1,S)\} \right| \leq |\mathcal{T}|\cdot\left(\frac{\tilde H}{\beta^{1+\tilde \delta}} + \frac{(1+\epsilon)\log(\beta)}{\beta^{2+\epsilon}}\right).$$ We therefore can conclude that 
 \begin{align*}
     \sqrt{n}\left|V_\beta^* - V^*\right| \leq |\mcT| n^{1/2} \left(\frac{H}{\beta^{1+\delta}} +\frac{\tilde H}{\beta^{1+\tilde \delta}} + \frac{2(1+\epsilon)\log(\beta)}{\beta^{2+\epsilon}}\right).
 \end{align*}
If $\beta=\omega(n^{1/(2(1+\min\{\delta,\tilde \delta\}))})$, then the latter upper bound converges to $0$ as $n\to\infty$.
\end{proof}

\section{Variant of Main Theorem with Cross-Fitting}\label{app:cfit}

We consider a cross-fitted version of our estimation process. We split the entire dataset into two parts and train the nuisance functions and the second period structural parameter estimator $\hat\psi$ using one data split, while we train the first period structural parameter estimator $\hat\theta^\beta$ using the other split. We formally define the process as follows (without loss of generality we assume that the data size $n$ is even):

\paragraph{Step 1: Cross-fitted estimation procedure to obtain estimate $\hat{\psi}$ and cross-fold variants $\hat{\psi}^{(l)}$, for $l\in \{1, 2\}$, of second period structural parameter $\psi_0$.}

\begin{tabbing}
    \qquad \enspace For each fold $l\in \{1, 2\}$ of the data\\
    \qquad \qquad \enspace Construct estimate $\hat{\psi}^{(l)}$ of second period structural parameter $\psi_0$ as follows:\\
    \qquad \qquad \enspace Randomly split the data fold $S_l$ into two parts: $S_{l,1}$ and  $S_{l,2}$.\\
    \qquad \qquad \enspace For each fold $l'\in \{1, 2\}$\\
    \qquad \qquad \qquad \enspace Construct estimates $\hat{h}_{l'}^{(l)}, \hat{r}_{l'}^{(l)}$ of the nuisance functions $h^*$, $r^*$\\
    \qquad \qquad \qquad \enspace as defined in Equation~\eqref{eqn:period2-nuisance} of the main text using data fold $S_{l,l'}$.\\
    \qquad \qquad \enspace For each data point $i\in S_l$\\
    \qquad \qquad \qquad \enspace Define residuals $\check{Y}_i^{(l)}=Y_i-\hat{h}_{l'(i)}^{(l)}(X_i)$ and $\check{\Phi}_i^{(l)}=\phi(T_{2,i},X_i) - \hat{r}_{l'(i)}^{(l)}(X_i)$,\\
    \qquad \qquad \qquad \enspace where $l'(i)$ is chosen so that $i\notin S_{l,l'(i)}$.\\
    \qquad \qquad Solve $\hat{\psi}^{(l)}$ from the empirical moment equation, with respect to $\psi$:
\end{tabbing}
\begin{align*}
    n^{-1} \sum_{i\in S_l} \{\check{Y}_i^{(l)} - \psi ^{\trans}\check{\Phi}_i^{(l)}\} \check{\Phi}_i^{(l)} = 0.
\end{align*}
\begin{tabbing}
    \qquad \enspace Obtain aggregated estimate $\hat\psi$ by solving the following equation, with respect to $\psi$:
\end{tabbing}
\begin{align*}
    n^{-1}\sum_{i\in S_1} \{\check{Y}_i^{(2)} - \psi ^{\trans}\check{\Phi}_i^{(2)}\}\check{\Phi}_i^{(2)} + n^{-1}\sum_{i\in S_2} \{\check{Y}_i^{(1)} - \psi ^{\trans}\check{\Phi}_i^{(1)}\}\check{\Phi}_i^{(1)} = 0.
\end{align*}

\paragraph{Step 2: Cross-fitted estimation procedure to obtain estimate $\hat{\theta}^\beta$ of first period structural parameter $\theta_0$.}
\begin{tabbing}
   \qquad \enspace For each fold $l\in \{1, 2\}$ of the data\\
   \qquad \qquad \enspace Construct estimates $\hat{q}^{(l)}, \hat{p}_1^{(l)}$ of the nuisance functions
$q^*, p_1^*$\\
    \qquad \qquad \qquad \enspace defined in Equation~\eqref{eqn:period1-nuisance} of the main text using data from fold $S_l$.\\
    \qquad \qquad \enspace Construct estimate $\hat{p}_{2,\beta}^{(l)}$ of $p_{2,\beta}^*$. The estimate $\hat{p}_{2,\beta}^{(l)}$ is constructed by regressing\\
    \qquad \qquad \qquad \enspace $\{\hat{\psi}^{(l)}\}^{\trans}\phi(T_2, X) - \softmax_{\tau} \{\hat{\psi}^{(l)}\}^{\trans}\phi(\tau, X)$ on $S$ using data from fold $S_l$. \\
    \qquad \qquad \enspace For each data point $i\in S$\\
    \qquad \qquad \qquad \enspace Let $l$ be such that $i\notin S_\ell$.\\
    \qquad \qquad \qquad \enspace Define $\hat{Y}_i^{(l)}=Y_i - \hat{q}^{(l)}(S_i)$ and $\hat{M}^{(l)}=\mu(T_{1,i}, S_i) - \hat{p}_1^{(l)}(S_i)$.\\
    \qquad \qquad \qquad \enspace Define $\hat{\Phi}^{(l)}_i = \{\hat{\psi}^{(l)}\}^{\trans}\phi(T_{2,i}, X_i) - \softmax_{\tau} \{\hat{\psi}^{(l)}\}^{\trans}\phi(\tau, X_i) - \hat{p}_{2,\beta}^{(l)}(S_i)$.\\
    \qquad \qquad Construct $\hat{\theta}^\beta$ as the solution, with respect to $\theta$, of the equation:
\end{tabbing}
\begin{align*}
    \frac{1}{n}\sum_{i\in S_1} \{\hat{Y}^{(2)}_i - \hat{\Phi}^{(2)}_i - \theta ^{\trans}\hat{M}^{(2)}\} \hat{M}^{(2)}+\frac{1}{n}\sum_{i\in S_2} \{\hat{Y}^{(1)} - \hat{\Phi}^{(1)} - \theta ^{\trans}\hat{M}^{(1)}\} \hat{M}^{(1)} = 0.
\end{align*}

\begin{theorem}[Main Theorem with Crossfitting]\label{thm:main2}
Assume that the random variables $\{\max_{t} \psi_0 ^{\trans}\phi(t,X) - \psi_0 ^{\trans}\phi(\tau, X)\}_{\tau\in \mcT}$ are almost surely bounded and each admits a density $f_\tau$ on $(0,c)$ for some constant $c>0$, that satisfies $f_\tau(x)\leq H/x^{1-\delta}$, for some $0\leq H<\infty$ and $0<\delta\leq 1$. Suppose that $\beta=\omega(n^{1/\{2(1+\delta)\}})$ and $\beta=o(n^{1/2})$. Moreover, suppose that for each split $l,l' \in \{1,2\}$, the nuisance estimates satisfy the rate conditions:
$$\|h^*-\hat{h}_{l'}^{(l)}\|_{2}, \|r^* - \hat{r}_{l'}^{(l)}\|_2, \|q^* - \hat q^{(l)}\|_2, \|p_1^* - \hat p_1^{(l)}\|_2=o_p(n^{-1/4}),$$ 
and that
$$\|\hat{p}_{2,\beta}^{(l)}-p_{2,\beta}^*\|_2=o_{p,\rm{unif}(\beta)}(n^{-1/4}).$$
Assume the nuisance estimates $\hat q^{(1)}(S), \hat p_1^{(1)}(S), \hat p_{2,\beta}^{(1)}(S),\hat q^{(2)}(S), \hat p_1^{(2)}(S), \hat p_{2,\beta}^{(2)}(S)$ are all almost surely bounded. Moreover, assume the following boundedness conditions:
$$\sup_{\beta>0}\|\theta_0^\beta\|_2<\infty$$
and assume the random variables
$$ \sum_\tau \|\phi(\tau,X)\|_2, \|\tilde{M}\|_2,\tilde{Y}, \sup_{\beta>0}\|p_{2,\beta}^*(S)\|_2, \|\mu(T_1,S)\|_2, \sup_{\tau\in \mathcal{T}, \psi \in \mathcal{N}} |\psi^{\trans} \phi(\tau, X)|$$
are all almost surely bounded, where $\tilde{M}= \mu(T_1,S) - p_1^*(S)$, $\tilde{Y}= Y - q^*(S)$. Assume that the matrix $\E(\tilde{M} \tilde{M} ^{\trans})$ is bounded and strictly positive definite so that its inverse matrix $\E(\tilde{M} \tilde{M} ^{\trans})^{-1}$ exists. Then the estimates $\hat{\psi}, \hat{\theta}^\beta$ are asymptotically linear, i.e. for any $l\in\{1,2\}$
\begin{align*}
    n^{1/2}(\hat{\psi} - \psi_0) = n^{-1/2} \sum_{i=1}^n \rho_{\psi}(Z_i) + o_p(1), \quad
    n^{1/2}(\hat{\theta}^\beta - \theta_0) = n^{-1/2} \sum_{i=1}^n \rho_{\theta}(Z_i) + o_p(1).
\end{align*}
Moreover, asymptotically valid confidence intervals, with target coverage level $\alpha$, can be constructed via any consistent estimates $\hat{\sigma}_\psi^2, \hat{\sigma}_\theta^2$ of the variances $\sigma_\psi^2 = \E\{\rho_\psi(Z)^2\}$ and $\sigma_\theta^2 = \E\{\rho_\theta(Z)^2\}$ as:
\begin{align*}
    CI_{\psi}(\alpha) =~& [\hat{\psi} \pm n^{-1/2}z_{1-\alpha/2} \hat{\sigma}_{\psi}], &
    CI_{\theta}(\alpha) =~& [\hat{\theta}^\beta \pm n^{-1/2}z_{1-\alpha/2} \hat{\sigma}_{\theta}], 
\end{align*}
where $z_{q}$ is the $q$-th quantile of the standard normal distribution.
\end{theorem}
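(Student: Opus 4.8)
The plan is to mirror the proof of Theorem~\ref{thm:main} almost verbatim, with the single structural change that the cross-fitted version of the general asymptotic linearity Theorem~\ref{thm:general-linearity} (promised in the remark following that theorem) replaces the original, so that the small-critical-radius hypothesis is no longer needed. First I would set up the instantiation exactly as in Section~\ref{sec:instant}: the target parameter is the first-period structural parameter $\theta_0^\beta$, the finite-dimensional nuisance $h_0$ is $\psi_0$, the nuisance functions are $g_0^\beta = (q^*, p_1^*, p_{2,\beta}^*)$, and the moment is $m_\beta$ of Equation~\eqref{eqn:apx-moment-def}, which is linear in $\theta$. The verifications of Assumption~\ref{ass:influence} (Lemma~\ref{lem:variance}), Assumption~\ref{ass:mbeta-smooth} (Lemma~\ref{lem: limitbeta}), Assumption~\ref{ass:ortho} (Lemma~\ref{lem: neyman}), and Assumption~\ref{ass:regularity} (Lemma~\ref{lem:regul}) carry over without change, since each is a statement about the population moment and the stated boundedness conditions and does not involve the estimator or any complexity restriction. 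Assumption~\ref{ass:rates} is assumed directly in the statement of Theorem~\ref{thm:main2} (for each fold $l$, including the $o_{p,\rm{unif}(\beta)}(n^{-1/4})$ rate on $\hat p_{2,\beta}^{(l)}$), so Lemma~\ref{lem:p2convergence} need not be invoked.

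The only assumption I would re-verify is the stochastic equicontinuity Assumption~\ref{ass:equicont}, and here I would replace the critical-radius argument of Section~\ref{sec:equicont} by the standard sample-splitting decoupling. Fixing a fold $l$, the per-fold nuisance estimates $\hat g^{\beta,(l)}$ and the estimate $\hat\psi^{(l)}$ are measurable functions of the data in $S_l$ only, while the empirical moment defining $\hat\theta^{\beta,(l)}$ is evaluated on the complementary fold; hence, conditionally on $S_l$, the centered process $\sqrt{n}\{A_n(\hat g^{\beta,(l)}) - A(\hat g^{\beta,(l)}) - A_n(g_0^\beta) + A(g_0^\beta)\}$ (and its $V$-analogue) is an average of conditionally i.i.d.\ and conditionally mean-zero terms. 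Using the uniform-in-$\beta$ Lipschitzness of $a_\beta$ and $\nu_\beta$ in $g$ — which follows from the almost-sure boundedness of $\mu(T_1,S)$ and of the nuisance estimates, as in the remark after the lemma in Section~\ref{sec:equicont} — its conditional second moment is $O(\|\hat g^{\beta,(l)} - g_0^\beta\|_2^2) = o_{p,\rm{unif}(\beta)}(n^{-1/2})$, so conditional Chebyshev together with a Markov argument to discharge the conditioning gives that this term is $o_p(1)$, which is exactly Assumption~\ref{ass:equicont}.

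With all six assumptions verified, the cross-fitted form of Theorem~\ref{thm:general-linearity} yields, for each fold, asymptotic linearity of $\hat\theta^{\beta,(l)}$ around $\theta_0^\beta$ with influence function $\rho_\theta$; since the reported estimator $\hat\theta^\beta$ solves the averaged moment equation over the two folds, a routine argument shows that the aggregation preserves the influence function, giving $n^{1/2}(\hat\theta^\beta - \theta_0^\beta) = n^{-1/2}\sum_{i=1}^n \rho_\theta(Z_i) + o_p(1)$. Cross-fitted asymptotic linearity of $\hat\psi$ around $\psi_0$, with influence function $\rho_\psi$, follows from the results of \cite{chernozhukov2020adversarial,chernozhukov2018double} on residual-on-residual moment problems. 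Then Lemma~\ref{lem:bias} gives $n^{1/2}(\theta_0^\beta - \theta_0) = o(1)$ under $\beta = \omega(n^{1/\{2(1+\delta)\}})$, and chaining this with the linearity around $\theta_0^\beta$ yields asymptotic linearity of $\hat\theta^\beta$ around $\theta_0$. The confidence-interval claims then follow from Slutsky's theorem together with any consistent variance estimator, exactly as in Theorem~\ref{thm:main}.

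The part I expect to demand the most care is the equicontinuity step: checking that the conditional second-moment bound in the sample-splitting argument is genuinely uniform in $\beta$ (so that one obtains $o_{p,\rm{unif}(\beta)}$ and not merely $o_p$), together with the bookkeeping required to combine the two per-fold asymptotically linear expansions into a single influence function for the aggregate $\hat\theta^\beta$ — in particular confirming that using $\hat\psi^{(l)}$ inside the construction of $\hat p_{2,\beta}^{(l)}$ does not break the independence between the held-out evaluation fold and the fitted nuisances, which it does not since both $\hat\psi^{(l)}$ and $\hat p_{2,\beta}^{(l)}$ depend only on $S_l$.
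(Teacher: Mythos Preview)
Your proposal is correct and follows essentially the same approach as the paper's proof in Section~\ref{sec: pf_main2}: the population-level assumptions (Lemmas~\ref{lem:variance}, \ref{lem: limitbeta}, \ref{lem: neyman}, \ref{lem:regul}) carry over unchanged, the rate condition is taken directly from the hypotheses, stochastic equicontinuity is obtained by the conditional-on-fold second-moment argument (formalized in the paper as Lemma~\ref{lem: split_equi}), the cross-fitted analogue of Theorem~\ref{thm:general-linearity} (Section~\ref{sec: mainThmSampleSplitting}) is applied using per-fold asymptotic linearity of $\hat\psi^{(l)}$ from \cite{chernozhukov2020adversarial}, and Lemma~\ref{lem:bias} closes the $\theta_0^\beta\!-\!\theta_0$ gap. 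The only cosmetic difference is that the paper's cross-fitted general theorem works directly with the aggregated estimator $\hat\theta^\beta$ that solves the summed moment equation, rather than first analyzing per-fold estimators $\hat\theta^{\beta,(l)}$ and then aggregating; for moments linear in $\theta$ these routes are equivalent.
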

Analogous to Theorem~\ref{thm:main}, Theorem~\ref{thm:main2} will naturally follow from a combination of Lemma~\ref{lem:bias} and a variant of Corollary~\ref{thm:specific-linearity}, which we present below.

\subsection{Asymptotic Linearity Theorem Adapted to Cross-Fitting Approach}\label{sec: mainThmSampleSplitting} 
As was the case without cross fitting, we still
consider the following generalized method of moments framework:
$$
M(\theta, g, h;\beta) = \E_Z\{m_\beta(Z; \theta, g, h)\}, \quad
    M(\theta_0^{\beta_n}, g_0^{\beta_n}, h_0;\beta_n) = 0.
$$
Of note is the distinction that for the cross fitting approach, we split the entire dataset $S=[n]$ into two disjoint sets $S_1$ and $S_2$ where without loss of generality let $|S_1| = |S_2| =n/2,$  and we train $\hat h$ and $\hat g^\beta$ on a different split of the data from the one where we evaluate the empirical moments: for each split $l\in\{1,2\}$, we denote the corresponding estimators $\hat h$ and $\hat g^\beta$ of trained on $S_l$ as $\hat h^{(l)}$ and $\hat g^{\beta, (l)}$. Then consider the cross-fitted estimator $\hat\theta^\beta$ that satisfies
\begin{align*}
    n^{-1} \sum_{i\in S_1}m_\beta(Z_i;\hat\theta^\beta, \hat g^{\beta, (2)}, \hat h^{(2)}) +  n^{-1} \sum_{i\in S_2}m_\beta(Z_i;\hat\theta^\beta, \hat g^{\beta, (1)}, \hat h^{(1)}) = o_p(n^{-1/2}).
\end{align*}
We will keep assuming the original Assumptions~\ref{ass:influence}, \ref{ass:mbeta-smooth}, \ref{ass:ortho}, \ref{ass:regularity} since they do not involve any properties of estimators. We extend Assumption~\ref{ass:rates} and Assumption~\ref{ass:equicont} to the cross-fitting setting as follows:
\begin{assumption}[Rates for $g$, Cross-Fitted Version]\label{ass:rates_new}
Suppose that the nuisance estimates $\hat g^{\beta, (1)},\hat g^{\beta, (2)}\in \mathcal{G}$ satisfy:
\begin{align*}
\begin{aligned}
    \|\hat g^{\beta, (1)} - g_0^\beta\|^2_2~=~ \E_X\{\|\hat g^{\beta, (1)}(X) - g_0^\beta(X)\|_2^2\} &=o_{p,\rm{unif}(\beta)}(n^{-1/2}),\\
    \|\hat g^{\beta, (2)} - g_0^\beta\|^2_2~=~ \E_X\{\|\hat g^{\beta, (2)}(X) - g_0^\beta(X)\|_2^2\}&=o_{p,\rm{unif}(\beta)}(n^{-1/2}).
\end{aligned}
\end{align*}
\end{assumption}

\begin{assumption}[Equicontinuity, Cross-Fitted Version]\label{ass:equicont_new}
Suppose that $\beta$ grows at rate such that the moment $m$ satisfies the stochastic equicontinuity conditions: 
\begin{align*}
    \sqrt{n}\|A(\hat g^{\beta, (1)},h_0;\beta) - A(g_0^\beta, h_0;\beta)  - \{\sum_{i\in S_2} a_\beta(Z_i;\hat g^{\beta, (1)},h_0) 
    - \sum_{i\in S_2} a_\beta(Z_i;g_0^\beta,h_0)
    \}\|_{op} =~& o_{p}(1),\\
    \sqrt{n}\|V(\hat g^{\beta, (1)},h_0;\beta) - V(g_0^\beta, h_0;\beta)  - \{\sum_{i\in S_2} \nu_\beta(Z_i;\hat g^{\beta, (1)},h_0) 
    - \sum_{i\in S_2} \nu_\beta(Z_i;g_0^\beta,h_0)
    \}\|_{op} =~& o_{p}(1),\\
    \sqrt{n}\|A(\hat g^{\beta, (2)},h_0;\beta) - A(g_0^\beta, h_0;\beta)  - \{\sum_{i\in S_1} a_\beta(Z_i;\hat g^{\beta, (2)},h_0) 
    - \sum_{i\in S_1} a_\beta(Z_i;g_0^\beta,h_0)
    \}\|_{op} =~& o_{p}(1),\\
    \sqrt{n}\|V(\hat g^{\beta, (2)},h_0;\beta) - V(g_0^\beta, h_0;\beta)  - \{\sum_{i\in S_1} \nu_\beta(Z_i;\hat g^{\beta, (2)},h_0) 
    - \sum_{i\in S_1} \nu_\beta(Z_i;g_0^\beta,h_0)
    \}\|_{op} =~& o_{p}(1).\\
\end{align*} 
\end{assumption}
Then we present the following theorem:
\begin{theorem}
    Under Assumptions~\ref{ass:influence}, \ref{ass:mbeta-smooth}, \ref{ass:ortho}, \ref{ass:regularity} and  Assumptions~\ref{ass:rates_new}, \ref{ass:equicont_new}, if the nuisance parameter estimates $\hat{h}^{(1)}$ and  $\hat{h}^{(2)}$ are asymptotically linear with some influence function $f_h$ such that $\E\{f_h(Z_i)\} = 0$: for any split $l\in\{1,2\}$
\begin{align*}
    \hat h^{(l)} - h_0 = 2n^{-1}\sum_{i\in S_l} f_h(Z_i) + o_p(n^{-1/2}) %\tag{Asymptotic Linearity of $\hat{h}_{(l)}$}
\end{align*}
then the parameter estimate $\hat{\theta}^\beta$ is asymptotically linear around $\theta_0^\beta$:
\begin{align*}
    \sqrt{n} \left(\hat{\theta}^\beta - \theta_0^\beta\right) = n^{-1/2} \sum_{i=1}^n \rho_{\theta}(Z_i) + o_p(1)
\end{align*}
with influence function:
\begin{align*}
    \rho_{\theta}(Z) = -  A_*(g_0,h_0)^{-1} \{m_*(Z; \theta_0, g_0,h_0) + (J^{*})^{\trans}f_h(Z)\}.
\end{align*}
\end{theorem}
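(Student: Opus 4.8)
The plan is to run the proof of Theorem~\ref{thm:general-linearity} almost verbatim, with the single empirical average $M_n$ replaced by the cross-fitted aggregate
\begin{align*}
\bar M_n(\theta;\beta) = \tfrac1n\sum_{i\in S_1} m_\beta(Z_i;\theta,\hat g^{\beta,(2)},\hat h^{(2)}) + \tfrac1n\sum_{i\in S_2} m_\beta(Z_i;\theta,\hat g^{\beta,(1)},\hat h^{(1)}),
\end{align*}
and exploiting the defining property of cross-fitting: on the fold $S_l$ the plugged-in estimates $\hat g^{\beta,(\bar l)},\hat h^{(\bar l)}$ (where $\bar l$ is the complementary fold) are built from $S_{\bar l}$ and are therefore independent of $\{Z_i\}_{i\in S_l}$. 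Using linearity of $m_\beta$ in $\theta$, $\bar M_n(\hat\theta^\beta;\beta)=o_p(n^{-1/2})$ and $M(\theta_0^\beta,g_0^\beta,h_0;\beta)=0$, I would obtain exactly as before the decomposition $\bar A_n(\hat\theta^\beta-\theta_0^\beta) = I_{1,n}+I_{2,n}+I_{3,n}+o_p(\|\hat\theta^\beta-\theta_0^\beta\|_2)$, where $\bar A_n$ is the aggregated $\theta$-coefficient, which, being a sum of two half-sums each converging to $\tfrac12 A(g_0^\beta,h_0;\beta)$, converges to $A_*(g_0,h_0)$ by Assumptions~\ref{ass:rates_new}, \ref{ass:mbeta-smooth}, \ref{ass:regularity}; and each $I_{j,n}$ splits into a sum of two fold-specific pieces, one over $S_1$ with nuisances from $S_2$ and one over $S_2$ with nuisances from $S_1$.

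The plug-in-bias term $I_{1,n}$ is handled identically to Theorem~\ref{thm:general-linearity} on each fold: Neyman orthogonality (Assumption~\ref{ass:ortho}) eliminates the first-order part and the second-order smoothness clause together with Assumption~\ref{ass:rates_new} bound the remainder by $o_{p,\rm{unif}(\beta)}(n^{-1/2})$. For the empirical-process term $I_{2,n}$ I would use the same four-way split as in the original proof: the oracle process, whose two fold halves reassemble into $G_n(\theta_0,g_0,h_0;\infty)=\E\{m_*(Z;\theta_0,g_0,h_0)\}-n^{-1}\sum_{i=1}^n m_*(Z_i;\theta_0,g_0,h_0)$; the $\beta\to\infty$ drift, $o_p(n^{-1/2})$ by Assumption~\ref{ass:mbeta-smooth} and bounded variances; the $\hat g$-to-$g_0$ fluctuation, controlled by the cross-fitted equicontinuity Assumption~\ref{ass:equicont_new}; and the $\hat\theta^\beta$-piece, which by linearity equals a difference of population and empirical $\theta$-coefficients applied to $(\hat\theta^\beta-\theta_0^\beta)$ and is therefore $o_p(\|\hat\theta^\beta-\theta_0^\beta\|_2)$. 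Here cross-fitting streamlines the $\hat g$-fluctuation: conditionally on $S_{\bar l}$ it is a centered average of i.i.d.\ variables whose conditional variance is $O(\|\hat g^{\beta,(\bar l)}-g_0^\beta\|_2^2/n)$, hence already $o_p(n^{-1/2})$ by Chebyshev and mere consistency of $\hat g^{\beta,(\bar l)}$, so Assumption~\ref{ass:equicont_new} is convenient but not truly indispensable here.

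For the term $I_{3,n}$ I would Taylor-expand each moment coordinate in $h$ about $h_0$: the uniform Hessian bound in Assumption~\ref{ass:influence} makes the quadratic remainder $o(n^{1/2})\|\hat h^{(\bar l)}-h_0\|_2^2=o_p(n^{-1/2})$, and the linear part on fold $l$ is $\{\tfrac1n\sum_{i\in S_l}\partial_h m_\beta(Z_i;\hat\theta^\beta,\hat g^{\beta,(\bar l)},h_0)\}^\trans(h_0-\hat h^{(\bar l)})$. Conditioning on $S_{\bar l}$, this average of uniformly bounded i.i.d.\ gradients concentrates, uniformly in $\beta$, to $\tfrac12\E\{\partial_h m_\beta(Z;\theta_0^\beta,g_0^\beta,h_0)\}=\tfrac12(J_*+o_\beta(1))$, after absorbing the $\|\hat\theta^\beta-\theta_0^\beta\|_2$ and $\|\hat g^{\beta,(\bar l)}-g_0^\beta\|_2$ perturbations by equicontinuity. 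Substituting the hypothesized expansion $\hat h^{(\bar l)}-h_0=\tfrac2n\sum_{i\in S_{\bar l}}f_h(Z_i)+o_p(n^{-1/2})$, the $\tfrac12$ from the gradient cancels the $2$ from the expansion, and summing the two fold pieces yields $I_{3,n}=-J_*^\trans n^{-1}\sum_{i=1}^n f_h(Z_i)+o_p(n^{-1/2}+\|\hat\theta^\beta-\theta_0^\beta\|_2)$. Collecting the three contributions and inverting $A_*(g_0,h_0)$ gives
\begin{align*}
n^{1/2}(\hat\theta^\beta-\theta_0^\beta)=-n^{-1/2}\sum_{i=1}^n A_*(g_0,h_0)^{-1}\{m_*(Z_i;\theta_0,g_0,h_0)+(J^{*})^{\trans}f_h(Z_i)\}+o_p(1),
\end{align*}
the claimed influence function.

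I do not expect any genuinely new difficulty relative to Theorem~\ref{thm:general-linearity}; the one thing needing care is the \emph{weight bookkeeping} — each fold piece is a $\tfrac1n\sum_{i\in S_l}$ over $n/2$ points (i.e.\ half an empirical average), while the $\hat h^{(l)}$-expansion carries a $\tfrac2n$, and one must check that all these factors of $\tfrac12$ and $2$ cancel so that the two fold contributions recombine \emph{exactly} into a single $n^{-1}\sum_{i=1}^n$ with the correct constant and, in particular, that $\bar A_n\to A_*$ rather than $\tfrac12 A_*$. Beyond this, as in Theorem~\ref{thm:general-linearity}, every ``concentrates to its mean'' and ``is negligible'' statement must hold uniformly in $\beta=\beta_n\to\infty$; and cross-fitting actually makes this step easier, since conditioning on the held-out fold turns the relevant objects into conditionally i.i.d.\ sums whose conditional variances can be bounded directly, bypassing the uniform-entropy / critical-radius machinery used in the non-cross-fitted argument.
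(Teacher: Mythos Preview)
Your proposal is correct and mirrors the paper's own treatment: the paper does not give a separate proof of this cross-fitted theorem but relies on rerunning the proof of Theorem~\ref{thm:general-linearity} fold-by-fold with the modified Assumptions~\ref{ass:rates_new} and~\ref{ass:equicont_new}, exactly as you outline. Your attention to the $\tfrac12$--$2$ weight bookkeeping in the $I_{3,n}$ term and your observation that cross-fitting lets one verify equicontinuity by a conditional-variance argument (which is precisely the content of the paper's Lemma~\ref{lem: split_equi}) are both on point.
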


  In Section~\ref{sec:instant}, we verified that Assumptions~\ref{ass:influence}, \ref{ass:mbeta-smooth}, \ref{ass:ortho}, and \ref{ass:regularity} were satisfied if conditions of Theorem~\ref{thm:main2} are satisfied (noting the overlap between conditions of Theorem~\ref{thm:main2} and those of Theorem~\ref{thm:main}). Moreover, the nuisance rate conditions of Theorem~\ref{thm:main2} automatically translate to Assumption~\ref{ass:rates_new} in the general setting. In Section~\ref{subsec:verifystochcross}, we also present Lemma~\ref{lem: split_equi} which also shows that Assumption~\ref{ass:equicont_new} is satisfied when we adopt the cross-fitting approach. Hence, we obtain the following corollary:

\begin{corollary}\label{thm:specific-linearity2}
Suppose there is some influence function $\rho_\psi$ such that for each split $l\in\{1,2\}$, the estimator $\hat\psi^{(l)}$ satisfies
$$\hat\psi^{(l)} - \psi_0 = 2n^{-1}\sum_{i \in S_l}\rho_\psi(Z_i) + o_p(n^{-1/2}).$$
with $\E\{\rho_\psi(Z)\}=0$. Then under the conditions of Theorem~\ref{thm:main2}, the estimate $\hat{\theta}^\beta$ is asymptotically linear around $\theta_0^\beta$:
$$
    n^{1/2} (\hat{\theta}^\beta - \theta_0^\beta) = n^{-1/2} \sum_{i=1}^n \rho_{\theta}(Z_i) + o_p(1)
$$
with influence function:
\begin{align*}
    \rho_{\theta}(Z) =   \E(\tilde{M} \tilde{M}^{\trans})^{-1} \{m_*(Z; \theta_0, \psi_0, g_0) + J_*^{\trans}\rho_{\psi}(Z)\}
\end{align*}
where $m_*$ as defined in Equation~\eqref{eqn:specific-mstar} and $J_*$ as defined in Equation~\eqref{eqn:specific-Jstar}.
\end{corollary}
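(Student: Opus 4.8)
The plan is to obtain Corollary~\ref{thm:specific-linearity2} as a direct instantiation of the cross-fitted general asymptotic linearity theorem of Section~\ref{sec: mainThmSampleSplitting}, exactly paralleling the way Corollary~\ref{thm:specific-linearity} was deduced from Theorem~\ref{thm:general-linearity} in the non-cross-fitted case. Concretely, I identify the target parameter with the first-period structural parameter, so that $\theta_0^\beta$ plays the role of the growing-parameter target, $h_0 = \psi_0$ plays the role of the asymptotically linear finite-dimensional nuisance, and $g_0^\beta = (q^*, p_1^*, p_{2,\beta}^*)$ plays the role of the infinite-dimensional nuisance; the moment is $m_\beta$ from Equation~\eqref{eqn:apx-moment-def}, with $a_\beta(Z;\psi,g)$ and $\nu_\beta(Z;\psi,g)$ as written at the start of Section~\ref{sec:instant}. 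The assumed fold-wise asymptotic linearity $\hat\psi^{(l)} - \psi_0 = 2n^{-1}\sum_{i\in S_l}\rho_\psi(Z_i) + o_p(n^{-1/2})$ supplies precisely the hypothesis $\hat h^{(l)} - h_0 = 2n^{-1}\sum_{i\in S_l} f_h(Z_i) + o_p(n^{-1/2})$ required by that theorem, with $f_h = \rho_\psi$.

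It then remains to check the theorem's hypotheses under the conditions of Theorem~\ref{thm:main2}. Assumptions~\ref{ass:influence}, \ref{ass:mbeta-smooth}, \ref{ass:ortho}, \ref{ass:regularity} do not reference any estimator and were already established, under the conditions of Theorem~\ref{thm:main}, by Lemmas~\ref{lem:variance}, \ref{lem: limitbeta}, \ref{lem: neyman}, \ref{lem:regul}; since every hypothesis invoked in those lemmas is also imposed in Theorem~\ref{thm:main2}, they carry over verbatim. In particular Lemma~\ref{lem:variance} gives the limiting Jacobian $J_* = \E[\{\phi_\infty(X) - \phi(T_2,X)\}\tilde{M}^{\trans}]$ of Equation~\eqref{eqn:specific-Jstar}, Lemma~\ref{lem: limitbeta} gives $m_*$ of Equation~\eqref{eqn:specific-mstar} together with $A_*(g_0,h_0) = -\E(\tilde{M}\tilde{M}^{\trans})$, and Lemma~\ref{lem:regul} gives invertibility of $A_*$. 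The estimator-dependent hypotheses, Assumption~\ref{ass:rates_new} and Assumption~\ref{ass:equicont_new}, are the cross-fitted analogues of Assumptions~\ref{ass:rates} and \ref{ass:equicont}: the former is immediate from the fold-wise rate conditions $\|q^* - \hat q^{(l)}\|_2,\ \|p_1^* - \hat p_1^{(l)}\|_2 = o_p(n^{-1/4})$ and $\|\hat p_{2,\beta}^{(l)} - p_{2,\beta}^*\|_2 = o_{p,\mathrm{unif}(\beta)}(n^{-1/4})$ assumed in Theorem~\ref{thm:main2}, and the latter is supplied by Lemma~\ref{lem: split_equi} in Section~\ref{subsec:verifystochcross}, which shows that the sample-splitting construction makes stochastic equicontinuity hold automatically, since on each evaluation fold the relevant nuisance estimate is independent of the data at which the moment is evaluated and no empirical-process control over a function class is needed.

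With all hypotheses in place, the cross-fitted general theorem yields $n^{1/2}(\hat\theta^\beta - \theta_0^\beta) = n^{-1/2}\sum_{i=1}^n \rho_\theta(Z_i) + o_p(1)$ with $\rho_\theta(Z) = -A_*(g_0,h_0)^{-1}\{m_*(Z;\theta_0,g_0,h_0) + (J^*)^{\trans} f_h(Z)\}$; substituting $A_*(g_0,h_0) = -\E(\tilde{M}\tilde{M}^{\trans})$, $f_h = \rho_\psi$, and the expressions for $m_*$ and $J_*$ recorded above produces exactly the claimed $\rho_\theta(Z) = \E(\tilde{M}\tilde{M}^{\trans})^{-1}\{m_*(Z;\theta_0,\psi_0,g_0) + J_*^{\trans}\rho_\psi(Z)\}$. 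Since this is in essence an indexing exercise, I expect no genuine obstacle beyond carefully matching the cross-fitted forms of the rate and equicontinuity assumptions to their counterparts in the general theorem; the one conceptual point worth emphasizing is that, in contrast to the non-cross-fitted route where Lemma~\ref{lem:stab-stoch} required a small-critical-radius hypothesis, here the split in the estimation procedure removes the need for any complexity control on the nuisance classes, which is precisely the advantage flagged immediately after Theorem~\ref{thm:main}.
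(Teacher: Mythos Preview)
Your proposal is correct and matches the paper's own argument essentially line for line: the paper derives Corollary~\ref{thm:specific-linearity2} by noting that Assumptions~\ref{ass:influence}, \ref{ass:mbeta-smooth}, \ref{ass:ortho}, \ref{ass:regularity} were already verified in Section~\ref{sec:instant} under the shared conditions of Theorems~\ref{thm:main} and~\ref{thm:main2}, that the fold-wise nuisance rates of Theorem~\ref{thm:main2} directly give Assumption~\ref{ass:rates_new}, and that Lemma~\ref{lem: split_equi} supplies Assumption~\ref{ass:equicont_new}, after which the cross-fitted general theorem is invoked. Your identification of $A_*$, $m_*$, $J_*$ and the final substitution are exactly as in the paper.
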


\subsection{Verifying Assumption~\ref{ass:equicont_new}: Cross-Fitting Implies Stochastic Equicontinuity}\label{subsec:verifystochcross}
In this section, we directly present a general lemma claiming that performing cross fitting would automatically imply a cross-fitted version of equicontinuity. Then it will naturally follow that our cross-fitting approach in Theorem~\ref{thm:main2} guarantees the cross-fitted version Assumption~\ref{ass:equicont_new}.

 \begin{lemma}\label{lem: split_equi}
Assume that the estimators are consistent for the sequence of $\beta=\beta_n$ that we choose:
\begin{align*}
\|\hat g^{\beta, (1)} - g_0^\beta\|_2 = o_p(1)\\
\|\hat g^{\beta, (2)} - g_0^\beta\|_2 = o_p(1).
\end{align*}
and that the nuisance estimates $\|\hat g^{\beta, (1)}(X)\|_2, \|\hat g^{\beta, (2)}(X)\|_2$ are almost surely bounded.
Assume the moment function satisfies mean-squared continuity condition: for all $j,k\le p$, for all $g,g'\in\mathcal{G}$
\begin{align*}
    \E\left[\left\{a_{\beta,j,k}(Z_i;g,h_0) - a_{\beta,j,k}(Z_i;g',h_0)\right\}^2\right]&\le L\cdot\|g-g'\|_2^q\\
    \E\left[\left\{\nu_{\beta,j}(Z_i;g,h_0) - \nu_{\beta,j}(Z_i;g',h_0)\right\}^2\right]&\le L\cdot\|g-g'\|_2^q
\end{align*}
 for some $q<\infty$ and $L>0.$
 Then Assumption~\ref{ass:equicont_new} holds. That is, the following stochastic equicontinuity statements hold:
\begin{align*}
\begin{aligned}
    n^{-1/2}\left\|\sum_{i\in S_2}\left[A(\hat g^{\beta, (1)},h_0;\beta)- A(g_0^\beta,h_0;\beta)-\{a_\beta(Z_i;\hat g^{\beta, (1)},h_0) - a_\beta(Z_i;g_0^\beta,h_0)\}\right]\right\|_{op} =~& o_{p}(1)\\
    n^{-1/2}\left\|\sum_{i\in S_1}\left[A(\hat g^{\beta, (2)},h_0;\beta)- A(g_0^\beta,h_0;\beta)-\{a_\beta(Z_i;\hat g^{\beta, (2)},h_0) - a_\beta(Z_i;g_0^\beta,h_0)\}\right]\right\|_{op} =~& o_{p}(1)\\
    n^{-1/2}\left\|\sum_{i\in S_2}\left[V(\hat g^{\beta, (1)},h_0;\beta)- V(g_0^\beta,h_0;\beta)-\{\nu_\beta(Z_i;\hat g^{\beta, (1)},h_0) - \nu_\beta(Z_i;g_0^\beta,h_0)\}\right]\right\|_{2} =~& o_{p}(1)\\
    n^{-1/2}\left\|\sum_{i\in S_1}\left[V(\hat g^{\beta, (2)},h_0;\beta)- V(g_0^\beta,h_0;\beta)-\{\nu_\beta(Z_i;\hat g^{\beta, (2)},h_0) - \nu_\beta(Z_i;g_0^\beta,h_0)\}\right]\right\|_{2} =~& o_{p}(1).
\end{aligned} 
\end{align*}
\end{lemma}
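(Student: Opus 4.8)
The plan is to use the standard cross-fitting decoupling. Since $\hat g^{\beta,(1)}$ is a function of the fold $S_1$ only, and $S_1$ is disjoint from $S_2$, conditioning on $S_1$ turns each of the centered sums over $i\in S_2$ into a sum of conditionally i.i.d., conditionally mean-zero terms whose conditional variance is directly controlled by the assumed mean-squared continuity modulus. The same argument, with the roles of $S_1$ and $S_2$ exchanged and with $(a,A)$ replaced by $(\nu,V)$, handles all four displays, so I would only carry it out once.

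First I would reduce the four matrix- and vector-valued statements to scalar ones: because $p$ is a fixed constant, the operator norm of a $p\times p$ matrix and the Euclidean norm of a $p$-vector are each at most a constant multiple of the largest absolute entry, so it suffices to prove, for each fixed coordinate pair $(j,k)$ and each fixed coordinate $j$,
\[
n^{-1/2}\Big|\sum_{i\in S_2}\big[A_{j,k}(\hat g^{\beta,(1)},h_0;\beta) - A_{j,k}(g_0^\beta,h_0;\beta) - \{a_{\beta,j,k}(Z_i;\hat g^{\beta,(1)},h_0) - a_{\beta,j,k}(Z_i;g_0^\beta,h_0)\}\big]\Big| = o_p(1),
\]
together with its three analogues. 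Fixing such a coordinate and writing $\Delta_i \defeq a_{\beta,j,k}(Z_i;\hat g^{\beta,(1)},h_0) - a_{\beta,j,k}(Z_i;g_0^\beta,h_0)$, I would note that for $i\in S_2$ the sample $Z_i$ is independent of $S_1$ while $\hat g^{\beta,(1)}$ is $S_1$-measurable, so $\E[\Delta_i\mid S_1] = A_{j,k}(\hat g^{\beta,(1)},h_0;\beta) - A_{j,k}(g_0^\beta,h_0;\beta)$; hence the bracketed quantity equals $\sum_{i\in S_2}(\E[\Delta_i\mid S_1]-\Delta_i)$, a sum of conditionally i.i.d. mean-zero terms given $S_1$.

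Next I would bound its conditional variance. Using $|S_2|=n/2$, the conditional variance of the displayed scalar is $\tfrac12\,\mathrm{var}(\Delta_i\mid S_1)\le\tfrac12\,\E[\Delta_i^2\mid S_1]$, and since $Z_i\perp S_1$ this last term is exactly the mean-squared continuity function $g\mapsto \E[\{a_{\beta,j,k}(Z;g,h_0)-a_{\beta,j,k}(Z;g_0^\beta,h_0)\}^2]$ evaluated at $g=\hat g^{\beta,(1)}$, hence at most $L\,\|\hat g^{\beta,(1)}-g_0^\beta\|_2^q$. By the consistency hypothesis $\|\hat g^{\beta,(1)}-g_0^\beta\|_2 = o_p(1)$, so this conditional variance is $o_p(1)$; it is moreover deterministically bounded because the almost sure boundedness of $\hat g^{\beta,(1)}(X)$, $g_0^\beta(X)$, $h_0$ and the remaining observables entering $a_\beta$ makes $|\Delta_i|$ uniformly bounded. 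A conditional Chebyshev inequality then bounds the conditional probability that the displayed scalar exceeds $\epsilon$ in absolute value by $\epsilon^{-2}$ times this conditional variance; taking expectations and invoking a bounded-convergence argument (the bound being deterministically bounded and $o_p(1)$, hence convergent to $0$ in $L_1$) yields the unconditional $o_p(1)$. Replaying the argument with $S_1\leftrightarrow S_2$ and with the second mean-squared continuity bound applied to $\nu_\beta$ disposes of the remaining three statements, and multiplying by the constant dimension factors gives Assumption~\ref{ass:equicont_new}.

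The main obstacle is minor: this is the textbook cross-fitting decoupling argument (as in \cite{chernozhukov2018double}), and the only points requiring a little care are that a fixed power $q$ of an $o_p(1)$ quantity is still $o_p(1)$ (so the exponent in the mean-squared continuity condition is harmless), and that the step from the conditional Chebyshev bound to unconditional convergence in probability needs the conditional variance to be not merely $o_p(1)$ but also uniformly bounded, which is precisely what the almost sure boundedness hypotheses provide.
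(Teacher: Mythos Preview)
Your proposal is correct and follows essentially the same cross-fitting decoupling argument as the paper: reduce to coordinates, condition on the training fold so that the centered sum over the evaluation fold has conditionally i.i.d.\ mean-zero summands, bound the (conditional) second moment via the mean-squared continuity condition, and pass from $o_p(1)$ of $\|\hat g^{\beta,(\ell)}-g_0^\beta\|_2^q$ to convergence of its expectation via bounded/dominated convergence using the a.s.\ boundedness assumption. The only cosmetic difference is that the paper expands the unconditional second moment directly (showing the off-diagonal terms vanish via the tower law) rather than phrasing it through a conditional Chebyshev inequality, but the two arguments are equivalent.
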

\begin{proof}
Without loss of generality, we only prove the first statement here, and similar proofs should apply to when we switch data splits and apply to function $\nu$. For the first statement, it suffices to show that for any $j,k\le p$
    \begin{align*}
        n^{-1/2}\left\|\sum_{i\in S_2}\left[A_{j,k}(\hat g^{\beta, (1)},h_0;\beta)- A_{j,k}(g_0^\beta,h_0;\beta)-\{a_{\beta,j,k}(Z_i;\hat g^{\beta, (1)},h_0) - a_{\beta,j,k}(Z_i;g_0^\beta,h_0)\}\right]\right\|_{L_2} =~ o(1)
    \end{align*}
In the remainder of the proof we look at a particular $(j,k)$ and hence for simplicity we
overload notation and we let $a_\beta = a_{\beta,j,k}$ and $A = A_{j,k}.$
Moreover, for simplicity we denote for each $i\in S_2:$
\begin{align*}
K_i= A(\hat g^{\beta, (1)},h_0;\beta)- A(g_0^\beta,h_0;\beta)-\{a_\beta(Z_i;\hat g^{\beta, (1)},h_0) - a_\beta(Z_i;g_0^\beta,h_0)\}
\end{align*}
Then if we take the square, for any $i,j\in S_2,$
\begin{align*}
    \E\left\{\left(n^{-1/2}\sum_{i\in S_2}K_i\right)^2\right\} = \frac{1}{2}\E(K_i^2) + \frac{n-2}{4}\E(K_iK_j)
\end{align*}
the first term can be bounded using mean-squared-continuity:
\begin{align*}
    \frac{1}{2}\E(K_i^2)&=\frac{1}{2}\E\left(\left[A(\hat g^{\beta, (1)},h_0;\beta)- A(g_0^\beta,h_0;\beta)-\{a_\beta(Z_i;\hat g^{\beta, (1)},h_0) - a_\beta(Z_i;g_0^\beta,h_0)\}\right]^2\right)\\
    &\le \E\left[\left\{A(\hat g^{\beta, (1)},h_0;\beta)- A(g_0^\beta,h_0;\beta)\right\}^2\right] + \E\left[\left\{a_\beta(Z_i;\hat g^{\beta, (1)},h_0) - a_\beta(Z_i;g_0^\beta,h_0)\right\}^2\right]\\
    &\le2\E\left[\left\{a_\beta(Z_i;\hat g^{\beta, (1)},h_0) - a_\beta(Z_i;g_0^\beta,h_0)\right\}^2\right]\le 2L\cdot\E[\|\hat g^{\beta, (1)} - g_0^\beta\|_2^q] = o(1),
\end{align*}
where the second to last inequality exploits Jensen's inequality, and the last equality exploits dominated convergence theorem.
Moreover, the second term can also be simplified using tower law:
\begin{align*}
    \E(K_iK_j)&= \E\{\E(K_iK_j \mid \{Z_l\}_{l\in S_1})\} = \E\big\{\E(K_i \mid \{Z_l\}_{l\in S_1})\E(K_j \mid \{Z_l\}_{l\in S_1})\big\} = 0,
\end{align*}
since conditional on the first data split, $K_i$ and $K_j$ are independent with zero means. Hence, we have indeed proved that
\begin{align*}
        n^{-1/2}\left\|\sum_{i\in S_2}\left[A(\hat g^{\beta, (1)},h_0;\beta)- A_(g_0^\beta,h_0;\beta)-\{a_{\beta}(Z_i;\hat g^{\beta, (1)},h_0) - a_{\beta}(Z_i;g_0^\beta,h_0)\}\right]\right\|_{L_2} =~ o(1).
    \end{align*}
\end{proof}

\subsection{Proof of Theorem~\ref{thm:main2}}\label{sec: pf_main2}
\begin{proof}
    The proof of Lemma 16 of \cite{chernozhukov2020adversarial} establishes both asymptotic linearity of $\hat\psi$ and that of $\hat\psi^{(l)}$ for each $l\in\{1,2\}$:
    \begin{equation}\label{eq: an3}
    \sqrt{n}(\hat\psi-\psi_0) = n^{-1/2}\sum_{i=1}^n\rho_\psi(Z_i) + o_p(1)
    \end{equation}
    \begin{equation}\label{eq: an4}
    \hat\psi^{(l)}-\psi_0 = 2n^{-1}\sum_{i\in S_l}\rho_\psi(Z_i) + o_p(n^{-1/2})
    \end{equation}
    where if we define $\tilde{P}=\phi(T_2, X) - \E\{\phi(T_2, X)\mid X\}$:
    $$\rho_\psi(Z)=\E(\tilde{P}\tilde{P}^{\trans})^{-1}\{Y-\E(Y\mid X) - \psi_0^{\trans}\tilde{P}\}\tilde{P}.$$
    Moreover, Lemma~\ref{lem:bias} gives us the softmax bias control that
    $$\sqrt{n}(\theta_0^\beta-\theta_0) = o(1),$$
    and Corollary~\ref{thm:specific-linearity2} gives us asymptotic linearity of $\hat{\theta}$ around $\theta_0^\beta$:
    \begin{align*}
    \sqrt{n} \left(\hat{\theta}^\beta - \theta_0^\beta\right) = n^{-1/2} \sum_{i=1}^n \rho_{\theta}(Z_i) + o_p(1)
\end{align*}
where
\begin{align*}
    \rho_{\theta}(Z) =   \E(\tilde{M} \tilde{M}^{\trans})^{-1} \{m_*(Z; \theta_0, \psi_0, g_0) + J_*^{\trans}\rho_{\psi}(Z)\}.
\end{align*}
Combining these two results, we overall can conclude that
\begin{equation}\label{eq: an5}
    \sqrt{n} \left(\hat{\theta}^\beta - \theta_0\right) = n^{-1/2} \sum_{i=1}^n \rho_{\theta}(Z_i) + o_p(1).
\end{equation}
We can hence use the two asymptotic linearity statements \eqref{eq: an3} and \eqref{eq: an5} to construct confidence intervals for $\psi_0$ and $\theta_0.$
\end{proof}

\section{Supplementary Material for Experimental Section}
\label{app:experiments}

We justify here why the linear blip function specifications used in the experimental section of the main paper are equal or good approximations of the true blip functions. 
The final period blip effect $\gamma_2$ is exactly of the form:
\begin{align*}
    \gamma_2(T_2, X) =~& \alpha_2 (X_1 + 1) T_2,  &
    \phi(T_2, X) =~& (T_2, X_1 T_2), &
    \psi_0 = (\alpha_2, \alpha_2)
\end{align*}
and the optimal policy in the final period when $\alpha_2 \geq 0$ is $\pi_2^*(X) = \mathbbm{1}_{\left\{X_1 + 1 \geq 0\right\}}$. 
The blip effect in the first period, under the optimal continuation policy is of the form:
\begin{align*}
    \gamma_1(T_1, S) =~& \alpha_2 \left(\E_{Z\sim N(\alpha_1 T_1 + S_1 + 1, 1)}\left[Z\, \mathbbm{1}_{\left\{Z\geq 0\right\}}
    \right] - \E_{Z\sim N(S_1 + 1, 1)}\left[Z\, \mathbbm{1}_{\left\{Z\geq 0\right\}}\right]\right) + \alpha_1 T_1.
\end{align*}
Define the function $g(x) = \E_{Z\sim N(x, 1)}[Z \mathbbm{1}_{\left\{Z\geq 0\right\}}]$. Since $T_1$ is binary, we can write:
\begin{align*}
    \gamma_1(T_1, S) =~& \alpha_2 \left(g(\alpha_1 T_1 + S_1 + 1) - g(S_1 + 1)\right) + \alpha_1  T_1\\
    =~& \alpha_2 T_1 \left(g(\alpha_1 + S_1 + 1) - g(S_1 + 1)\right) + \alpha_1  T_1.
\end{align*}
For any $\alpha_1\in [0, 1]$, since $S_1\sim N(0, 1)$, we have that with high probability $\alpha_1 + S_1 + 1\in [-1, 4]$ and $S_1 + 1\in [-1, 3]$. In the regime $[-1, 4]$ the function $g(x)$ can be well approximated by a quadratic function (see Figure~\ref{fig:gx_approx}), i.e. 
\begin{align*}
    g(x) \approx \tilde{g}(x) := \kappa_0 + \kappa_1 x + \kappa_2 x^2.
\end{align*}
\begin{figure}[H]
\centering
    \includegraphics[scale=.4]{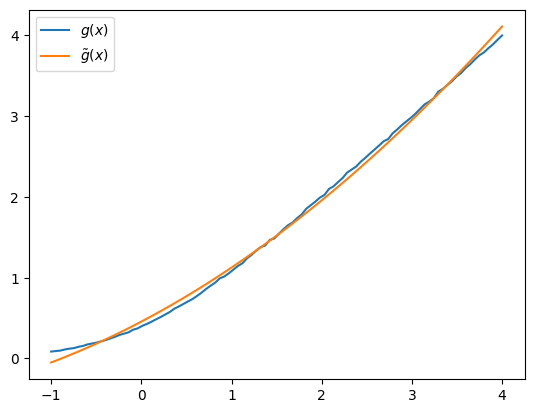}
    \caption{Approximation of the function $g(x)$ by $\tilde{g}(x)$.}\label{fig:gx_approx}
\end{figure}
Thus we then have:
\begin{align*}
\gamma_1(T_1, S) \approx~& \alpha_2 T_1 \left(\tilde{g}(\alpha_1 + S_1 + 1) - \tilde{g}(S_1 + 1)\right) + \alpha_1  T_1.
\end{align*}
Moreover:
\begin{align*}
    \tilde{g}(\alpha_1 + S_1 + 1) - \tilde{g}(S_1 + 1) =~& \kappa_1 \alpha_1 + \kappa_2 ((\alpha_1 + S_1 + 1)^2 - (S_1 + 1)^2)\\
    =~& \kappa_1 \alpha_1 + \kappa_2 \alpha_1 (\alpha_1 + 2 S_1 + 2).
\end{align*}
Overall, we can write:
\begin{align*}
    \gamma_1(T_1, S) \approx \alpha_2 T_1 (\kappa_1 \alpha_1 + \kappa_2 \alpha_1 (\alpha_1 + 2 S_1 + 2)) + \alpha_1 T_1.
\end{align*}
The expression on the right hand side is of the form $\theta_0 T_1 + \theta_1 T_1\, S_1$, for appropriately defined parameters $\theta_0, \theta_1$.
\end{document}